\newcommand{\cc}{\mathbb{C}}
\newcommand{\rr}{\mathbb{R}}
\newcommand{\qq}{\mathbb{Q}}
\newcommand{\nn}{\mathbb{N}}
\newcommand{\kk}{\mathbb{K}}
\newcommand{\diag}{\operatorname{diag}}
\newtheorem{question}{Question}
\newtheorem{theorem}{Theorem}
\newtheorem{example}[theorem]{Example}
\newtheorem{lemma}[theorem]{Lemma}
\newtheorem*{lemma*}{Lemma}
\newtheorem{proposition}[theorem]{Proposition}
\newtheorem{corollary}[theorem]{Corollary}
\newtheorem{remark}[theorem]{Remark}
\newtheorem*{open*}{Open~question}
\newtheorem{definition}[theorem]{Definition}
\newcommand{\pnum}{p}
\newcommand{\pdenom}{q}
\newcommand{\hypar}{\mathcal{H}}
\newcommand{\NaN}{\mathrm{NaN}}
\newcommand{\pconf}{\mathcal{P}}
\newcommand{\hset}{\Lambda}
\begin{document}

\title{Derandomization and Absolute Reconstruction\\
  for Sums of Powers of Linear Forms}

  \author{Pascal Koiran\thanks{Univ Lyon, EnsL, UCBL, CNRS,  LIP, F-69342, LYON Cedex 07, France. { Email: pascal.koiran@ens-lyon.fr.}} \and Mateusz Skomra\thanks{{ LAAS-CNRS, Universit{\'e} de Toulouse, CNRS, Toulouse, France. Email: mateusz.skomra@laas.fr.}}}

  \maketitle

\begin{abstract}
  { We study the decomposition of multivariate polynomials as sums of powers
  of linear forms.}
   {  As one of our main results,  we give} a randomized algorithm for the following problem: given a homogeneous polynomial $f(x_1,\ldots,x_n)$ of degree 3, decide whether it can be written as a sum of cubes of linearly independent linear forms with complex coefficients.
  Compared to previous algorithms for the same problem,
  the two main novel features of this algorithm are:
  \begin{itemize}
\item[(i)] It is an algebraic algorithm, i.e., it performs only arithmetic operations and equality tests on the coefficients of the input polynomial $f$. In particular, it does not make any appeal to polynomial factorization.

\item[(ii)] For $f \in \qq[x_1,\ldots,x_n]$, the algorithm runs in polynomial time when implemented in the bit model of computation.
  \end{itemize}
  The algorithm relies on methods from linear and multilinear algebra (symmetric
  tensor decomposition by simultaneous diagonalization).
  We also give a version of our  algorithm for decomposition over the field of real numbers. In this case, the  algorithm performs arithmetic operations and
  comparisons on the input coefficients.
  
 Finally we give several related derandomization results on black box polynomial identity testing,
 the minimization of the number of variables in a polynomial,
 the computation of Lie algebras
  and factorization into products of linear forms.
  \end{abstract}

\newpage

\section{Introduction}

Let $f(x_1,\ldots,x_n)$ be a homogeneous polynomial of degree $d$, also called a {\em degree $d$ form.}  In this paper we study decompositions of the type:
\begin{equation} \label{eq:linind}
  f(x_1,\ldots,x_n)=\sum_{i=1}^r l_i(x_1,\ldots,x_n)^d
\end{equation}
where the $l_i$ are linear forms. Such a decomposition is sometimes called
a Waring decomposition, or a symmetric tensor decomposition.
We focus on the case where the linear forms $l_i$ are linearly independent.
This implies that the number $r$ of terms in the decomposition
is at most $n$. When $r=n$ we have
$f(x)=P_d(Ax)$ where $A$ is an invertible matrix of size $n$ and
\begin{equation} \label{eq:sumofpowers}
  P_d(x_1,\ldots,x_n) = x_1^d + x_2^d + \cdots +x_n^d
\end{equation}
is the ``sum of $d$-th powers'' polynomial.
If $f$ can be written in this way, we say that $f$ is equivalent to a
sum of $n$ $d$-th powers. More generally, two polynomials $f,g$ in $n$ variables
are said to be equivalent if they can be obtained from each other by
an invertible change of variables, i.e., if $f(x)=g(Ax)$ where $A$ is an
invertible matrix of size $n$. 
As pointed out in~\cite{kayal11}, the case $d=3$ (equivalence to a sum of $n$ cubes)
can be tackled with the decomposition algorithm for cubic forms in Saxena's thesis~\cite{Saxenathesis}.
Equivalence to $P_d$ for arbitrary $d$ was studied by
Kayal~\cite{kayal11}. This paper also begins a study of
equivalence to other specific polynomials such as the elementary symmetric polynomials;
this study is continued in~\cite{garg19,grochow12,grochow19,kayal2012affine,kayal18},
in particular for the permanent and determinant polynomials.
The contributions of the present paper are twofold:
\begin{itemize}
\item[(i)] We give efficient tests for equivalence to a sum of $n$ cubes over the fields of real and complex numbers. In particular, for an input polynomial
  with rational coefficients we give the first polynomial time algorithms
  in the standard Turing machine model of computation.
  As explained below in Section~\ref{subsec:equiv}, this is not in contradiction
  with the polynomial time bounds from~\cite{kayal11,Saxenathesis}
  because we do not address exactly the same problem or work in
  the same computation model as these two papers.

  More generally, we can test efficiently
  whether the input~$f$ can be written
  as in~(\ref{eq:linind}) as a sum of cubes of linearly independent linear forms.
  This follows easily from our equivalence tests
  and the algorithms from~\cite{kayal11,Saxenathesis} for the minimization
  of the number of variables in a polynomial.

\item[(ii)] Our first equivalence algorithm for the fields of real and complex
  numbers is randomized.
  We derandomize this algorithm in Section~\ref{sec:deter}, and we continue
  with 
  {several} related derandomization results on black box polynomial identity
  testing, 
  the minimization of the number of variables in a polynomial, the computation of Lie algebras
  and factorization into products of linear forms.
\end{itemize}

\subsection{Equivalence to a sum of cubes} \label{subsec:equiv}

Our algorithm for equivalence to a sum of $n$ cubes over $\cc$ is
{\em algebraic} in the sense that the input polynomial may have arbitrary
complex coefficients and we manipulate them using only arithmetic operations
and equality tests.
Over the field of real numbers we also allow inequality
tests. We therefore work in the ``real RAM'' model; an appropriate formalization
is provided by the { Blum-Shub-Smale} model of computation~\cite{BCSS,BSS89}.
We can provide algebraic algorithms only because we are considering a
{\em decision problem}: it is easy to see that if the input $f(x_1,\ldots,x_n)$
is equivalent to a sum of $n$ cubes, the coefficients of
the linear forms $l_i$ in the corresponding decomposition need
not be computable from those of $f$ by arithmetic operations
(see Examples~\ref{ex:noreal} and~\ref{ex:norat} at the beginning of Section~\ref{sec:equiv}).

Polynomial factorization is an important subroutine
in many if not most reconstruction algorithms for arithmetic circuits,
see e.g.~\cite{GKP17ISSAC,GKP18,karnin09,kayal11,kayal18,kayal19,shpilka09}.
It may even seem unavoidable for some problems: reconstruction of $\Pi \Sigma$
circuits is nothing but the problem of factorization into products of linear
forms, and reconstruction of $\Pi \Sigma \Pi$ circuits is factorization
into products of sparse polynomials.
Useful as it is, polynomial factorization is clearly not feasible
with arithmetic operations only, even for polynomials of degree 2.
We therefore depart from the aforementioned algorithms by avoiding
all use of such a subroutine.

For an input polynomial $f$ with rational coefficients, our  algebraic algorithms run
in polynomial time in the standard bit model of computation, i.e.,
they are ``strongly polynomial'' algorithms
(this is not automatic due to the issue
of coefficient growth during the computation).
We emphasize that even for an input $f \in \qq[x_1,\ldots,x_n]$ we are still considering
the problem of equivalence to a sum of $n$ cubes over the real or complex
numbers. Consider by contrast Kayal's equivalence algorithm~\cite{kayal11},
which appeals to a polynomial factorization subroutine. We can
choose to factor polynomials over, say, the field of rational numbers.
We can then run Kayal's algorithm without any difficulty
on a probabilistic polynomial time Turing machine, but the algorithm
will then reject the polynomial of Example~\ref{ex:norat} whereas our
algorithm will accept it.\footnote{Alternatively, one can run Kayal's algorithm
  in a computation model over $\cc$ where, in addition to arithmetic operations over complex numbers, root finding of univariate polynomials is considered
  an atomic operation (as suggested in a footnote of~\cite{garg19}).
  The algorithm would then give the same answer as our algorithm, but
  it would 
  not operate anymore  within the Turing machine model
  (or  within the BSS model).}
At first sight this difficulty seems to have a relatively simple solution: for
an input with rational coefficients, instead of factoring polynomials
in $\qq[X]$ we will factor in a field extension of $\qq$ containing the coefficients of the linear forms $l_i$ (for instance
in $\qq[\sqrt{2}]$ for Example~\ref{ex:norat}).
It is unfortunately not clear that this approach yields a polynomial time
algorithm because
it might lead to computations in a field extension of exponential degree.
We explain this point in more detail in Section~\ref{subsec:kayal}.
For the same reason (reliance on a polynomial factorization subroutine)
similar issues arise in the analysis of Saxena's 
decomposition  algorithm. A complete analysis of these two algorithms  for equivalence to a sum of powers over $\cc$ in the Turing machine model would entail good control of coefficient growth
and good bounds on  the degrees of the field extensions involved. This has not been done yet to the best of our knowledge.

\subsection{Derandomization}

We give a deterministic black box identity testing algorithm for polynomials
which can be represented as in~(\ref{eq:linind})
as a sum of powers  of linearly independent linear forms.
As we will see in Section \ref{sec:fewer},
the problem is really to decide whether the (unknown) number  of terms $r$
in the decomposition is equal to 0. 
Indeed, for $r \geq 1$ such a polynomial can never be identically zero (and the PIT problem for this family of polynomials 
can therefore be solved by a trivial algorithm in the white box model).
In contrast to our equivalence algorithms, this black box PIT
applies to homogeneous polynomials of arbitrary degree.

There is already a significant amount of work on identity testing for sums of powers of linear forms. In particular, Saxena \cite{Saxena08} gave a polynomial time algorithm in the white box model (where we have access to an arithmetic circuit computing the input polynomial). Subsequently, several algorithms were given for the black box model 
{  \cite{agrawal13,forbes13a,forbes13b,forbes14} but they do not run in polynomial time.
The current state of the art is in~\cite{forbes14}, with a black box algorithm running in time $s^{O(\log \log s)}$.
}
We obtain here a polynomial running time under the assumption that the $l_i$ are linearly independent.
Without this assumption,  designing a black box PIT algorithm running in polynomial time remains to the best of our knowledge an open problem.

In Section~\ref{sec:essential} we build on our black box PIT to derandomize
Kayal's algorithm for the minimization of the number of variables
in a polynomial~\cite{kayal11}. Like our black box PIT, this result applies
to polynomials that can be written as sums of powers of linearly independent
linear forms.
For such a polynomial, the minimal (or "essential") number of variables is just the number $r$ of terms in the corresponding decomposition~(\ref{eq:linind}).
{ We continue with the computation of Lie algebras of products of linear forms. Finally, our deterministic algorithm for this task is applied to
  the derandomization of a factorization algorithm from~\cite{koiran2018orbits} and of Kayal's algorithm for equivalence to~$P_d$~\cite{kayal11}.}

\subsection{Our approach}

We obtain our equivalence algorithms by viewing the coefficients of the input polynomial $f(x_1,\ldots,x_n)$  as the coefficients of a symmetric tensor 
$T$ of size~$n$ and order 3 (since $f$ is of degree 3).  
 Equivalence to a sum of $n$ cubes then amounts to a kind of diagonalizability property of $T$.
This approach is explained in detail in Section \ref{sec:equiv}. It can be viewed as a continuation of previous work on orthogonal tensor decomposition~\cite{koiran2019ortho}
(the present paper is more algorithmic, is not limited to orthogonal decompositions and can be read independently from~\cite{koiran2019ortho}).

We work on a tensor of size $n$  by cutting it into $n$ "slices"; each slice is a symmetric matrix of size $n$. We therefore rely on methods from linear algebra.
This explains the presence of a section of preliminaries on simultaneous reduction by congruence (which is then 
applied to the slices of $T$).
Despite these rather long preliminaries, the resulting randomized algorithm is remarkably simple: it is described in just 3 lines at the beginning of Section~\ref{sec:random}.

Our deterministic algorithms also rely on important insights from Kayal's paper~\cite{kayal11}. In particular we rely on the factorization
properties of the Hessian determinant of the input $f$, which we manage to use without appealing explicitly to a factorization subroutine
(as explained in Section \ref{subsec:equiv}, this is ruled out in our approach). Our deterministic algorithm for the minimization of the number 
of variables is directly inspired by the randomized algorithm for this problem in the same paper.

{   \subsection{Comparison with previous tensor decomposition algorithms}

There is a vast literature on tensor decomposition algorithms, most of them numerical (see \cite{beltran19} for a recent paper showing that many of these algorithms are numerically unstable).
From this literature,  two papers by De Lathauwer et al. \cite{lathauwer06,lathauwer04} are closely related to the present work since they already recognized the importance of simultaneous diagonalization by congruence for tensor decomposition. 
{ Jennrich's algorithm is also closely related for the same reason. One can find a presentation of this algorithm in the recent book by Moitra~\cite{moitra18}  but it goes back much further to Harschman~\cite{harshman70}, where it is presented as a uniqueness result}.
There are nevertheless important differences between the settings of  \cite{lathauwer06,lathauwer04,moitra18} and of the present paper. In particular, these three works 
do not phrase tensor decomposition 
as a decision problem but as an optimization problem which is solved by numerical means (one suggestion from \cite{lathauwer06,lathauwer04}  is to perform the simultaneous diagonalization with the extended QZ iteration from \cite{vanderveen96}; { Jennrich's algorithm as presented in~\cite{moitra18} relies on pseudoinverse computations and eigendecompositions). 
All these numerical algorithms attempt to produce a decomposition of a tensor $\tilde{T}$ which is close to the input tensor $T$.
If one tries to adapt them to the setting of the present paper there is a fundamental difficulty: given $\tilde{T}$ and its
decomposition, it is not clear how we can decide whether or not $T$ admits an exact decomposition.
This is the main reason why we need to design a new algorithm. As an alternative, one could attempt to run Jennrich's algorithm in symbolic mode. In particular, eigenvalues and the components of eigenvectors would be represented symbolically
as elements
of a field extension. This leads exactly to the same difficulty as with Kayal's algorithm: as explained in Section~\ref{subsec:equiv} and in more detail in Section~\ref{sec:equiv}, the resulting algorithm might not run in polynomial
time because it might lead to computations in field extensions of exponential degree.}}
Note finally that  \cite{lathauwer06,lathauwer04,moitra18} deal with decompositions of ordinary rather than symmetric tensors. 
Algorithms for symmetric tensor decomposition can be found in the algebraic literature, see e.g. \cite{brachat10,bernardi11}.
Like \cite{lathauwer06,lathauwer04}, these two papers do not provide any complexity analysis for their algorithms.

\subsection{Future work}

In the current literature there is apparently no polynomial time algorithm (deterministic or randomized)  in the Turing machine model for the following problem: given a homogeneous polynomial $f(x_1,\ldots,x_n)$ of degree 
$d \geq 4$ with rational coefficients, decide whether it is equivalent to $x_1^d+\cdots+x_n^d$ 
over $\cc$. 
{  It will be shown in a forthcoming paper that this can be done by extending the tensor-based approach of
the present paper to higher degree. This results in a black box algorithm with running time polynomial in $n$ and $d$~\cite{KS21}}.
Alternatively, one could try to modify Kayal's equivalence algorithm~\cite{kayal11} or provide a better analysis of the existing algorithm.
As we have argued in Section~\ref{subsec:equiv} this has not been done at present even for degree~3.
One could also try an approach based on Harrison's work~\cite{harrison75} like in Saxena's thesis~\cite{Saxenathesis}.

More generally, we suggest to pay more attention 
to {\em absolute circuit reconstruction}, i.e., arithmetic circuit reconstruction over $\cc$.\footnote{The name is borrowed from absolute factorization, a well studied problem in computer algebra (see e.g. \cite{cheze05,ChezeLecerf07,gao03,shaker09}).}
Circuit reconstruction over $\qq$ or over finite fields has a number-theoretic flavour, whereas circuit reconstruction over $\rr$ or $\cc$ is of a more geometric nature.

One goal could be to obtain algebraic decision algorithms; as we have explained, this requires the removal of all polynomial factorization subroutines.
In principle, this is always possible since the set of polynomials computable by arithmetic circuits of a given shape and and size is definable by polynomial (in)equalities, i.e., it is a constructible set (over $\cc$) or a semi-algebraic set (over $\rr$).\footnote{This argument does not provide 
by itself an efficient decision algorithm.}
Another goal would be to obtain good complexity bounds for the Turing machine model when they are not available in the existing literature.

{  \subsection{Organization of the paper}

Section~\ref{sec:prelim} is devoted to  preliminaries on simultaneous diagonalization, by similarity transformations and especially by congruence.
In Section~\ref{sec:equiv} we  review Kayal's equivalence algorithm and explain why it does not yield a polynomial time bound in the bit model of computation. We also present our tensor-based approach to the equivalence problem.
In Section~\ref{sec:random} we give a polynomial time randomized algorithm for equivalence to a sum of cubes based on 
this approach. We derandomize this algorithm in Section~\ref{sec:deter}. 
In Section~\ref{sec:PIT} we give a PIT algorithm for polynomials that can be written as sums of 
$d$-th powers of linearly independent linear forms. As mentioned before, our algorithm runs in 
polynomial time in the black box model.
We give a randomized algorithm to decide whether a polynomial can be expressed under that form in Section~\ref{sec:essential} (Proposition~\ref{prop:fewdecomp}).
The remainder of Section~\ref{sec:essential} is devoted to the derandomization of  several algorithms from~\cite{kayal11,kayal2012affine} related to sums of powers of linear forms.
 We begin in Section~\ref{sec:depend}
  with the computation of linear dependencies between polynomials.
  Then we give applications to the minimization of the number of variables in sums of powers of linear forms
  (in Section~\ref{sec:min}), and to the computation of Lie algebras of products of linear forms (in Section~\ref{sec:Lie}).
  This leads to the derandomization of a factorization algorithm from~\cite{koiran2018orbits} and of the equivalence
  algorithm by Kayal~\cite{kayal11} described in Section~\ref{subsec:kayal}.
  The resulting equivalence algorithm runs in polynomial time for every fixed value of the  degree $d$.
  For the computation of the Lie algebras of products of linear forms, we give as an intermediate result (see Proposition~\ref{hset} and Remark~\ref{rem:hset}) an identity testing algorithm for a certain class of rational functions (rather than polynomials as is done usually). In the commutative setting, this is to the best of our knowledge the first result of this type.
  }

\section{Preliminaries} \label{sec:prelim}

This section is devoted to preliminaries from linear algebra, and more specifically to simultaneous diagonalization by congruence. We begin with complex symmetric 
matrices in Section \ref{sec:cmat} and consider real symmetric matrices in Section \ref{sec:realmat}. Section \ref{sec:refine} is devoted to some refinements that are not strictly necessary for our main algorithms (they lead to an interesting connection with semidefinite programming, though; see Theorem \ref{th:pd} and the remarks following it).
Upon first reading, if one wishes to understand only our results for the field of complex numbers it will therefore be sufficient to read Section \ref{sec:cmat} only (or even just the statement of Theorem \ref{th:simdiag}; the corresponding result for the field of real numbers is Theorem \ref{th:realsimdiag}).

A proof of the following lemma for the case of two matrices can be found in~\cite{koiran2018orbits}. As shown in~\cite{koiran2019ortho}, the general case then follows easily.
Note that this lemma is about the "usual" notion of diagonalization (by similarity) rather than by congruence (where one attempts to diagonalize a matrix  $A$ by a transformation
of the form $A \mapsto R^TAR$).
That is, we say that an invertible matrix $T$ diagonalizes $A$ if $T^{-1}AT$
is diagonal.
\begin{lemma} \label{lincomb}
  Let $A_1,\ldots,A_k \in M_n(\kk)$ be a tuple of simultaneously diagonalizable
  matrices with entries in a field $\kk$, and let $S \subseteq \kk$ be a finite set of size $|S| > n(n-1)/2$.
  Then there exist $\alpha_2,\ldots,\alpha_k$ in $S$ such that any transition
  matrix which diagonalizes $A_1+\alpha_2 A_2 + \ldots + \alpha_k A_k$
  must also diagonalize all of the matrices $A_1,\ldots, A_k$.
\end{lemma}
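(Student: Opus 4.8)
The plan is to exploit the joint eigenspace decomposition of the simultaneously diagonalizable family $A_1,\dots,A_k$ and to choose the coefficients $\alpha_j$ so that the linear combination $B=A_1+\alpha_2A_2+\dots+\alpha_kA_k$ has ``as many distinct eigenvalues as it possibly can''. Concretely, fix an invertible $P\in M_n(\kk)$ with $P^{-1}A_iP$ diagonal for all $i$, and for a tuple $\mu=(\mu_1,\dots,\mu_k)\in\kk^k$ let $V_\mu=\{v\in\kk^n : A_iv=\mu_iv\ \text{for}\ 1\le i\le k\}$ be the corresponding joint eigenspace. Reading off the diagonals of the $P^{-1}A_iP$ shows that $\kk^n=\bigoplus_\mu V_\mu$, and since distinct nonzero joint eigenspaces are linearly independent, at most $n$ tuples $\mu^{(1)},\dots,\mu^{(r)}$ (with $r\le n$) satisfy $V_{\mu^{(a)}}\neq 0$. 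The elementary observation underlying everything is that a transition matrix $T$ diagonalizes all of $A_1,\dots,A_k$ \emph{if and only if} every column of $T$ lies in one of the joint eigenspaces $V_{\mu^{(a)}}$: indeed $T^{-1}A_iT$ is diagonal for every $i$ exactly when each column of $T$ is a common eigenvector of all the $A_i$.

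Next I would analyze $B$ on this decomposition. On $V_{\mu^{(a)}}$ the matrix $B$ acts as the scalar $c_a(\alpha):=\mu^{(a)}_1+\sum_{j=2}^k\alpha_j\mu^{(a)}_j$, so the eigenvalues of $B$ are exactly the values $c_a(\alpha)$ and, for each such value $\lambda$, one has $\ker(B-\lambda I)=\bigoplus_{a:\,c_a(\alpha)=\lambda}V_{\mu^{(a)}}$. Hence, \emph{provided $c_1(\alpha),\dots,c_r(\alpha)$ are pairwise distinct}, the eigenspaces of $B$ are precisely the joint eigenspaces $V_{\mu^{(a)}}$. In that case, if $T^{-1}BT$ is diagonal then each column of $T$ is an eigenvector of $B$ (it is nonzero since $T$ is invertible), hence lies in some $V_{\mu^{(a)}}$, and therefore, by the observation above, $T$ diagonalizes every $A_i$. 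So it remains only to pick $\alpha=(\alpha_2,\dots,\alpha_k)\in S^{k-1}$ making all the $c_a(\alpha)$ distinct (when $k=1$ there are no coefficients to choose and the statement is a tautology).

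For a pair $a\neq b$, the equation $c_a(\alpha)=c_b(\alpha)$ reads $(\mu^{(a)}_1-\mu^{(b)}_1)+\sum_{j=2}^k\alpha_j(\mu^{(a)}_j-\mu^{(b)}_j)=0$. If $\mu^{(a)}_j=\mu^{(b)}_j$ for all $j\ge 2$, this forces $\mu^{(a)}_1\neq\mu^{(b)}_1$ (since $\mu^{(a)}\neq\mu^{(b)}$) and the equation has no solution, so this pair imposes no constraint on $\alpha$; otherwise its solution set is a proper affine hyperplane $H_{ab}\subseteq\kk^{k-1}$. As there are at most $\binom{r}{2}\le\binom{n}{2}=n(n-1)/2$ pairs, it suffices to find a point of $S^{k-1}$ lying outside the union of at most $n(n-1)/2$ affine hyperplanes, i.e., a non-root of the product of the corresponding degree-one polynomials, which is a nonzero polynomial of total degree at most $n(n-1)/2<|S|$. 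Such a point exists by the Schwartz--Zippel lemma (or by an elementary induction on the number of variables), yielding the desired $\alpha_2,\dots,\alpha_k\in S$. The only mildly delicate point in the argument is the case analysis guaranteeing that each ``collision condition'' is either vacuous or cuts out a genuine hyperplane, so that the degree bound matches exactly the hypothesis $|S|>n(n-1)/2$; everything else is bookkeeping with the joint eigenspace decomposition.
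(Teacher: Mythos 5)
Your proof is correct, and it is also self-contained, which is worth noting since the paper itself does not prove Lemma~\ref{lincomb} in the text --- it cites~\cite{koiran2018orbits} for $k=2$ and~\cite{koiran2019ortho} for the inductive extension to general~$k$. Your argument takes the natural direct route (joint eigenspace decomposition of the family $\{A_i\}$, a generic linear combination separating the at most $n$ joint eigenvalue tuples, and Schwartz--Zippel to avoid the at most $\binom{n}{2}$ collision hyperplanes), and all the steps are sound: the characterization ``$T$ diagonalizes all $A_i$ iff each column of $T$ lies in some joint eigenspace'' is correct, the case split ensuring each collision condition is either vacuous or a genuine affine hyperplane is exactly right (the pair $(a,b)$ cannot give the zero polynomial since $\mu^{(a)}\neq\mu^{(b)}$), and the degree bound $\binom{n}{2}=n(n-1)/2<|S|$ matches the hypothesis exactly. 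One small point of comparison: the paper's later Proposition~\ref{prop:center} strengthens the lemma to a statement about centralizers and proves it via a density-of-diagonalizable-matrices argument, but that proof still takes Lemma~\ref{lincomb} as a black box, whereas your argument actually establishes the lemma from scratch and makes the role of the generic $\alpha$ fully explicit.
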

See Proposition \ref{prop:center} in Section \ref{sec:realmat} for an improvement of this lemma.
{ 
In the next lemma we give an explicit description of a set of  suitable choices for the tuple $(\alpha_2,\ldots,\alpha_k)$ 
in Lemma~\ref{lincomb}. This description will be needed for a derandomization result in Section~\ref{sec:Lie}.
\begin{lemma} \label{lincomb2}
  Let $A_1,\ldots,A_k \in M_n(\kk)$ be a tuple of simultaneously diagonalizable
  matrices with entries in a field $\kk$. There exists a set of at most $n(n-1)/2$ hyperplanes of $\kk^{n-1}$ such that the following properties hold for any
  point $(\alpha_2,\ldots,\alpha_{n-1}) \in \kk^{n-1}$ which does not belong to the union of the hyperplanes: 
  \begin{itemize}
\item[(i)] Any eigenvector of 
  $A_1+\alpha_2A_2+\cdots+\alpha_k A_k$ must also be an eigenvector of the $k$ matrices $A_1,\ldots,A_k$.
  \item[(ii)]  Any transition
  matrix which diagonalizes $A_1+\alpha_2 A_2 + \ldots + \alpha_k A_k$
  must also diagonalize all of the matrices $A_1,\ldots, A_k$.
  \end{itemize}
\end{lemma}
\begin{proof}
Since  $A_1,\ldots,A_k$ are simultaneously diagonalizable, in order to establish (i) we may as well work in a basis where these matrices become
diagonal. Let us therefore assume without loss of generality that $A_i=\diag(\lambda_{1i},\ldots,\lambda_{ni})$.
For any $\alpha_2,\ldots,\alpha_{n-1}$, the matrix $A_1+\alpha_2A_2+\cdots+\alpha_k A_k$ is diagonal and its 
$i$-th diagonal entry is $\lambda_{i1}+\alpha_2 \lambda_{i2}+\cdots+\alpha_k \lambda_{ik}$.
It therefore suffices to avoid the (proper) hyperplanes of $ \kk^{n-1}$ of the form:
$$\lambda_{i1}+\alpha_2 \lambda_{i2}+\cdots+\alpha_k \lambda_{ik} = \lambda_{j1}+\alpha_2 \lambda_{j2}+\cdots+\alpha_k \lambda_{jk}$$
where $(i,j) \in [n]^2$ ranges over all the pairs such that 
$(\lambda_{i1},\ldots,\lambda_{ik}) \neq (\lambda_{j1},\ldots,\lambda_{jk})$.
Indeed, there are at most $n(n-1)/2$ hyperplanes of this form, and outside of their union there is no ``collision of eigenvalues.'' More precisely, the eigenspace of $A_1+\alpha_2A_2+\cdots+\alpha_k A_k$ associated to the eigenvalue 
$\lambda_{i1}+\alpha_2 \lambda_{i2}+\cdots+\alpha_k \lambda_{ik}$ is equal to $\bigcap_{j=1}^k E_{ij}$, 
where $E_{ij}$ denotes the eigenspace of $A_j$ associated to $\lambda_{ij}$. 
In particular, any eigenvector of  $A_1+\alpha_2A_2+\cdots+\alpha_k A_k$ is also an eigenvector of  $A_1,\ldots, A_k$.

Finally, we show that any point $(\alpha_2,\ldots,\alpha_{n-1})$ which satisfies (i) also satisfies (ii). For any invertible matrix $T$, $T^{-1}(A_1+\alpha_2A_2+\cdots+\alpha_k A_k)T$ is diagonal iff all the column vectors of $T$ are eigenvectors 
of $A_1+\alpha_2A_2+\cdots+\alpha_k A_k$. By (i) this implies that each column vector of $T$ is also an eigenvector
of $A_1,\ldots,A_k$. As a result, the $k$ matrices $T^{-1}A_i T$ are all diagonal.
\end{proof}
We note that Lemma~\ref{lincomb} directly follows from Lemma~\ref{lincomb2}.(ii) via e.g. the Schwartz-Zippel lemma.
}

\subsection{Simultaneous diagonalization by congruence} \label{sec:cmat}

The following result is from Horn and Johnson \cite{horn13}. The first part is just the statement of Theorem 4.5.17(b), and the additional properties in (ii) are established 
in the proof of that theorem (see  \cite{horn13} for details).
\begin{theorem} \label{4.5.17}
Let $A,B \in M_n(\cc)$ be two complex symmetric matrices of size~$n$ with $A$ nonsingular, and let $C=A^{-1}B$.
\begin{itemize}
\item[(i)] {  $C$ is diagonalizable if and only if there are complex diagonal matrices $D$ and $\Delta$ and a nonsingular $R \in M_n(\cc)$  such that $A=R D R^T$ and $B=R \Delta R^T$.}
\item[(ii)] Moreover, if $C= S \Lambda S^{-1}$ where $S$ is nonsingular and $\Lambda$ diagonal then the matrix $R$ in (i) can be taken of the form $R=S^{-T}V^{T}$ where $V$  is unitary and commutes with $\Lambda$.
\end{itemize}
\end{theorem}
The next result generalizes Theorem \ref{4.5.17} and provides a solution to the second part of Problem~4.5.P4 in~\cite{horn13}. 
\begin{theorem}[simultaneous diagonalization by congruence] \label{th:simdiag}
Let $A_1,\ldots,A_k$ be complex symmetric matrices of size $n$ and assume that $A_1$ is nonsingular. 
{  The $k-1$ matrices $A_1^{-1}A_i$ ($i=2,\ldots,k$) form a commuting family of diagonalizable matrices  if and only if there are   diagonal matrices $\Lambda_i$ and a  nonsingular matrix $R \in M_n(\cc)$ such that 
$A_i = R \Lambda_i R^T$ for all $i=1,\ldots,k$.}
\end{theorem}
\begin{proof}
Suppose that $A_i = R \Lambda_i R^T$ where the $\Lambda_i$ are diagonal and $R$ nonsingular. 
Then the matrices $A_1^{-1}A_i = R^{-T}\Lambda_1^{-1} \Lambda_i R^T$ indeed form a commuting family of diagonalizable matrices.
For the converse, assume that the matrices $A_1^{-1}A_i$  form such a family. 
Then these matrices are simultaneously diagonalizable, and by Lemma~\ref{lincomb} there is a tuple $(\alpha_3,\ldots,\alpha_k)$ such that any transition matrix that diagonalizes the matrix
$$C=A_1^{-1}A_2+ \alpha_3 A_1^{-1}A_3+ \ldots+\alpha_k A_1^{-1}A_k$$ diagonalizes all of the $k-1$ matrices $C_i=A_1^{-1}A_i$.
Now we apply Theorem~\ref{4.5.17} to $A=A_1$ and $B=A_2 + \alpha_3 A_3+\ldots+\alpha_k A_k$.
Write $C=S \Lambda S^{-1}$ where $S$ is nonsingular and $\Lambda$ diagonal. By part (ii) of Theorem \ref{4.5.17} we can write  $A_1=R D R^T$ and $B=R \Delta R^T$ where $R$ is nonsingular, $D$ and $\Delta$ are diagonal,  $R=S^{-T}V^{T}$, $V$ commutes with $\Lambda$ and is unitary.
By choice of the tuple~$\alpha$, we can write $C_i=S \Lambda_i S^{-1}$ where $ \Lambda_i$ is diagonal. We will show that { $A_i = R D \Lambda_i R^T$} for $i \geq 2$, thereby completing the proof of the theorem.
First, we note that $V$ commutes with the $ \Lambda_i$. Indeed, $V$ and $\Lambda$ are simultaneously diagonalizable since they commute and are diagonalizable ($\Lambda$  is diagonal and $V$ unitary).
But any transition matrix which diagonalizes simultaneously $V$ and $\Lambda$ will diagonalize simultaneously $V$ and the $ \Lambda_i$ (this follows from the choice of $\alpha$ and the relations  $C_i=S \Lambda_i S^{-1}$, $C=S \Lambda S^{-1}$).
These matrices must therefore commute.
We can now complete the proof:  for $i \geq 2$ we have 
 $$A_i=A_1C_i = (S^{-T}V^{T} D VS^{-1})( S\Lambda_i S^{-1}) = S^{-T}V^{T} D V\Lambda_i S^{-1}.$$ 
Since $V$ commutes with $\Lambda_i$, 
{ $A_i=S^{-T}V^{T} D \Lambda_i V S^{-1} = R D\Lambda_i R^T$} as announced.
\end{proof}

\subsection{A refinement of Theorem \ref{th:simdiag}} \label{sec:refine}

In this section we give a more "invariant" formulation of Theorem \ref{th:simdiag}. Note indeed that this theorem assigns a special role to $A_1$. 
In Theorem \ref{th:dissociate} we give a formulation that depends only {  on}  the space spanned by the $A_i$ and not on the choice of a specific spanning family 
$A_1,\ldots,A_k$. Some of the results in this section apply to $\kk=\rr$ as well as $\kk=\cc$.

The role of $A_1$ in Theorem \ref{th:simdiag} could of course be played by any other invertible matrix in the tuple. As it turns out, for our $k-1$ matrices the commutation property alone is also independent
of the choice of the invertible matrix in the tuple. 
More precisely, we have:
\begin{proposition} \label{prop:exchange}
Let $A_1,\ldots,A_k \in M_n(\kk)$; assume that $A_1$ and $A_k$ are nonsingular. 
The $k-1$ matrices  $A_k^{-1}A_i$ ($i=1,\ldots,k-1$) commute if and only if the same is true of the  $k-1$ matrices $A_1^{-1}A_i$ ($i=2,\ldots,k$).
\end{proposition}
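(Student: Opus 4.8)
The plan is to reduce the statement to an elementary fact about a linear subspace of $M_n(\kk)$. Two bookkeeping observations drive the reduction: adjoining the identity matrix to a family never changes whether that family is commuting, and conjugating through $A_1^{-1}$ turns the two families of the statement into a subspace $W$ and a left translate $P^{-1}W$, where the invertible matrix $P$ and its inverse both lie in $W$.

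First I would note that since $A_k^{-1}A_k = I = A_1^{-1}A_1$ commutes with everything, the family $\{A_k^{-1}A_i : i = 1,\ldots,k-1\}$ is commuting if and only if $\{A_k^{-1}A_i : i = 1,\ldots,k\}$ is, and likewise $\{A_1^{-1}A_i : i = 2,\ldots,k\}$ is commuting if and only if $\{A_1^{-1}A_i : i = 1,\ldots,k\}$ is; so it suffices to compare these two enlarged families. Now set $B_i := A_1^{-1}A_i$ (so $B_1 = I$) and $P := A_1^{-1}A_k$, which is invertible; since $A_k^{-1}A_1 = P^{-1}$, one has $A_k^{-1}A_i = P^{-1}B_i$. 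Let $W := \operatorname{span}(B_1,\ldots,B_k) \subseteq M_n(\kk)$, a subspace containing both $I$ and the invertible matrix $P = B_k$. Because the commutator $[X,Y] = XY - YX$ is bilinear, a spanning family of a subspace pairwise commutes if and only if the whole subspace does, so the proposition reduces to the claim that $W$ is commutative if and only if $P^{-1}W$ is.

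To establish this claim I would argue in both directions. If $W$ is commutative, then $P$, and hence $P^{-1}$, commutes with every element of $W$, so for $X,Y \in W$ we get $(P^{-1}X)(P^{-1}Y) = P^{-2}XY = P^{-2}YX = (P^{-1}Y)(P^{-1}X)$, and $P^{-1}W$ is commutative. Conversely, if $P^{-1}W$ is commutative, then since $I \in W$ we have $P^{-1} = P^{-1}I \in P^{-1}W$, so $P^{-1}$ commutes with $P^{-1}X$ for every $X \in W$; multiplying $P^{-1}(P^{-1}X) = (P^{-1}X)P^{-1}$ on the left by $P$ gives $P^{-1}X = XP^{-1}$, i.e.\ $P$ commutes with all of $W$. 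Since $P$ also commutes with $P^{-1}$, it commutes with every element of $P^{-1}W$, whence $(PU)(PV) = P^{2}UV = P^{2}VU = (PV)(PU)$ for all $U,V \in P^{-1}W$; as $PU$ ranges over $W = P\,(P^{-1}W)$, this shows $W$ is commutative.

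I do not anticipate a genuine obstacle here: the computations are one-liners. The only point needing care is the normalization — recognizing that after conjugating by $A_1^{-1}$ both $P$ and $P^{-1}$ lie in the span $W$, which is exactly what lets the converse direction bootstrap from ``$P^{-1}$ commutes with $P^{-1}W$'' up to ``$P$ commutes with $W$''. This also fits the invariant viewpoint of Section~\ref{sec:refine}: the commutation property depends only on the subspace $A_1^{-1}\operatorname{span}(A_1,\ldots,A_k)$ and is preserved when that subspace is multiplied by any invertible element it contains.
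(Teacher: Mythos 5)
Your proof is correct and, in substance, the same as the paper's: both arguments hinge on the observation that $A_k^{-1}A_1$ (your $P^{-1}$) belongs to the commuting family and hence, by Lemma~\ref{lem:invcom}, so does its inverse, which then lets one swap factors. The paper carries out this rearrangement by a direct one-line identity,
\[
(A_1^{-1}A_i)(A_1^{-1}A_j) = (A_k^{-1}A_1)^{-1}(A_k^{-1}A_i)(A_k^{-1}A_1)^{-1}(A_k^{-1}A_j),
\]
proving a single implication and leaving the converse to the evident symmetry of the statement; you instead pass to the span $W$ and its translate $P^{-1}W$ and prove both directions explicitly. Your reformulation is a bit more structural — it essentially isolates the special case of the later Theorem~\ref{th:invariant} that is being used, so it foreshadows that result more transparently — but the algebraic content and the key lemma are identical, so I would not call it a genuinely different route.
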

The proof will use the following simple fact.
\begin{lemma} \label{lem:invcom}
If $A$ and $B$ commute and $A$ is invertible, then $A^{-1}$ and $B$ commute as well.
\end{lemma}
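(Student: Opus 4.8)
The plan is to start from the hypothesis $AB = BA$ and simply conjugate it by $A^{-1}$. Concretely, I would left-multiply and right-multiply both sides of $AB = BA$ by $A^{-1}$, using that $A$ is invertible so that $A^{-1}$ exists. This gives
$$A^{-1}(AB)A^{-1} = A^{-1}(BA)A^{-1},$$
and after cancelling $A^{-1}A = I$ on the left-hand side and $AA^{-1} = I$ on the right-hand side, we are left with $BA^{-1} = A^{-1}B$, which is exactly the assertion that $A^{-1}$ and $B$ commute.

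An equivalent way to phrase the same argument, which I might include as a remark, is that the inner automorphism $X \mapsto AXA^{-1}$ fixes $B$ by hypothesis; applying its inverse automorphism $X \mapsto A^{-1}XA$ to both sides then shows that $A^{-1}BA = B$, i.e., $A^{-1}B = BA^{-1}$. There is no real obstacle here: the statement is a one-line matrix manipulation, and the only hypothesis used is the invertibility of $A$, which is precisely what allows the cancellations. I would keep the written proof to a single displayed line.
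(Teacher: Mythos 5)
Your proof is correct, and it is the standard one-line argument: conjugating $AB=BA$ by $A^{-1}$ immediately gives $BA^{-1}=A^{-1}B$. The paper states this lemma without proof (introducing it merely as a ``simple fact''), so your derivation is exactly the justification the authors had in mind.
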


\begin{proof}[Proof of Proposition \ref{prop:exchange}]
Suppose that the $A_k^{-1}A_i$ commute. We can write:
$$(A_1^{-1}A_i) (A_1^{-1}A_j) =  (A_k^{-1}A_1)^{-1}  (A_k^{-1}A_i) (A_k^{-1}A_1)^{-1} (A_k^{-1}A_j).$$ 
It follows from our hypothesis and from Lemma \ref{lem:invcom}  that the four factors on the right hand side commute.
We can therefore rewrite this equation as:
$$(A_1^{-1}A_i) (A_1^{-1}A_j) =  (A_k^{-1}A_1)^{-1}  (A_k^{-1}A_j) (A_k^{-1}A_1)^{-1} (A_k^{-1}A_i),$$ 
and now the right hand side is equal to $(A_1^{-1}A_j) (A_1^{-1}A_i)$.
\end{proof}
Let $\cal V$ be the space of  matrices spanned by $A_1,\ldots,A_k$. The matrices $A_1^{-1}A_i$ commute if and only if $A_1^{-1} {\cal V}$ is a commuting subspace of $M_n(\kk)$. With Proposition \ref{prop:exchange} in hand, we can characterize this property in a way that is completely independent of the choice of a spanning family $A_1,\ldots,A_k$ for $\cal V$.

\begin{theorem} \label{th:invariant}
Let $\cal V$ be a  nonsingular  subspace of matrices of $M_n(\kk)$ (i.e., $\cal V$ does not contain singular matrices only).
The two following properties are equivalent:
\begin{itemize}
\item[(i)] There exists a nonsingular matrix $A \in \cal V$ such that $A^{-1} {\cal V}$ is a commuting subspace.
\item[(ii)] For all nonsingular matrices $A \in \cal V$, $A^{-1} {\cal V}$ is a commuting subspace.
\end{itemize}
\end{theorem}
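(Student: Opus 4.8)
The plan is to reduce everything to Proposition~\ref{prop:exchange} by inserting the two distinguished invertible matrices into a common spanning family of $\mathcal{V}$.

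First I would record the elementary observation that, since the relation $[X,Y]=0$ is bilinear in $(X,Y)$, a subspace $\mathcal{W}\subseteq M_n(\kk)$ is a commuting subspace if and only if the members of any fixed spanning family of $\mathcal{W}$ pairwise commute. Consequently, if $A\in\mathcal{V}$ is nonsingular and $A_1,\dots,A_k$ is an arbitrary spanning family of $\mathcal{V}$, then $A^{-1}\mathcal{V}$ is a commuting subspace if and only if the matrices $A^{-1}A_1,\dots,A^{-1}A_k$ pairwise commute.

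The implication (ii)$\Rightarrow$(i) is then immediate, since $\mathcal{V}$ is nonsingular and hence contains at least one matrix $A$ to which (ii) applies. For (i)$\Rightarrow$(ii), let $A\in\mathcal{V}$ be a nonsingular matrix with $A^{-1}\mathcal{V}$ commuting, let $B\in\mathcal{V}$ be an arbitrary nonsingular matrix, and fix a basis $e_1,\dots,e_d$ of $\mathcal{V}$. I would apply Proposition~\ref{prop:exchange} to the spanning family $(A_1,\dots,A_k)=(A,e_1,\dots,e_d,B)$, whose first and last members are nonsingular as the proposition requires. By the observation above, together with the fact that $A_1^{-1}A_1=I$ commutes with everything, the hypothesis that $A_1^{-1}\mathcal{V}$ is a commuting subspace is equivalent to saying that $A_1^{-1}A_i$ ($i=2,\dots,k$) pairwise commute. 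Proposition~\ref{prop:exchange} then yields that $A_k^{-1}A_i$ ($i=1,\dots,k-1$) pairwise commute, and adjoining $A_k^{-1}A_k=I$ this says that $A_k^{-1}A_i$ ($i=1,\dots,k$) pairwise commute, i.e.\ that $B^{-1}\mathcal{V}=A_k^{-1}\mathcal{V}$ is a commuting subspace. Since $B$ was an arbitrary nonsingular element of $\mathcal{V}$, this gives (ii).

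There is no serious obstacle: all the substance is already contained in Proposition~\ref{prop:exchange} (and, beneath it, in Lemma~\ref{lem:invcom}). The only points deserving a moment's care are the passage between the ``commuting subspace'' formulation and the ``pairwise commuting spanning family'' formulation, and the fact that one may freely place both $A$ and $B$ at the two ends of a spanning family of $\mathcal{V}$ — which is legitimate even when $A$ and $B$ are proportional, because Proposition~\ref{prop:exchange} does not require the family $A_1,\dots,A_k$ to be linearly independent.
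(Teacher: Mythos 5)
Your proof is correct and takes essentially the same approach as the paper: both reduce to Proposition~\ref{prop:exchange} by placing the two distinguished nonsingular matrices at the two ends of a spanning family of $\mathcal{V}$. Your version spells out a couple of small points the paper leaves implicit (the bilinearity argument for why commutativity of a spanning family suffices, and the remark that Proposition~\ref{prop:exchange} does not need the spanning family to be linearly independent), but the substance is identical.
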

\begin{proof}
Since $\cal V$ is nonsingular, (ii) implies (i). For the converse, assume that $A_1^{-1} {\cal V}$ is a commuting subspace 
and that $A_1,\ldots,A_k$ is a spanning family of~$\cal V$. Let $A \in \cal V$ be a nonsingular matrix. 
We can add $A$ to our spanning family and apply Proposition \ref{prop:exchange} to $(A_1,\ldots,A_k,A)$, with $A$ playing the role of $A_k$ in that proposition.
\end{proof}
As a result, we can dissociate the commutativity test from the diagonalizability test in Theorem \ref{th:simdiag}:
\begin{theorem} \label{th:dissociate}
Let $A_1,\ldots,A_k$ be complex symmetric matrices of size $n$ and assume that the subspace $\cal V$ spanned by these matrices
is nonsingular.
{  The two following properties are equivalent:
\begin{itemize}
\item[(i)] There are   diagonal matrices $\Lambda_i$ and a  nonsingular matrix $R \in M_n(\cc)$ such that 
$A_i = R \Lambda_i R^T$ for all $i=1,\ldots,k$. 
\item[(ii)] $\cal V$ satisfies the two equivalent properties of Theorem \ref{th:invariant}, and there exists an invertible  $B \in \cal V$ such that the matrices $B^{-1}A_i$ ($i=1,\ldots,k$) are all diagonalizable.
\end{itemize}
}
\end{theorem}
\begin{proof}
Let $A \in \cal V$ be nonsingular, and suppose that there   are   diagonal matrices $\Lambda_i$ and a  nonsingular matrix $R \in M_n(\cc)$ such that 
$A_i = R \Lambda_i R^T$ for all~$i$. Note that we have the same form for $A$, i.e., $A= R \Lambda R^T$ with $\Lambda$ diagonal. 
As a result, we may assume without loss of generality that $A$ is one of the matrices in the tuple $A_1,\ldots,A_k$ (we add it if necessary),
and we may even assume that $A=A_1$. We may then take $B=A$ by Theorem \ref{th:simdiag}.

Let us now prove the converse. We therefore assume that $\cal V$ satisfies the two properties of Theorem \ref{th:invariant}, 
and that $B \in \cal V$ is an invertible matrix such the matrices $B^{-1}A_i$ ($i=2,\ldots,k$) are all diagonalizable.
From property (ii) in Theorem \ref{th:invariant} it follows that the matrices $B^{-1}A_i$ commute. We conclude by applying Theorem \ref{th:simdiag}
to the tuple $(B,A_1,\ldots,A_k)$.
\end{proof}

\subsection{Real matrices} \label{sec:realmat}

Here we study the existence of decompositions similar to those of Theorem~\ref{4.5.17} and Theorem~\ref{th:simdiag} for real matrices. 
We begin with a real version of  Theorem~\ref{4.5.17}.
\begin{theorem} \label{real4.5.17}
Let $A,B \in M_n(\rr)$ be two real symmetric matrices of size~$n$ with $A$ nonsingular, and let $C=A^{-1}B$.
\begin{itemize}
\item[(i)] 
{  $C$ is diagonalizable and has real eigenvalues if and only if there are  real diagonal matrices $D$ and $\Delta$ and a nonsingular $R \in M_n(\rr)$ such that $A=R D R^T$ and $B=R \Delta R^T$.}
\item[(ii)] Moreover, if $C= S \Lambda S^{-1}$ where $S$ is a real nonsingular matrix and $\Lambda$ diagonal then the matrix $R$ in (i) can be taken of the form $R=S^{-T}V^{T}$ where $V$  is orthogonal and commutes with $\Lambda$.
\end{itemize}
\end{theorem}
\begin{proof}
If a decomposition of the pair $(A,B)$ as in (i) exists, it is clear that $C$ must be diagonalizable with real eigenvalues since $C=R^{-T}D^{-1}\Delta R^T$.
The converse and part (ii) can be obtained by a straightforward adaptation of the proof of Theorem~{4.5.17(b)} in \cite{horn13}.
\end{proof}

The next result generalizes Theorem \ref{real4.5.17} and provides a real version of Theorem \ref{th:simdiag}.
\begin{theorem}[simultaneous diagonalization by congruence] \label{th:realsimdiag}
Let $A_1,\ldots,A_k$ be real symmetric matrices of size $n$ and assume that $A_1$ is nonsingular. 
 {  The $k-1$ matrices $A_1^{-1}A_i$ ($i=2,\ldots,k$) form a commuting family of diagonalizable matrices with real eigenvalues
 if and only if there are   real diagonal matrices $\Lambda_i$ and a  nonsingular matrix $R \in M_n(\rr)$ such that 
$A_i = R \Lambda_i R^T$ for all $i=1,\ldots,k$.}
\end{theorem}
\begin{proof}
The proof of Theorem \ref{th:simdiag} applies almost verbatim: we just need to work everywhere with real matrices instead of complex matrices (and with real coefficients 
$\alpha_i$), and appeal to Theorem \ref{real4.5.17} instead of Theorem \ref{4.5.17}. 
There is just one point in the proof where a little care is needed.
Namely, in the proof of Theorem \ref{th:simdiag} we used the fact that Theorem \ref{4.5.17}.(ii) provides us with a unitary matrix $V$, and that unitary matrices are diagonalizable.
In the real case we get an orthogonal matrix instead  (as per Theorem \ref{real4.5.17}.(ii)), and orthogonal matrices are not necessarily diagonalizable over $\rr$.
Nevertheless, real orthogonal matrices are diagonalizable over $\cc$ since they are unitary. We can therefore conclude like in the proof of Theorem \ref{th:simdiag} 
that our real orthogonal matrix $V$ commutes with the $\Lambda_i$. The remainder of the proof is unchanged.
\end{proof}

In the proofs of Theorems \ref{th:simdiag} and \ref{th:realsimdiag} we have used the fact that $V$ is a unitary matrix. These arguments can be somewhat  simplified at the expense of proving the following improvement to Lemma \ref{lincomb}. First we recall that the centralizer of a matrix $M$, denoted $Z(M)$, is the subspace of matrices that commute with $M$.

\begin{proposition} \label{prop:center}
Let $A_1,\ldots,A_k$ and $\alpha_2,\ldots,\alpha_k$ be as in Lemma \ref{lincomb}; let $A=A_1+\alpha_2 A_2 + \ldots + \alpha_k A_k$.
Then $$Z(A)=\bigcap_{i=1}^k Z(A_i).$$
\end{proposition}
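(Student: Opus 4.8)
The plan is to prove the two inclusions separately. The inclusion $\bigcap_{i=1}^k Z(A_i) \subseteq Z(A)$ is immediate: anything commuting with each $A_i$ commutes with the linear combination $A = A_1 + \alpha_2 A_2 + \cdots + \alpha_k A_k$. All the content lies in the reverse inclusion $Z(A) \subseteq \bigcap_i Z(A_i)$, and here I would first reduce to the diagonal case. Since the $A_i$ are simultaneously diagonalizable, fix an invertible $T$ with $D_i := T^{-1}A_iT$ diagonal for every $i$; then $D := T^{-1}AT = D_1 + \alpha_2 D_2 + \cdots + \alpha_k D_k$ is diagonal, and $Z(A) = T\,Z(D)\,T^{-1}$, $Z(A_i) = T\,Z(D_i)\,T^{-1}$. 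Moreover the relevant property of $\alpha$ transfers to the conjugated tuple: if $P$ diagonalizes $D$ then $TP$ diagonalizes $A$, hence by Lemma~\ref{lincomb} $TP$ diagonalizes every $A_i$, so $P$ diagonalizes every $D_i$. Thus it suffices to show $Z(D) \subseteq \bigcap_i Z(D_i)$.

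Write $D_i = \diag(d_{i,1},\ldots,d_{i,n})$ and $D = \diag(a_1,\ldots,a_n)$, so that $a_j = d_{1,j} + \alpha_2 d_{2,j} + \cdots + \alpha_k d_{k,j}$. The key step is the following claim, which extracts the combinatorial heart of Lemma~\ref{lincomb}: for all indices $j,j'$, if $a_j = a_{j'}$ then $d_{i,j} = d_{i,j'}$ for every $i$. I would prove this by contraposition. Suppose $a_j = a_{j'} =: \lambda$ but $d_{i_0,j} \neq d_{i_0,j'}$ for some $i_0$. Then $e_j + e_{j'}$ lies in the $\lambda$-eigenspace of $D$, so replacing the $j$-th column of the identity matrix by $e_j + e_{j'}$ produces an invertible matrix $P$ which still diagonalizes $D$ (indeed $P^{-1}DP = D$). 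However $D_{i_0}(e_j + e_{j'}) = d_{i_0,j}\,e_j + d_{i_0,j'}\,e_{j'}$ is not a scalar multiple of $e_j + e_{j'}$, so $P$ does not diagonalize $D_{i_0}$, contradicting the conclusion of Lemma~\ref{lincomb} (in its conjugated form established above). I expect this construction to be the main obstacle — not because it is hard, but because it is the one place where the specific choice of $\alpha$ is genuinely used.

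Finally I would combine the claim with the elementary description of the centralizer of a diagonal matrix. If $M \in Z(D)$ then $M_{j,j'}(a_j - a_{j'}) = 0$ for all $j,j'$, so $M_{j,j'} = 0$ whenever $a_j \neq a_{j'}$; by the claim this forces $M_{j,j'} = 0$ whenever $d_{i,j} \neq d_{i,j'}$ for some $i$, hence $M_{j,j'}(d_{i,j} - d_{i,j'}) = 0$ for all $j,j'$ and all $i$, i.e.\ $M$ commutes with every $D_i$. Therefore $Z(D) \subseteq \bigcap_i Z(D_i)$, and conjugating back by $T$ gives $Z(A) \subseteq \bigcap_i Z(A_i)$. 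Together with the trivial inclusion this yields $Z(A) = \bigcap_{i=1}^k Z(A_i)$. Everything after the claim is routine bookkeeping with diagonal matrices.
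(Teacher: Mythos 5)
Your proof is correct but follows a genuinely different route from the paper's. The paper first treats an arbitrary diagonalizable $B \in Z(A)$: since $A$ and $B$ commute and are both diagonalizable, a common transition matrix $T$ diagonalizes both; by the choice of $\alpha$, $T$ then diagonalizes every $A_i$, so $B$ commutes with each $A_i$. It then handles a general $B \in Z(A)$ by a density argument, observing that $Z(A)$ has a block-diagonal structure (as $A$ is diagonalizable) and that diagonalizable matrices are dense in each block. Your argument avoids any topology. After the same conjugation to the diagonal case, you extract a purely combinatorial consequence of Lemma~\ref{lincomb}: equality of two diagonal entries of $D$ forces equality of the corresponding entries of every $D_i$, which you verify by the neat $P = I$-with-$j$-th-column-replaced-by-$e_j+e_{j'}$ construction, and then you read off the containment $Z(D) \subseteq \bigcap_i Z(D_i)$ from the elementary description of the centralizer of a diagonal matrix. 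Both proofs work over $\rr$ and $\cc$, which is all the paper needs, but yours is algebraic and works verbatim over any field (the density step in the paper's proof uses the metric topology), and it dispenses with the separate treatment of non-diagonalizable elements of $Z(A)$.
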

This result applies to real as well as complex matrices.
Before giving the proof, let us  explain how it can be used in Theorems \ref{th:simdiag} and \ref{th:realsimdiag}. Applying Proposition \ref{prop:center} to the tuple of simultaneously diagonalizable matrices $C_2,\ldots,C_k$, we see that  $$Z(C)=\bigcap_{i=2}^k Z(C_i).$$ Since  $C_i=S \Lambda_i S^{-1}$ and $C=S \Lambda S^{-1}$,
this implies $Z(\Lambda)=\bigcap_{i=2}^k Z(\Lambda_i).$
Since $V \in Z(\Lambda)$, we conclude that $V$ commutes with the $\Lambda_i$. Therefore, we have established this commutation property without using the fact that
$V$ can be taken unitary.

\begin{proof}[Proof of Proposition \ref{prop:center}]
The inclusion from right to left obviously holds for any choice of the~$\alpha_i$. For the converse, let $B \in Z(A)$ and assume as a first step that $B$
is diagonalizable. Since $A$ and $B$ commute and both matrices are diagonalizable, there exists a transition matrix $T$ such that $T^{-1}AT$ and $T^{-1}BT$ are diagonal. 
By choice of the $\alpha_i$, all of the matrices $T^{-1}A_iT$ are diagonal as well. We conclude that $B$ and $A_i$ commute since they are simultaneously diagonalizable.
To complete the proof, we just need to observe that diagonalizable matrices are dense in $Z(A)$. This follows from the fact that $A$ itself is diagonalizable
(observe indeed that the centralizer of a diagonal matrix takes a block-diagonal from, and diagonalizable matrices are dense in each block).
\end{proof}

Like in the complex case, we can dissociate the commutativity test in Theorem \ref{th:realsimdiag} from the diagonalizability test:
\begin{theorem} \label{th:realdissociate}
Let $A_1,\ldots,A_k$ be real symmetric matrices of size $n$ and assume that the subspace $\cal V$ spanned by these matrices
is nonsingular.
{  The two following properties are equivalent:
\begin{itemize}
\item[(i)] There are   diagonal matrices $\Lambda_i$ and a  nonsingular matrix $R \in M_n(\rr)$ such that
$A_i = R \Lambda_i R^T$ for all $i=1,\ldots,k$.
\item[(ii)]   $\cal V$ satisfies the two equivalent properties of Theorem \ref{th:invariant}, and there exists an invertible  $B \in \cal V$ such that the matrices $B^{-1}A_i$ ($i=2,\ldots,k$) are all diagonalizable with real eigenvalues.
\end{itemize}
}
\end{theorem}
The proof is identical to the proof of Theorem \ref{th:dissociate}, except that we appeal to Theorem \ref{th:realsimdiag} instead of Theorem \ref{th:simdiag}.
The criterion in Theorem \ref{th:realdissociate} takes a particularly simple form when $\cal V$ contains a positive definite matrix.
Before explaining this, we recall the following lemma.
\begin{lemma} \label{lem:pd}
Let $A$ and $B$ be two real symmetric matrices with $B$ positive definite. Then $B^{-1}A$ is diagonalizable with real eigenvalues.
\end{lemma}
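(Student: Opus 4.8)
The plan is to prove Lemma~\ref{lem:pd} by the classical simultaneous reduction of a pair of quadratic forms, one of which is definite. First I would use the fact that since $B$ is real symmetric positive definite, it admits a Cholesky-type factorization $B = LL^T$ with $L$ invertible (equivalently, $B = B^{1/2}B^{1/2}$ with $B^{1/2}$ symmetric positive definite). The key observation is then that
\[
  B^{-1}A = L^{-T}L^{-1}A = L^{-T}\bigl(L^{-1}A L^{-T}\bigr)L^{T},
\]
so $B^{-1}A$ is similar to the matrix $M := L^{-1}AL^{-T}$. Since $A$ is symmetric, $M$ is symmetric as well, hence by the real spectral theorem $M$ is diagonalizable over $\rr$ with real eigenvalues. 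Being similar to $M$, the matrix $B^{-1}A$ is therefore also diagonalizable with real eigenvalues, which is exactly the claim.

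The steps in order are: (1) factor $B = LL^{T}$ using positive definiteness; (2) rewrite $B^{-1}A$ as a conjugate of the symmetric matrix $M = L^{-1}AL^{-T}$; (3) apply the real spectral theorem to $M$; (4) transfer diagonalizability and reality of the spectrum back to $B^{-1}A$ via the similarity. Each step is routine linear algebra, so there is no real obstacle here; the only point requiring a word of care is step~(2), namely checking that the conjugating matrix is the same on both sides so that genuine similarity (not merely congruence) is obtained — this is immediate from $B^{-1} = L^{-T}L^{-1}$. Alternatively, one can phrase step~(2) using $B^{1/2}$ instead of $L$, writing $B^{-1}A = B^{-1/2}(B^{-1/2}AB^{-1/2})B^{1/2}$; this variant makes the symmetry of the conjugated matrix even more transparent at the cost of invoking the symmetric square root.

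I would present the $B^{1/2}$ version in the final text since it is the cleanest, and it also records explicitly that a transition matrix diagonalizing $B^{-1}A$ can be taken of the form $B^{-1/2}O$ with $O$ orthogonal, which is the shape that will be convenient when this lemma is plugged into Theorem~\ref{th:realdissociate}.
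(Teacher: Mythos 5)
Your proposal is correct and follows essentially the same argument as the paper: factor $B=HH^{T}$ with $H$ invertible, observe that $B^{-1}A=H^{-T}(H^{-1}AH^{-T})H^{T}$ is similar to the real symmetric matrix $H^{-1}AH^{-T}$, and invoke the real spectral theorem. The remark about the transition matrix having the form $B^{-1/2}O$ with $O$ orthogonal is a pleasant addition but is the same idea.
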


\begin{proof}
Since $B$ is positive definite, we can write $B=H H^T$ where $H$ is a real invertible matrix. Hence $B^{-1}A = H^{-T}H^{-1} A = H^{-T} (H^{-1} A H^{-T}) H^T$.
Since $H^{-1} A H^{-T}$ is real symmetric it is diagonalizable with real eigenvalues. This is true of $B^{-1}A$ as well since the  two matrices are similar.
\end{proof}

As an immediate consequence of Lemma \ref{lem:pd} and Theorem \ref{th:realdissociate} 
we have:

\begin{theorem}[simultaneous diagonalization by congruence, positive definite case] \label{th:pd}
Let $A_1,\ldots,A_k$ be real symmetric matrices of size $n$ and assume that 
 the subspace $\cal V$ spanned by these matrices contains a positive definite matrix.
 The 3 following properties are equivalent:
 \begin{itemize}
 \item[(i)] There exists a nonsingular matrix $A \in \cal V$ such that $A^{-1} {\cal V}$ is a commuting subspace.
 \item[(ii)] For all nonsingular matrices $A \in \cal V$, $A_1^{-1} {\cal V}$ is a commuting subspace.
  \item[(iii)] There are   real diagonal matrices $\Lambda_i$ and a  nonsingular matrix $R \in M_n(\rr)$ such that
$A_i = R \Lambda_i R^T$ for all $i=1,\ldots,k$.
\end{itemize}
\end{theorem}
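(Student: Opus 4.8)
The plan is to obtain the theorem as a short assembly of Theorem~\ref{th:realdissociate} and Lemma~\ref{lem:pd}, after the preliminary observation that a subspace $\cal V$ containing a positive definite matrix is in particular nonsingular, so that both of those results apply to it. The equivalence (i)~$\Leftrightarrow$~(ii) is then nothing but Theorem~\ref{th:invariant} applied to $\cal V$ (its two conditions are precisely our (i) and (ii); the occurrence of ``$A_1^{-1}$'' in our (ii) should of course read ``$A^{-1}$''). So the real content is the equivalence of these two conditions with~(iii).

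For the implication (iii)~$\Rightarrow$~(i) I would argue directly rather than quote a theorem: if $A_i = R\Lambda_i R^T$ for all $i$ with $R$ nonsingular and the $\Lambda_i$ diagonal, then every $A=\sum_i c_i A_i\in\cal V$ equals $R\Lambda R^T$ with $\Lambda=\sum_i c_i\Lambda_i$ diagonal; when $A$ is nonsingular, $\Lambda$ is invertible and $A^{-1}A_j = R^{-T}\Lambda^{-1}\Lambda_j R^T$ for every $j$, so the family $A^{-1}{\cal V}$ is simultaneously diagonalized by $R^{-T}$ and hence commuting. Since $\cal V$ is nonsingular such an $A$ exists, giving~(i) (indeed~(ii)).

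For the converse (i)~$\Rightarrow$~(iii), let $B\in\cal V$ be positive definite. By Lemma~\ref{lem:pd} each matrix $B^{-1}A_i$ is diagonalizable with real eigenvalues. Moreover~(i) is exactly one of the two equivalent properties of Theorem~\ref{th:invariant}. Hence every hypothesis of Theorem~\ref{th:realdissociate} is in place — $\cal V$ is nonsingular, it satisfies the properties of Theorem~\ref{th:invariant}, and there is an invertible $B\in\cal V$ with all $B^{-1}A_i$ diagonalizable over~$\rr$ — and that theorem produces a simultaneous congruence $A_i = R\Lambda_i R^T$, which is~(iii). There is essentially no hard step here; the statement is really a corollary, and the only points that require a little care are to verify the cheap direction (iii)~$\Rightarrow$~(i) by hand (Theorem~\ref{th:realdissociate} is stated as a single ``if and only if'', but only its forward direction is needed, and using it to prove its own hypothesis would be circular) and to check that the positive definite matrix supplied by the hypothesis is exactly the input matrix ``$B$'' demanded by Theorem~\ref{th:realdissociate}, which it is thanks to Lemma~\ref{lem:pd}.
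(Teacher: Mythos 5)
Your proof is correct and takes essentially the same route as the paper, which simply declares the theorem ``an immediate consequence of Lemma~\ref{lem:pd} and Theorem~\ref{th:realdissociate}.'' You supply exactly the bookkeeping that the paper leaves tacit: a positive definite matrix makes $\cal V$ nonsingular, the equivalence (i)$\Leftrightarrow$(ii) is Theorem~\ref{th:invariant}, and Lemma~\ref{lem:pd} feeds the positive definite $B$ into Theorem~\ref{th:realdissociate} to get (i)$\Rightarrow$(iii). You also correctly flag the typo in (ii) (``$A_1^{-1}$'' should read ``$A^{-1}$'').

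One small remark on your circularity worry for (iii)$\Rightarrow$(i): there is in fact no circularity in invoking Theorem~\ref{th:realdissociate} there. That theorem is a biconditional $P\Leftrightarrow Q$ with $P$ being the simultaneous-congruence decomposition and $Q$ being the two conditions (Theorem~\ref{th:invariant} properties plus diagonalizability). For (i)$\Rightarrow$(iii) you use $Q\Rightarrow P$; for (iii)$\Rightarrow$(i) you would use the other direction $P\Rightarrow Q$, whose conclusion includes the Theorem~\ref{th:invariant} property, i.e.\ (i). These are disjoint directions of the ``iff,'' so neither assumes what it proves. Your direct computation for (iii)$\Rightarrow$(i) is of course perfectly fine and arguably clearer, but it was not needed to avoid a logical defect.
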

{A related characterization can be found in~\cite[Theorem~3.3]{jiang16}.}
As we will see at the end of  Section \ref{sec:equiv}, the significance of
Theorem~\ref{th:pd}
is that when a polynomial is equivalent to a sum of $n$ real cubes,
the corresponding 
$\cal V$ always contains a positive definite matrix.

\section{The equivalence problem} \label{sec:equiv}

In this section we review Kayal's equivalence algorithm
{  and present our  tensor-based approach.}
We first recall the following definition from the introduction.
\begin{definition} \label{def:equiv}
  A polynomial $f \in \kk[x_1,\ldots,x_n]$ is said to be equivalent to a sum of $n$ $d$-th powers 
  if $f(x)=P_d(Ax)$ where $P_d(x_1,\ldots,x_n)=x_1^d+\ldots+x_n^d$
and $A \in M_n(\kk)$ is a nonsingular matrix.
\end{definition}
As explained before, our equivalence algorithms
in Sections~\ref{sec:random} and~\ref{sec:deter} deal only with the case
$d=3$ (equivalence to a sum of cubes).
More generally, one could ask whether two forms of degree 3 given as input are equivalent (by an invertible change of variables as above).
This problem is known to be at least as hard as graph isomorphism~\cite{agrawal06,kayal11} and is "tensor isomorphism complete"~\cite{grochow19}.
By contrast, equivalence of quadratic forms is "easy": it is classically known from linear algebra that   two real quadratic forms are equivalent iff they have the same rank and signature (this is "Sylvester's law of inertia")  
and two quadratic forms over $\cc^n$ are equivalent iff they have the same rank.\footnote{This paper is focused on the fields of real and complex numbers but equivalence of quadratic forms has been an active topic of study for other fields as well,
especially from the point of view of number theory: see for instance \cite{omeara,scharlau}.}

Note that when $\kk=\rr$, Definition \ref{def:equiv} requires the changes of variables matrix $A$ as well as the input polynomial $f$ to be real.
It also makes sense to ask if an input $f \in \rr[x_1,\ldots,x_n]$ is equivalent to a sum of $n$ cubes as a complex polynomial.
The following example shows that these are two distinct notions of equivalence.
\begin{example} \label{ex:noreal}
  Consider the 
  real polynomial 
$$f(x_1,x_2)=(x_1+ix_2)^3+(x_1-ix_2)^3=2x_1^3-6x_1x_2^2.$$
This decomposition shows that as a complex polynomial, $f$ is equivalent to a sum of two cubes. Moreover, there is no decomposition as a sum of 2 cubes of {\em real} linear forms since the above decomposition is essentially the unique decomposition of $f$.  This follows from Corollary \ref{cor:unique} below:
in any other decomposition $f=l_1^3+l_2^3$, the linear forms $l_1$ and $l_2$ must be scalar multiples of $x_1+ix_2$ and  $x_1-ix_2$.\footnote{More precisely one must have $l_1=\alpha_1(x_1+ix_2)$, $l_2=\alpha_2(x_1-ix_2)$ (or vice versa) where $\alpha_1^3=\alpha_2^3=1$.}
Note that this is very different from the case of degree~2 forms: any real quadratic form in $n$ variables can be written as a linear combinations of $n$ real squares.
\end{example}
A similar example shows that for a polynomial $f(x_1,x_2)$
with rational coefficients, equivalence to a sum of two cubes over $\qq$ or
other $\rr$ are distinct notions:
\begin{example}  \label{ex:norat}
  Consider the rational  polynomial 
$$f(x_1,x_2)=(x_1+\sqrt{2}x_2)^3+(x_1-\sqrt{2}x_2)^3 = 2x_1^3+12x_1x_2^2.$$
This polynomial is equivalent to a sum of two cubes over $\rr$
but not over $\qq$.
\end{example}

\subsection{Review of Kayal's equivalence algorithm} \label{subsec:kayal}

Let $f \in \cc[x_1,\ldots,x_n]$ be a homogeneous polynomial
of degree $d \geq 3$. Recall that the Hessian matrix of $f$ is the symmetric
matrix of size $n$ with entries $\partial^2 f / \partial x_i \partial x_j$,
and that the Hessian determinant of $f$, denoted $H_f$, is the determinant
of this matrix.
Kayal's equivalence algorithm is based on the factorization properties
of the Hessian determinant. Note that $H_f=[d(d-1)]^n (x_1\cdots x_n)^{d-2}$
for $f=P_d$. It is shown in~\cite{kayal11} that we still have a factorization
as a product of linear forms after an invertible change of variables:
\begin{lemma} \label{lem:hessian}
 Suppose that $f(x_1,\ldots,x_n) = \sum_{i=1}^n a_il_i(x_1,\ldots,x_n)^d$
 where the $l_i$ are linear forms and $a_i \in \cc \setminus \{0\}$.
 The Hessian determinant of $f$ is of the form
 $$H_f(x_1\ldots,x_n)=c\prod_{i=1}^n l_i(x_1,\ldots,x_n)^{d-2}$$
 for some constant $c$.
 Moreover, $c \neq 0$ iff the $l_i$ are linearly independent.
\end{lemma}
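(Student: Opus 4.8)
The plan is to compute the Hessian matrix of $f$ explicitly and to observe that it factors as a product $B^{T}DB$, where $B$ records the coefficients of the linear forms and $D$ is diagonal; the result will then follow immediately from the multiplicativity of the determinant, with no appeal to a change of variables or to $d$-th roots. So first I would write $l_i(x_1,\ldots,x_n)=\sum_{k=1}^n B_{ik}x_k$ and let $B=(B_{ik})$ be the $n\times n$ matrix whose $i$-th row is the coefficient vector of $l_i$. Differentiating $f=\sum_{i=1}^n a_i l_i^d$ twice with the chain rule gives
$$\frac{\partial^2 f}{\partial x_j\,\partial x_k}=d(d-1)\sum_{i=1}^n a_i\,l_i^{\,d-2}\,B_{ij}B_{ik},$$
which is precisely the $(j,k)$ entry of $B^{T}DB$ with $D=d(d-1)\,\diag\!\big(a_1 l_1^{\,d-2},\ldots,a_n l_n^{\,d-2}\big)$. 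Hence the Hessian matrix of $f$ equals $B^{T}DB$.

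Next I would take determinants, using $\det(B^{T}DB)=(\det B)^2\det D$, to obtain
$$H_f=(\det B)^2\,[d(d-1)]^{\,n}\Big(\prod_{i=1}^n a_i\Big)\prod_{i=1}^n l_i^{\,d-2},$$
which is exactly the announced form with $c=(\det B)^2\,[d(d-1)]^{\,n}\prod_{i=1}^n a_i$. Since $d\ge 3$ we have $d(d-1)\neq 0$, and $a_i\neq 0$ by hypothesis, so $c\neq 0$ if and only if $\det B\neq 0$. The rows of $B$ are the coefficient vectors of $l_1,\ldots,l_n$, so $\det B\neq 0$ is equivalent to $l_1,\ldots,l_n$ being linearly independent, which yields the stated equivalence and completes the proof.

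I do not expect a real obstacle here: the only thing to be careful about is the bookkeeping in the chain-rule computation and the recognition that the resulting double sum packages into the symmetric factorization $B^{T}DB$. Once that identity is in place, both the product-of-linear-forms shape of $H_f$ and the nonvanishing criterion for $c$ are immediate consequences of the determinant being multiplicative. (As a degenerate remark one may note that if some $l_i$ is the zero form then $D$ is singular and $H_f\equiv 0$, consistently with the $l_i$ being dependent and $c=0$; this case requires no separate treatment.)
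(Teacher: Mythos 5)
Your proof is correct. The paper does not supply its own proof of this lemma; it cites Kayal's paper \cite{kayal11} for it, so there is no internal argument to compare against. Your computation is the natural one: differentiating $f=\sum_i a_i l_i^d$ twice gives $(H_f)_{jk}=d(d-1)\sum_i a_i l_i^{d-2}B_{ij}B_{ik}$, which is exactly $(B^TDB)_{jk}$ with $D=d(d-1)\,\diag(a_1l_1^{d-2},\ldots,a_nl_n^{d-2})$, and then $\det(B^TDB)=(\det B)^2\det D$ yields both the product form and the criterion $c\neq 0\Leftrightarrow\det B\neq 0$. One small advantage of your route over the alternative the paper hints at (writing $H_f$ as $[d(d-1)]^n(x_1\cdots x_n)^{d-2}$ for $P_d$ and then transporting by an invertible change of variables) is that it handles the case where the $l_i$ are linearly \emph{dependent} uniformly, with no separate argument needed to establish that $H_f\equiv 0$ there; the change-of-variables version only directly applies when the $l_i$ are independent. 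The only cosmetic point: you invoke $d\geq 3$ to argue $d(d-1)\neq 0$, but the statement is set in characteristic $0$, so $d(d-1)\neq 0$ already for any $d\geq 2$; mentioning $d\geq 3$ is harmless but not actually needed.
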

As a consequence we have the uniqueness result of Corollary~\ref{cor:unique} below, which generalizes
Corollary~5.1 in~\cite{kayal11}. First, we need the following lemma:
\begin{lemma}  \label{lem:nonzero}
Suppose that $f$ can be written as in (\ref{eq:linind}) as a sum of powers of linearly independent linear forms. If $r \geq 1$ then $f$ is not identically zero.
\end{lemma}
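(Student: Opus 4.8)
The plan is to use the linear independence of the $l_i$ to exhibit a point at which $f$ does not vanish. Suppose $f=\sum_{i=1}^r l_i^d$ with $l_1,\ldots,l_r$ linearly independent linear forms in $n$ variables (so $r\le n$), and $r\ge 1$. Since $l_1,\ldots,l_r$ are linearly independent, the linear map $\ell\colon\kk^n\to\kk^r$ defined by $\ell(a)=(l_1(a),\ldots,l_r(a))$ is surjective. Hence there is a point $a\in\kk^n$ with $l_1(a)=1$ and $l_i(a)=0$ for $2\le i\le r$. Evaluating, $f(a)=\sum_{i=1}^r l_i(a)^d=1^d=1\neq 0$, so $f$ is not identically zero.

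Alternatively — and this is the formulation that fits best with the surrounding material — one extends $l_1,\ldots,l_r$ to a basis $l_1,\ldots,l_n$ of the space of linear forms, lets $A\in M_n(\kk)$ be the (invertible) matrix whose rows are the coefficient vectors of the $l_i$, so that $l_i(x)=(Ax)_i$, and substitutes $x=A^{-1}y$. This gives $f(A^{-1}y)=\sum_{i=1}^r y_i^d$, which is manifestly nonzero since $r\ge 1$; and a polynomial is identically zero if and only if its composition with an invertible linear change of variables is, so $f\neq 0$. In fact this shows the slightly stronger fact that $f$ is equivalent, after a linear change of variables, to $P_r(y_1,\ldots,y_r)$ (viewed in $n$ variables).

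There is essentially no obstacle here: the only point requiring a word of care is the characteristic, but since we work over $\rr$ or $\cc$ (and $d\ge 1$) the coefficient $1$ of $y_1^d$ is nonzero, so the argument goes through without any hypothesis beyond linear independence of the $l_i$ and $r\ge 1$.
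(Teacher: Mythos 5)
Your proof is correct, and both of your arguments work. But they take a genuinely different route from the paper's. The paper first settles the case $r=n$ (either by invoking Lemma~\ref{lem:hessian} to see that $H_f\not\equiv 0$, or by noting that $f$ is equivalent to $P_d$ which is not the zero polynomial), and then reduces the case $r<n$ to the case $r=n$ by \emph{restricting}: it sets $n-r$ of the variables to $0$, carefully choosing which ones so that the restricted forms $l_i'$ remain linearly independent, which it justifies by the fact that an $r\times n$ matrix of rank $r$ contains an invertible $r\times r$ submatrix. Your second argument goes in the opposite direction — it \emph{extends} $l_1,\ldots,l_r$ to a full basis and performs an invertible change of variables, immediately exhibiting $f$ as $P_r$ embedded in $n$ variables, with no case split and no submatrix-selection argument. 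Your first argument is even more elementary: it bypasses any change of variables entirely and simply produces a witness point $a$ with $l_1(a)=1$, $l_i(a)=0$ for $i\ge 2$, so $f(a)=1$. Both of your versions are shorter and more self-contained; the paper's version is a little heavier but has the advantage of reusing the already-established Hessian machinery that dominates the surrounding section, and the submatrix-selection trick it isolates is reused explicitly in the proof of Corollary~\ref{cor:unique}.
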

\begin{proof}
  We already know this for $r=n$: Lemma \ref{lem:hessian} shows that $H_f$ is not identically 0. For a more direct proof of the result in this case,
  one can simply observe that $f$ is equivalent to $P_d$ but $P_d$
  is not equivalent to 0.
  
  The general case can be reduced to the case $r=n$
  by setting $n-r$ of the variables of $f$ to 0. 
In this way, we obtain a sum of powers of $r$ linear forms $l'_i$ in $r$ variables, Moreover, it is always possible to choose the variables of $f$ that are set to 0 so that the $l'_i$ remain linearly independent like the forms $l_i$ in (\ref{eq:linind}). Indeed, this follows from the fact that a $r \times n$ matrix
of rank $r$ must contain a $r \times r$ submatrix of rank $r$.
\end{proof}
\begin{corollary} \label{cor:unique}
  Suppose that $f(x_1,\ldots,x_n)=\sum_{i=1}^r l_i(x_1,\ldots,x_n)^d$
  where the $l_i$ are linearly independent linear forms.
  For any other decomposition
  $f(x_1,\ldots,x_n)=\sum_{i=1}^r \ell_i(x_1,\ldots,x_n)^d$
  the linear forms $\ell_i$ must satisfy $\ell_i = \omega_i l_{\pi(i)}$ where $\omega_{i}$ is a $d$-th root of unity and $\pi \in S_n$ a permutation.
\end{corollary}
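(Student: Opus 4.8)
The plan is to bring $f$ into the normal form $x_1^d+\cdots+x_r^d$ and then show, via the span of first‑order partial derivatives, that any competing decomposition cannot leak outside $\operatorname{span}(l_1,\dots,l_r)$ and in fact just permutes and rescales the coordinates. First I would use that $l_1,\dots,l_r$ are linearly independent to complete them to a basis of linear forms; the associated invertible change of variables replaces $f$ by $g:=x_1^d+\cdots+x_r^d$, replaces each $l_i$ by $x_i$, and turns the second decomposition into an identity $g=\sum_{i=1}^r m_i^d$ for suitable linear forms $m_i$. It then suffices to prove that $m_i=\omega_i x_{\pi(i)}$ with $\omega_i^d=1$ and $\pi$ a permutation of $\{1,\dots,r\}$: transporting back by the inverse change of variables (under which $x_j\mapsto l_j$) yields $\ell_i=\omega_i l_{\pi(i)}$, as claimed.

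Next I would examine the space $W$ spanned by the first partials of $g$, which records its essential variables. On one hand $\partial g/\partial x_j=d\,x_j^{d-1}$ for $j\le r$ and $0$ otherwise, so $W=\operatorname{span}(x_1^{d-1},\dots,x_r^{d-1})$ has dimension exactly $r$. On the other hand, differentiating $g=\sum_i m_i^d$ shows that each $\partial g/\partial x_j$ is a linear combination of $m_1^{d-1},\dots,m_r^{d-1}$, so $W\subseteq U:=\operatorname{span}(m_1^{d-1},\dots,m_r^{d-1})$; since $U$ is spanned by $r$ vectors, $r=\dim W\le\dim U\le r$, forcing $W=U$. In particular the $m_i^{d-1}$ are linearly independent (so every $m_i\neq 0$) and each $m_i^{d-1}$ lies in $\operatorname{span}(x_1^{d-1},\dots,x_r^{d-1})$. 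Writing $m_i=\sum_j c_{ij}x_j$, if two coefficients $c_{ij},c_{ik}$ with $j\neq k$ were nonzero then the mixed monomial $x_j^{d-2}x_k$ would appear in $m_i^{d-1}$ with coefficient $(d-1)c_{ij}^{d-2}c_{ik}\neq 0$; for $d\ge 3$ this monomial occurs in no pure power $x_m^{d-1}$, a contradiction. Hence $m_i=c_ix_{\pi(i)}$ with $c_i\neq 0$; moreover $\pi(i)\le r$ because $m_i^{d-1}\in W$, and $\pi$ is injective — hence a permutation of $\{1,\dots,r\}$ — by linear independence of the $m_i^{d-1}$. Comparing $\sum_{j\le r}x_j^d=g=\sum_i c_i^d x_{\pi(i)}^d$ coefficientwise then gives $c_i^d=1$, so each $\omega_i:=c_i$ is a $d$-th root of unity.

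The step I expect to be the real obstacle is showing that a competing decomposition cannot escape $\operatorname{span}(l_1,\dots,l_r)$ and cannot spread one $m_i$ over several coordinates; this is exactly where the hypothesis $d\ge 3$ is used (the monomial $x_j^{d-2}x_k$ is genuinely mixed only then — indeed the statement fails for $d=2$) and where the linear independence of the original $l_i$ enters (it is what makes $\dim W=r$). Everything else is bookkeeping. For the special case $r=n$ one could instead invoke the Hessian factorization of Lemma~\ref{lem:hessian}: it gives $H_g=(d(d-1))^n(x_1\cdots x_n)^{d-2}=c\prod_i m_i^{d-2}$ with $c\neq 0$, and unique factorization in $\cc[x_1,\dots,x_n]$ finishes the argument; but the partial‑derivative route treats $r<n$ uniformly and, in keeping with the paper's philosophy, avoids any appeal to factorization.
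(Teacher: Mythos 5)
Your proof is correct, and it takes a genuinely different route from the paper's. The paper proves the case $r=n$ by invoking the Hessian factorization of Lemma~\ref{lem:hessian} together with uniqueness of polynomial factorization (to get $\ell_i=c_il_{\pi(i)}$) and then Lemma~\ref{lem:nonzero} (to get $c_i^d=1$); the case $r<n$ is reduced to $r=n$ by extending $l_1,\ldots,l_r$ to a basis and adding the extra $d$-th powers. Your argument instead normalizes once and for all to $g=x_1^d+\cdots+x_r^d$, observes that the span $W$ of first partials of $g$ is exactly $\operatorname{span}(x_1^{d-1},\ldots,x_r^{d-1})$ of dimension $r$, that the competing decomposition forces $W\subseteq\operatorname{span}(m_1^{d-1},\ldots,m_r^{d-1})$, and then a dimension count shows equality and linear independence of the $m_i^{d-1}$. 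Membership of each $m_i^{d-1}$ in the monomial span rules out any mixed monomial $x_j^{d-2}x_k$, which for $d\ge 3$ pins each $m_i$ to a single coordinate; injectivity and the coefficient comparison then finish it.

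The trade-offs are these: the paper's route is shorter given that Lemma~\ref{lem:hessian} is already proved and needed elsewhere, but it leans on uniqueness of factorization in $\cc[x_1,\ldots,x_n]$ and on the somewhat awkward reduction of $r<n$ to $r=n$. Your route is entirely linear-algebraic and monomial-theoretic, treats all $r$ uniformly, and — as you note — avoids any appeal to factorization, which fits the paper's stated philosophy of banning factorization subroutines from the algorithms. The one thing worth making explicit is that $m_i$ could a priori involve variables $x_j$ with $j>r$; your monomial argument silently covers this (a mixed monomial $x_j^{d-2}x_k$ with $j>r$ or $k>r$ is likewise excluded from $W$, and a pure power $x_j^{d-1}$ with $j>r$ is not in $W$ either), but it deserves a sentence. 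Otherwise the argument is complete and the bookkeeping is right, including the observation that $\pi$ is a permutation of $\{1,\ldots,r\}$ (which is what the paper's "$\pi\in S_n$" should read as in the case $r<n$, with $\pi$ extended by the identity).
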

\begin{proof} Consider first the case $r=n$. By Lemma~\ref{lem:hessian}
  and uniqueness of factorization we must have $\ell_i = c_i l_{\pi(i)}$
  for some constants $c_i$ and some permutation~$\pi$.
  Plugging this relation into the two decompositions of $f$ shows that:
  $$\sum_{i=1}^n l_i^d = \sum_{i=1}^n c_i^d l_{\pi(i)}^d.$$
  Moving all terms to the left-hand side we obtain:
  $$\sum_{i=1}^n (1-c_{\pi^{-1}(i)}^d)l_i^d = 0,$$
  and Lemma~\ref{lem:nonzero} then implies that $c_i^d = 1$ for all $i$.
  Assume now that $r <n$. Since the $l_i$ are linearly independent,
  we can extend this family into a family of $n$
  linearly independent linear forms $l_1,\ldots,l_n$.
    Our two decompositions of $f$ yield two decompositions for the polynomial
  $g=f+l_{r+1}^d+\cdots+l_n^d$, and we can apply the result
  for the case $r=n$ to $g$. Another way to reduce to this case would be to
  decrease $n$ by setting $n-r$ of the variables of $f$ to 0 as
  in the proof of Lemma~\ref{lem:nonzero}.
\end{proof}

Kayal's algorithm can be summarized by the 3 following steps. It takes as input a degree $d$ form $f \in K[x_1,\ldots,x_n]$  and determines whether $f$ is equivalent to $P_d$ over $\mathbb{K}$, where $K \subseteq \mathbb{K}$ are two subfields of $\cc$. If $f$  is equivalent to $P_d$ it determines linearly independent linear forms $\ell_i \in \mathbb{K}[x_1,\ldots,x_n]$ such that  $f(x_1,\ldots,x_n) = \sum_{i=1}^n \ell_i(x_1,\ldots,x_n)^d$.
This presentation generalizes slightly~\cite{kayal11}, which focuses on the case $K=\mathbb{K} = \cc$.
\begin{enumerate}
\item Check that the Hessian determinant $H_f$ is not identically 0 and can be factorized in $\mathbb{K}[x_1,\ldots,x_n]$ as $H_f(x_1\ldots,x_n)=c\prod_{i=1}^n l_i(x_1,\ldots,x_n)^{d-2}$ where the $l_i$ are linear forms and $c \in \mathbb{K}$.
If this is not possible, reject. 
\item Try to find constants $a_i \in \mathbb{K}$ such that  $$f(x_1,\ldots,x_n) = \sum_{i=1}^n a_il_i(x_1,\ldots,x_n)^d.$$ If this is not possible, reject. 

\item Check that all the $a_i$ have $d$-th roots in $\mathbb{K}$. If this is not the case, reject. Otherwise, declare that $f$ is equivalent to $P_d$ over $\mathbb{K}$
and output the linear forms $\ell_i=\alpha_i l_i$ where $\alpha_i^d=a_i$ and $\alpha_i \in \mathbb{K}$.
\end{enumerate}
The correctness of the algorithm follows from Lemma \ref{lem:hessian} and
Corollary~\ref{cor:unique}.
Note in particular that if the algorithm accepts, the forms $l_i$ must be linearly independent (or else $H_f$ would be identically 0 by Lemma~\ref{lem:hessian}, and the algorithm would have rejected at step 1); and the constants $a_i$ at
step 2 are unique if they exist.

For $d=3$, or more generally for small degree, the constants $a_i$ at step 2 can be found efficiently by dense linear algebra assuming an algebraic model of computation  (for the Turing machine model, see the comments below).
For large $d$ we can instead evaluate $f$ and the powers $l_i^d$ at random points (see Section \ref{sec:essential} on linear dependencies or \cite{kayal11} for details).

At step 1, the Hessian determinant can be factorized by Kaltofen's algorithm \cite{Kaltofen89} for the factorization of arithmetic circuits as suggested in \cite{kayal11}, 
or by the black box factorization algorithm of Kaltofen  and Trager \cite{KalTra90}.
These two algorithms assume access to an algorithm for the factorization of univariate polynomials 
(one of the algorithms in~\cite{koiran2018orbits} reduces instead to the closely related task of matrix diagonalization).
For $K=\mathbb{K}=\cc$ one can just assume the ability to factor univariate polynomials as part of our computation model.
This yields a polynomial time algorithm, which is clearly not designed to run on a Turing machine.
Another option is to take $K=\mathbb{K}=\qq$, and we obtain a polynomial time algorithm for the Turing machine model.

Assume now that $K=\qq$, $\mathbb{K}=\cc$ and that we wish to design again an algorithm for the Turing machine model.
As mentioned in Section \ref{subsec:equiv}, a natural approach would be to factor $H_f$ symbolically at step 1, i.e., to construct 
an extension $K'$ of $\qq$ of finite degree where we can find the coefficients of the linear forms $l_i$.
The linear algebra computations of step 2 would then be carried out symbolically in $K'$.
It is not clear that this approach yields a polynomial time algorithm even for $d=3$ because these computations could possibly take place 
in an extension of exponential degree (recall indeed that the splitting field of a univariate polynomial of degree $r$ may be of degree as high as $r!$).
We provide polynomial time algorithms for this problem (and for $\mathbb{K}=\rr$) in Sections~\ref{sec:random} and~\ref{sec:deter}.
In order to stay closer to Kayal's original algorithm, 
{  a plausible approach would be to stop his algorithm at step 1.
 Note indeed that step 3 is not necessary for $\mathbb{K}=\cc$, and it is not immediately clear whether step 2 is necessary. Namely, it is not obvious whether there are polynomials
that pass the factorization test of step 1 but fail at step 2. 
This led us to the following question: 

\begin{question} \label{question:hessian}
  Let $f \in \cc[x_1,\ldots,x_n]$ be a homogeneous polynomial of degree $d \geq 3$. If the Hessian determinant of $f$ is equal to $(x_1x_2\cdots x_n)^{d-2}$,
must $f$ be of the form $f(x_1,\ldots,x_n)=\alpha_1x_1^d+\cdots+\alpha_nx_n^d$?
\end{question}
A positive answer would yield a polynomial time decision algorithm
for the equivalence problem since the {\em existence} of a suitable factorization at step~1 can be decided
in polynomial time~\cite{koiran2018orbits}.
Representation of polynomials by Hessian determinants has 
proved
to be a delicate topic: see \cite{gondim15} for a famous mistake by Hesse about his eponymous determinant. {  Hesse's mistake was about polynomials with vanishing Hessian, a topic that remains of interest to this day \cite{huang19}}.
One of the authors of \cite{huang19} came accross Question~\ref{question:hessian} in an earlier version of the present paper, and managed to obtain a negative answer for many cases of interest~\cite{ventura21}:
$ n \geq 2$ and $d \geq 4$ even, or $ n \geq 3$ and $3 \leq d \leq n$, or $ n \geq 3$ and and $d=kn$
for some $k>1$. 
The approach pursued in our paper, based on simultaneous diagonalization by congruence, therefore remains the only way of testing equivalence to a sum of cubes over~$\cc$ in polynomial time. 
We now present this approach in detail.   }

\subsection{Equivalence by tensor decomposition} \label{sec:tensor}

In the remainder of this section we explain our approach to the equivalence problem. Like in most of the paper, we work in a field $\kk$ which
  is either the field of real or complex numbers.
Recall that we can associate to a symmetric tensor $T$ of order 3 the homogeneous polynomial $f(x_1,\ldots,x_n)=\sum_{i,j,k=1}^n T_{ijk} x_i x_j x_k$.
This correspondence is bijective, and the symmetric tensor associated
to a homogeneous polynomial $f$ can be obtained from the relation:
\begin{equation}\label{partial3}
\displaystyle \frac{\partial^3 f}{\partial x_i \partial x_j \partial x_k}  = 6 T_{ijk}.
\end{equation}
The $i$-th slice of $T$ is the symmetric matrix $T_i$ with entries $(T_i)_{jk} = T_{ijk}$.
By abuse of language, we will also say that $T_i$ is the $i$-th slice of $f$.
Note that~(\ref{partial3}) is the analogue of the relation
$$\frac{\partial^2 q}{\partial x_i \partial x_j}  = 2 Q_{ij}$$
which connects the entries of a symmetric matrix $Q$ to the  partial derivatives of the quadratic from $q(x)=x^TQx$.
Comparing these two equations shows that the matrix of the quadratic form $\partial f / \partial x_k$ is equal to $3T_k$.

\begin{remark} \label{rem:diag}
The slices of a polynomial of the form 
\begin{equation} \label{eq:diagtensor}
g(x_1,\ldots,x_n)=\alpha_1 x_1^3+\ldots+\alpha_n x_n^3
\end{equation}
are the diagonal matrices $\diag(\alpha_1,0,\ldots,0),\ldots,\diag(0,\ldots,0,\alpha_n)$.
Conversely, if all the slices of a degree 3 homogeneous polynomial $g$ are diagonal then $g$ must be of the above form (in particular, such a $g$ is equivalent to a sum of $n$ cubes 
iff the coefficients $\alpha_i$ are all nonzero;
this follows from the fact that
for $\kk \in \{\rr,\cc\}$, any element of $\kk$ has a cube root in $\kk$). Indeed, the presence of any other monomial in $g$
would yield an off-diagonal term in some slice: for the monomial $m=x_i^2x_j$ with $i \neq j$ we have $\partial m / \partial x_i = 2x_i x_j$ and for
$m=x_i x_j x_k$ with all indices distinct we have
$\partial m / \partial x_i = x_j x_k$.
\end{remark}

In light of Definition \ref{def:equiv}, it is important to understand how slices behave under a linear change of variables. This was done  for symmetric and ordinary tensors in \cite[Section 2.1 and Proposition 48]{koiran2019ortho}. In particular, for symmetric tensors  the following result can be obtained from (\ref{partial3}):
\begin{proposition} \label{prop:slices}
Let $g$ be a degree 3 form with slices $S_1,\ldots,S_n$ and let $f(x)=g(Ax)$. The slices $T_1,\ldots,T_n$ of $f$ are given by the formula:
$T_k=A^TD_kA$ where $D_k=\sum_{i=1}^n a_{ik} S_i$ and the $a_{ik}$ are the entries of $A$. In particular, if $g$ is as in (\ref{eq:diagtensor}) 
we have $D_k = \diag(\alpha_1 a_{1k},\ldots,\alpha_n a_{nk})$.
\end{proposition}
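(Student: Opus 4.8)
The plan is to compute the third-order partial derivatives of $f$ by the chain rule and then read off its slices via the defining relation~(\ref{partial3}). Write $y=Ax$, so that $y_i=\sum_{l=1}^n a_{il}x_l$ and $\partial y_i/\partial x_l = a_{il}$. Applying the chain rule three times to $f(x)=g(y)$ and using that each $a_{il}$ is a constant, one gets
$$\frac{\partial^3 f}{\partial x_l\partial x_m\partial x_k}(x) = \sum_{i,j,p=1}^n \frac{\partial^3 g}{\partial y_i\partial y_j\partial y_p}(Ax)\, a_{il}a_{jm}a_{pk}.$$
Since $g$ is homogeneous of degree $3$, its third partial derivatives are constant functions; by~(\ref{partial3}) applied to $g$ they equal $6S_{ijp}$, while~(\ref{partial3}) applied to $f$ says the left-hand side equals $6T_{lmk}$.

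Substituting and cancelling the factor $6$, I would then regroup the triple sum as
$$T_{lmk} = \sum_{p=1}^n a_{pk}\Bigl(\sum_{i,j=1}^n a_{il}(S_p)_{ij}a_{jm}\Bigr) = \sum_{p=1}^n a_{pk}\,(A^TS_pA)_{lm}.$$
Because $(T_k)_{lm}=T_{lmk}$ by definition of the $k$-th slice, this is exactly $T_k = A^T\bigl(\sum_{p=1}^n a_{pk}S_p\bigr)A = A^TD_kA$ with $D_k=\sum_{p=1}^n a_{pk}S_p$, as claimed. For the last assertion, when $g$ is as in~(\ref{eq:diagtensor}) Remark~\ref{rem:diag} identifies $S_i$ with the diagonal matrix carrying $\alpha_i$ in position $(i,i)$ and zeros elsewhere, whence $D_k=\sum_{i=1}^n a_{ik}S_i=\diag(\alpha_1 a_{1k},\ldots,\alpha_n a_{nk})$.

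I do not expect a genuine obstacle here: the statement is essentially a bookkeeping exercise, and the only points deserving a moment's care are keeping track of which index of $A$ is summed against which factor, and invoking degree-$3$ homogeneity at the right place so that the third derivatives of $g$ are genuinely constant (hence the $S_p$ are well defined independently of the evaluation point). Alternatively, the result can be quoted directly from the computation for symmetric tensors carried out in~\cite[Section~2.1 and Proposition~48]{koiran2019ortho}.
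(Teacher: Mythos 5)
Your proof is correct and takes essentially the approach the paper indicates: the paper does not spell out the computation but says the result ``can be obtained from (\ref{partial3})'' and cites \cite[Section~2.1 and Proposition~48]{koiran2019ortho}; your chain-rule calculation is precisely that computation carried out in full, with the indices correctly matched (using the symmetry of the tensors to move from $T_{lmk}$ to $(T_k)_{lm}$ and from $S_{ijp}$ to $(S_p)_{ij}$) and the degree-3 homogeneity invoked at the right spot to make the third derivatives constant.
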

A similar property appears in the analysis of Jennrich's algorithm~\cite[Lemma~3.3.3]{moitra18}.
The action on slices given by the formula $T_k=A^TD_kA$ in this proposition seems at least superficially related to the action on tuples of symmetric (and antisymmetric) matrices 
studied by Ivanyos and Qiao \cite{ivanyos19}. They consider an action of $GL_n$ sending a tuple $(S_1,\ldots,S_m)$ to the tuple $(T_1,\ldots,T_m)$ where $T_i=A^TS_iA$. 
Two tuples are said to be isometric if there exists an invertible matrix $A$ realizing this transformation.
Some of the main differences with our setting are:
\begin{itemize}
\item[(i)] The number of elements in our matrix tuples is the same as the dimension $n$ of the matrices, but in their setting $m$ and $n$ are unrelated.
\item[(ii)] The matrices in our tuples must come from a symmetric tensor but they allow arbitrary tuples of symmetric matrices.
\item[(iii)] They act independently on each component of a matrix tuple, whereas we "mix" components with the transformation $D_k=\sum_{i=1}^n a_{ik} S_i$. In spite of this difference, the actions on the space of matrices spanned by the tuple's components are the same, see Lemma \ref{lem:slicespan} below.
\end{itemize}
Also we note that their algorithm for isometry testing is not algebraic since it requires the construction of field extensions as explained e.g. in the paragraph on the representation of fields and field extensions in  \cite{ivanyos19}.\footnote{An algebraic algorithm does not require the construction of field extensions since by definition all operations take place in the ground field. In \cite{ivanyos19} they only need to construct extensions of polynomially bounded degree. As explained in Section~\ref{subsec:kayal}, this is not clear for Kayal's algorithm.}

\begin{lemma} \label{lem:slicespan}
Let $f(x_1,\ldots,x_n)$ and $g(x_1,\ldots,x_n)$ be two forms of degree 3 such that $f(x)=g(Ax)$ for some nonsingular matrix $A$. 
\begin{itemize} 
\item[(i)] If $\cal U$ and $\cal V$ denote the
subspaces of $M_n(\kk)$ spanned respectively by the slices of $f$ and $g$, we have ${\cal U} = A^T {\cal V} A$. 

\item[(ii)] In particular, for $g=P_3$ the subspace $\cal V$ is the space of diagonal matrices and  ${\cal U}$   is a nonsingular subspace, i.e., it is not made of singular matrices only.
\end{itemize}
\end{lemma}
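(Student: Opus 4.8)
The plan is to derive everything from Proposition~\ref{prop:slices}, which already expresses the slices of $f$ in terms of those of $g$ and the change-of-variables matrix $A$.

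For part~(i), write $S_1,\ldots,S_n$ for the slices of $g$, so that $\mathcal{V}=\mathrm{span}(S_1,\ldots,S_n)$, and $T_1,\ldots,T_n$ for the slices of $f$, so that $\mathcal{U}=\mathrm{span}(T_1,\ldots,T_n)$. By Proposition~\ref{prop:slices} we have $T_k=A^TD_kA$ with $D_k=\sum_{i=1}^n a_{ik}S_i$. Each $D_k$ is a linear combination of the $S_i$, hence lies in $\mathcal{V}$, so $T_k\in A^T\mathcal{V}A$ and therefore $\mathcal{U}\subseteq A^T\mathcal{V}A$. For the reverse inclusion, observe that the coefficient matrix $(a_{ik})_{i,k}$ is exactly $A$, which is nonsingular; consequently the linear map carrying the tuple $(S_1,\ldots,S_n)$ to $(D_1,\ldots,D_n)$ is invertible, and in particular $\mathrm{span}(D_1,\ldots,D_n)=\mathcal{V}$. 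Applying the map $X\mapsto A^TXA$, which is injective since $A$ is invertible, to both sides yields $\mathcal{U}=\mathrm{span}(A^TD_1A,\ldots,A^TD_nA)=A^T\mathcal{V}A$. (Alternatively, one could apply the inclusion already proved to $g(x)=f(A^{-1}x)$ and rearrange, but the invertibility argument above is cleaner.)

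For part~(ii) we specialize to $g=P_3$. By Remark~\ref{rem:diag} the slices of $P_3$ are $\diag(1,0,\ldots,0),\ldots,\diag(0,\ldots,0,1)$, which form the standard basis of the space of diagonal matrices; hence $\mathcal{V}$ is precisely that space. By part~(i), $\mathcal{U}=A^T\mathcal{V}A$. Since the identity matrix lies in $\mathcal{V}$, the matrix $A^TA$ belongs to $\mathcal{U}$, and it is nonsingular because $A$ is. Thus $\mathcal{U}$ contains an invertible matrix, i.e., it is a nonsingular subspace of $M_n(\kk)$.

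There is no real obstacle here: the substance is entirely contained in Proposition~\ref{prop:slices}, together with the elementary observation that the coefficients $a_{ik}$ appearing there assemble into the invertible matrix $A$. The only point needing a moment's care is the reverse inclusion in~(i), where invertibility of $A$ must be invoked to ensure that replacing the $S_i$ by the $D_k$ does not shrink the span.
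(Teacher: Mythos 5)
Your proof is correct, and the reasoning is sound at every step. For part (i) you take a mildly different route from the paper: where the paper establishes $\mathcal{U}\subseteq A^T\mathcal{V}A$ from Proposition~\ref{prop:slices}, then applies the same reasoning to $g(x)=f(A^{-1}x)$ to get $\mathcal{V}\subseteq A^{-T}\mathcal{U}A^{-1}$ and concludes by combining the two inclusions, you instead show the equality directly by observing that the coefficient matrix $(a_{ik})$ in $D_k=\sum_i a_{ik}S_i$ is $A$ itself, so $\mathrm{span}(D_1,\ldots,D_n)=\mathcal{V}$, and then pushing this through the bijection $X\mapsto A^TXA$. Both arguments rest entirely on Proposition~\ref{prop:slices} and the invertibility of $A$, and they are of comparable length; your version makes the span equality explicit rather than deducing it from a double inclusion, which is arguably a bit more transparent. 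You even note the paper's symmetric-inclusion argument as an alternative. For part (ii) the paper says only that it ``follows immediately'' from (i) and Remark~\ref{rem:diag}; your explicit observation that $I\in\mathcal{V}$ and hence $A^TA\in\mathcal{U}$ is exactly the intended one-liner, so this is the same argument spelled out.
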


\begin{proof}
Proposition \ref{prop:slices} shows that ${\cal U} \subseteq A^T {\cal V} A$. 
Since $g(x) = f(A^{-1}x)$, the same argument shows that  ${\cal V} \subseteq A^{-T}{\cal U} A^{-1}$. The inclusion ${\cal U} \subseteq A^T {\cal V} A$ therefore cannot be strict.
The second part of the lemma follows immediately from the first
and from Remark~\ref{rem:diag}.
\end{proof}
    For the next theorem, we recall from the beginning of
      Section~\ref{sec:tensor} that one
may take either $\kk=\rr$ or $\kk = \cc$.
\begin{theorem} \label{th:carsumofcubes}
A degree 3 form $f \in \kk[X_1,\ldots,X_n]$ is equivalent to a sum of~$n$ cubes if and only if its slices $T_1,\ldots,T_n$ span a nonsingular matrix space and 
 the slices are simultaneously diagonalizable by congruence, i.e., there exists an invertible matrix $Q \in M_n(\kk)$ such that the $n$ matrices $Q^T T_i Q$ are diagonal.
\end{theorem}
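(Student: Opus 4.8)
The plan is to translate the problem into the language of the slices via Proposition~\ref{prop:slices} and Lemma~\ref{lem:slicespan}, and then invoke the simultaneous diagonalization results of Section~\ref{sec:cmat}/\ref{sec:realmat}. First I would prove the "only if" direction. Suppose $f(x) = P_3(Ax)$ for some nonsingular $A$. The slices of $g = P_3$ are the diagonal matrices $\diag(0,\ldots,a_{kk}=1,\ldots,0)$, so by Proposition~\ref{prop:slices} the slices of $f$ satisfy $T_k = A^T D_k A$ with $D_k = \diag(a_{1k},\ldots,a_{nk})$ diagonal. Taking $Q = A$ gives $Q^T T_k Q^{?}$ — here I must be careful about the direction of the congruence. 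We have $T_k = A^T D_k A$, hence $D_k = A^{-T} T_k A^{-1} = (A^{-1})^T T_k (A^{-1})$, so the matrix realizing the simultaneous diagonalization by congruence is $Q = A^{-1}$: the matrices $Q^T T_k Q$ are all diagonal. Nonsingularity of the span is exactly Lemma~\ref{lem:slicespan}(ii).

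For the "if" direction, assume the slices $T_1,\ldots,T_n$ span a nonsingular matrix space and that there is an invertible $Q$ with $Q^T T_k Q = \Delta_k$ diagonal for all $k$. Set $R = Q^{-1}$, so $T_k = R \Delta_k R^T$. Define $g(x) = f(R^{-T}x)$; equivalently $f(x) = g(R^T x)$. By Proposition~\ref{prop:slices} applied with the change of variables matrix $R^T$, the slices of $g$ are congruent (via $R^{-T}$, which recovers the $\Delta_k$ up to the mixing $D_k = \sum_i a_{ik} S_i$) to diagonal matrices; more directly, the relation $T_k = R\Delta_k R^T$ together with Lemma~\ref{lem:slicespan}(i) forces the slices of $g$ to all be diagonal. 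By Remark~\ref{rem:diag}, $g$ is then of the form $g(x) = \alpha_1 x_1^3 + \cdots + \alpha_n x_n^3$. It remains to check that all $\alpha_i \neq 0$: if some $\alpha_i$ vanished, then the slices of $g$, hence of $f$, would span a space of matrices all of whose members are supported on the $\{j : \alpha_j \neq 0\}$ coordinates and hence all singular (size $n$ with a zero row/column), contradicting nonsingularity of the span. Finally, since every element of $\kk \in \{\rr,\cc\}$ has a cube root in $\kk$, we can absorb the $\alpha_i$ into the change of variables: writing $\alpha_i = \beta_i^3$ and rescaling, $g$ is equivalent to $P_3$, and therefore so is $f$.

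The routine bookkeeping is the precise matching of "$T_k = A^T D_k A$" in Proposition~\ref{prop:slices} with the congruence convention "$Q^T T_k Q$ diagonal" in the statement — one has to be consistent about whether the diagonalizing matrix is $A$, $A^{-1}$, $A^T$, or $A^{-T}$, and about the mixing $D_k = \sum_i a_{ik}S_i$ (which for a diagonal $g$ simply permutes/combines the diagonal entries and does not affect diagonality). The main conceptual obstacle, and the one place where the hypothesis "nonsingular span" is genuinely used, is the step ruling out vanishing $\alpha_i$: simultaneous diagonalizability by congruence alone only shows $f$ is equivalent to $\alpha_1 x_1^3 + \cdots + \alpha_n x_n^3$ for some (possibly zero) $\alpha_i$, and it is precisely the nonsingularity of the slice span — an invariant of the $\GL_n$-action by Lemma~\ref{lem:slicespan}(i) — that upgrades this to all $\alpha_i$ being nonzero, i.e. to equivalence with $P_3$ in the sense of Definition~\ref{def:equiv}. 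Everything else is a direct application of Proposition~\ref{prop:slices}, Lemma~\ref{lem:slicespan}, and Remark~\ref{rem:diag}.
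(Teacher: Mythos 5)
Your proof follows essentially the same route as the paper's: both directions rely on Proposition~\ref{prop:slices}, Lemma~\ref{lem:slicespan}, and Remark~\ref{rem:diag}, with the nonsingular-span hypothesis doing exactly the work of forcing all $\alpha_i\neq 0$, followed by extracting cube roots in $\kk\in\{\rr,\cc\}$. The one thing to fix is a transposition slip you yourself flagged as a danger spot: from $Q^T T_k Q=\Delta_k$ one gets $T_k=Q^{-T}\Delta_k Q^{-1}$, so the matrix satisfying $T_k=R\Delta_k R^T$ is $R=Q^{-T}$, not $R=Q^{-1}$ (the two congruences $Q^{-1}\Delta_k Q^{-T}$ and $Q^{-T}\Delta_k Q^{-1}$ are generally different matrices, even though both are symmetric). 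With that correction the rest of your argument, including $g(x)=f(R^{-T}x)=f(Qx)$ having diagonal slices and the contradiction from a vanishing $\alpha_i$, matches the paper's proof; your detour through Lemma~\ref{lem:slicespan}(i) to show the slices of $g$ are diagonal is a legitimate alternative to the paper's direct use of Proposition~\ref{prop:slices}, and your explicit mention of the cube-root step makes visible something the paper leaves implicit in Remark~\ref{rem:diag}.
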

\begin{proof}
Let $\cal U$ be the space spanned by  $T_1,\ldots,T_n$.
If $f$ is equivalent to a sum of $n$ cubes, Proposition \ref{prop:slices} shows that the slices of $f$ are simultaneously diagonalizable by congruence 
and Lemma \ref{lem:slicespan} shows that $\cal U$ is  nonsingular.

Let us show the converse. Since the slices are simultaneously diagonalizable by congruence, there are   diagonal matrices $\Lambda_k$ and a  nonsingular matrix 
$R \in M_n({ \kk})$ such that 
$T_k = R \Lambda_k R^T$ for all $k=1,\ldots,n$. 
Let $g(x)=f(R^{-T}x)$. By Proposition \ref{prop:slices} the slices of $g$ are linear combinations of the $\Lambda_k$, i.e., they are all diagonal.
By Remark \ref{rem:diag}, $g$ must be as in (\ref{eq:diagtensor}). 
It therefore remains to show that the coefficients $\alpha_i$ are all nonzero. This must be the case due to the hypothesis on $\cal U$.
Indeed, this hypothesis implies that the matrix space $\cal V$ spanned by the slices of $g$ is nonsingular (apply again Lemma \ref{lem:slicespan}, this time in the other direction).
But if some $\alpha_i$ vanishes,  $\cal V$  is included in the space of diagonal matrices with a 0 in the $i$-th diagonal entry.
\end{proof}

\begin{corollary} \label{cor:sumofcubes}
Let $f$ be a degree 3 form with slices $T_1,\ldots,T_n$ and assume that $T_1$ is nonsingular. Then $f$ is equivalent to a sum of $n$ cubes if and only if the $n-1$ matrices 
 $T_1^{-1}T_k$ ($k=2,\ldots,n$) commute and are diagonalizable over $\kk$.
 \end{corollary}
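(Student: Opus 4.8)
The plan is to obtain the corollary by feeding the simultaneous-diagonalization-by-congruence criterion of Theorem~\ref{th:simdiag} (for $\kk=\cc$), respectively Theorem~\ref{th:realsimdiag} (for $\kk=\rr$), into Theorem~\ref{th:carsumofcubes}. Recall that Theorem~\ref{th:carsumofcubes} says $f$ is equivalent to a sum of $n$ cubes exactly when (a) the slices $T_1,\ldots,T_n$ span a nonsingular matrix space and (b) the slices are simultaneously diagonalizable by congruence. The first thing I would observe is that, under the standing hypothesis that $T_1$ is nonsingular, condition (a) holds for free: $T_1$ lies in the span of $T_1,\ldots,T_n$, so this span contains a nonsingular matrix and is therefore nonsingular in the sense of Lemma~\ref{lem:slicespan}. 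Consequently, when $T_1$ is nonsingular, $f$ is equivalent to a sum of $n$ cubes if and only if condition (b) holds.

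It then remains to rewrite (b) using the hypothesis that $T_1$ is nonsingular. Applying Theorem~\ref{th:simdiag} with $A_1=T_1$ and $A_i=T_i$, condition (b) --- the existence of a common nonsingular $R$ and diagonal $\Lambda_i$ with $T_i=R\Lambda_iR^T$ --- holds if and only if the $n-1$ matrices $T_1^{-1}T_i$ ($i=2,\ldots,n$) form a commuting family of diagonalizable matrices. Over $\rr$ one invokes Theorem~\ref{th:realsimdiag} instead, which adds the requirement that these matrices have real eigenvalues; for a real matrix this is the same as being diagonalizable over $\rr$, so in both cases the condition is precisely that the $T_1^{-1}T_i$ commute and are diagonalizable over $\kk$. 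Chaining the two equivalences in each direction gives the corollary.

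There is essentially no genuine obstacle here: all the substance is already in Theorems~\ref{th:carsumofcubes}, \ref{th:simdiag} and~\ref{th:realsimdiag}, and the corollary is just the convenient repackaging one gets by singling out an invertible slice. The only point requiring a line of justification is the bookkeeping in the real case --- matching ``diagonalizable over $\rr$'' with ``diagonalizable with real eigenvalues'' so that the hypothesis matches Theorem~\ref{th:realsimdiag} --- together with the trivial but worth-stating remark that $T_1$ being a nonsingular element of the span of the slices is exactly what makes the nonsingular-span hypothesis of Theorem~\ref{th:carsumofcubes} automatic.
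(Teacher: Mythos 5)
Your proposal is correct and follows exactly the paper's route: combine Theorem~\ref{th:carsumofcubes} with Theorem~\ref{th:simdiag} (over $\cc$) or Theorem~\ref{th:realsimdiag} (over $\rr$), noting that a nonsingular $T_1$ makes the span condition automatic. The paper gives this as a one-line citation; your write-up simply makes the bookkeeping explicit.
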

 \begin{proof}
This follows from Theorem \ref{th:carsumofcubes}  as well as Theorem \ref{th:simdiag} for $\kk=\cc$ and Theorem \ref{th:realsimdiag} for $\kk=\rr$.
\end{proof}

We conclude this section with an alternative characterization of equivalence to a sum of cubes for the field of real numbers.
\begin{theorem} \label{th:pdsumofcubes}
Let $f$ be a real form of degree 3 and let $\cal V$ be the subspace of~$M_n(\rr)$ spanned by the slices of $f$.
The 3 following properties are equivalent:
\begin{itemize}
\item[(i)]  $f$ is equivalent as a real polynomial to a sum of $n$  cubes.
\item[(ii)]  There exist two invertible matrices $A,B \in \cal V$ such that $A^{-1} \cal V$ is a commuting subspace and $B$ is positive definite.
\item[(iii)] $\cal V$ contains a positive definite matrix, and $A^{-1} \cal V$ is a commuting subspace  for any invertible matrix $A \in \cal V$.
\end{itemize}
\end{theorem}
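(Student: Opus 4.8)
The plan is to derive Theorem~\ref{th:pdsumofcubes} by combining Theorem~\ref{th:carsumofcubes} with the positive-definite refinement Theorem~\ref{th:pd}, after checking that for a polynomial equivalent to a sum of $n$ real cubes the relevant matrix space always contains a positive definite matrix. First I would prove the implication (i)~$\Rightarrow$~(ii): if $f(x)=P_3(Ax)$, then by Proposition~\ref{prop:slices} the slices of $f$ are $T_k=A^TD_kA$ with $D_k=\diag(a_{1k},\ldots,a_{nk})$, so $\cal V = A^T\hset A$ where $\hset$ denotes the space of real diagonal matrices (using Lemma~\ref{lem:slicespan}). The space $\cal V$ therefore contains $A^T I A = A^TA$, which is positive definite; call it $B$. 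For the commuting-subspace condition, note $T_1^{-1}{\cal V} = A^{-1}\hset^{-1}\hset A \cdot (\text{up to the nonsingular }D_1)$; more cleanly, pick any invertible $A'\in\cal V$, write $A'=A^TDA$ for some invertible diagonal $D$, and then $(A')^{-1}{\cal V}=A^{-1}D^{-1}\hset A$, which is a commuting subspace because $D^{-1}\hset=\hset$ consists of mutually commuting diagonal matrices conjugated by the single matrix $A$. So (ii) holds (taking $A$ to be $T_1$, which we may assume nonsingular after possibly relabelling, or just any invertible element of $\cal V$).

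Next I would handle (ii)~$\Rightarrow$~(iii). Since $\cal V$ contains an invertible matrix it is a nonsingular subspace, so Theorem~\ref{th:invariant} applies: the existence of one invertible $A\in\cal V$ with $A^{-1}\cal V$ commuting upgrades to the same property for \emph{all} invertible $A\in\cal V$. The positive-definiteness clause carries over verbatim. The only mild subtlety is matching the indexing conventions of Theorem~\ref{th:pd} (which is stated for a spanning family $A_1,\ldots,A_k$ of $\cal V$); here the spanning family is the tuple of slices $T_1,\ldots,T_n$, so there is nothing to adjust.

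Finally, (iii)~$\Rightarrow$~(i). Assuming (iii), the space $\cal V$ spanned by the slices is nonsingular (it contains a positive definite, hence invertible, matrix), and it satisfies the two equivalent properties of Theorem~\ref{th:invariant}. Moreover, letting $B\in\cal V$ be positive definite, Lemma~\ref{lem:pd} guarantees that each $B^{-1}T_i$ is diagonalizable with real eigenvalues. Hence the hypotheses of Theorem~\ref{th:realdissociate} are met, and we conclude that there are real diagonal matrices $\Lambda_i$ and a nonsingular real $R$ with $T_i=R\Lambda_i R^T$ for all $i$; that is, the slices of $f$ are simultaneously diagonalizable by congruence. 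Combined with nonsingularity of $\cal V$, Theorem~\ref{th:carsumofcubes} then yields that $f$ is equivalent as a real polynomial to a sum of $n$ cubes, which is~(i). I expect the main point requiring care to be the first implication: verifying cleanly that $A^T A \in \cal V$ and that $A^{-1}\cal V$ is a commuting subspace for the right choice of invertible $A$, i.e., unwinding Proposition~\ref{prop:slices} to see that $\cal V = A^T\hset A$ exactly (both inclusions), after which everything else is an assembly of the already-established Theorems~\ref{th:invariant}, \ref{th:realdissociate}, \ref{th:pd} and Lemma~\ref{lem:pd}.
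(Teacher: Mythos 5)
Your proposal is correct and relies on exactly the same ingredients as the paper's proof: Proposition~\ref{prop:slices} and Lemma~\ref{lem:slicespan} to exhibit $Q^TQ$ as a positive definite element of $\cal V$, Theorem~\ref{th:invariant} for the exists-to-for-all upgrade, Lemma~\ref{lem:pd} together with Theorem~\ref{th:realdissociate} (equivalently Theorem~\ref{th:pd}) for the simultaneous diagonalization by congruence, and Theorem~\ref{th:carsumofcubes} to conclude. The only difference is cosmetic: you cycle the implications as (i)$\Rightarrow$(ii)$\Rightarrow$(iii)$\Rightarrow$(i), whereas the paper goes (i)$\Rightarrow$(iii)$\Rightarrow$(ii)$\Rightarrow$(i); the substance is the same.
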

\begin{proof}
Suppose that $f$ is  equivalent to a sum of $n$ cubes, i.e., $f(x)=P_3(Qx)$ where $Q \in M_n(\rr)$ is invertible. 
We have seen in Lemma \ref{lem:slicespan} that the slices of $P_3$ span the space $\cal D$ of diagonal matrices, and that those of $f$ span $Q^T {\cal D} Q$.
The latter span contains the positive definite matrix $B=Q^T Q$.
Moreover, according to Proposition \ref{prop:slices} the slices of $f$ are simultaneously diagonalizable by congruence. 
By Theorem \ref{th:realdissociate}, $A^{-1} \cal V$ is a commuting subspace  for any invertible matrix $A \in \cal V$.
Hence we have shown that (i) implies (iii).
That (iii) implies (ii) is clear since $\cal V$ is nonsingular (by hypothesis, it contains a positive definite matrix).
Finally, let us show that (ii) implies (i). By hypothesis, $\cal V$ contains a positive definite matrix $B$ hence we can apply Theorem \ref{th:pd}. It follows that the slices are simultaneously diagonalizable by congruence. By Theorem \ref{th:carsumofcubes}, $f$ must be equivalent to a sum of $n$ cubes.
\end{proof}
Compared to Theorem \ref{th:carsumofcubes} or Corollary \ref{cor:sumofcubes}, Theorem \ref{th:pdsumofcubes} does not involve any diagonalizability test.
One can check that $\cal V$ contains a positive definite matrix using semi-definite programming. Unfortunately, no efficient algebraic algorithm is known for semi-definite programming (famously, this is already an open problem for linear programming). For this reason, the equivalence algorithms
of this paper will be based  on Corollary \ref{cor:sumofcubes} rather than Theorem \ref{th:pdsumofcubes}.

\section{Randomized equivalence algorithm} \label{sec:random}

As a test for equivalence to a sum of $n$ cubes, Corollary \ref{cor:sumofcubes} is not quite satisfactory due to the hypothesis on $T_1$ (note indeed that this hypothesis is not even satisfied
by $f=P_3$). This restriction can be overcome by performing a random change of variables before applying Corollary \ref{cor:sumofcubes}.
This yields the following simple randomized algorithm with one-sided error. The input is a degree 3 form $f \in \kk[x_1,\ldots,x_n]$.
We recall from Section~\ref{sec:tensor}
that $\kk = \rr$ or $\kk = \cc$ (except in Proposition~\ref{diagmat} where any field
of characteristic 0 is allowed).
\begin{enumerate}
\item Pick a random matrix $R \in M_n(\kk)$ and set $h(x)=f(Rx)$.
\item Let $T_1,\ldots,T_n$ be the slices of $h$. If $T_1$ is singular, reject. Otherwise, compute $T'_1=T_1^{-1}$.
\item If  the matrices $T'_1T_k$ commute and are all diagonalizable over $\kk$, accept. Otherwise, reject.
\end{enumerate}
Before proving the correctness of this algorithm, we explain how the diagonalizability test at step 3 can be implemented efficiently with an algebraic algorithm.
This can be done thanks to the following classical result from linear algebra
(see e.g.~\cite[Corollary~3.3.8]{horn13} for the case $\mathbb{K} = \cc$).
\begin{proposition} \label{diagmat}
  Let $\mathbb{K}$ be a field of characteristic 0 and let 
  $\chi_M$ be the characteristic polynomial of a matrix $M \in M_n(\mathbb{K})$.
  Let $P_M = \chi_M / \mathrm{gcd}(\chi_M,\chi_M')$ be the squarefree part of
  $\chi_M$. The matrix $M$ is diagonalizable over $\overline{\mathbb{K}}$ iff
  $P_M(M)=0$.
    Moreover, in this case $M$ is diagonalizable over $\mathbb{K}$ iff all the
  roots of $P_M$ lie in $\mathbb{K}$.
\end{proposition}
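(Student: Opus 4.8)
The plan is to reduce the statement to two standard facts of linear algebra. First, the minimal polynomial $\mu_M$ of $M$ is unchanged when the ground field is enlarged: this follows in one line by writing any annihilating polynomial $p \in L[X]$ (for an extension $L \supseteq \mathbb{K}$) as $p = \sum_\alpha e_\alpha p_\alpha$ in terms of a $\mathbb{K}$-basis $\{e_\alpha\}$ of $L$, and using that the matrices $p_\alpha(M) \in M_n(\mathbb{K})$ must then all vanish by $\mathbb{K}$-linear independence. Second, for any extension $L \supseteq \mathbb{K}$, a matrix $M \in M_n(\mathbb{K})$ is diagonalizable over $L$ if and only if $\mu_M$ factors into \emph{distinct} linear factors over $L$ --- equivalently, $\mu_M$ is squarefree and splits over $L$ (this is classical; Horn--Johnson, Corollary~3.3.8, gives it for $L = \cc$, and the general case is identical).

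Next I would identify $P_M$ with the radical of $\chi_M$. Writing $\chi_M = \prod_i p_i^{e_i}$ with the $p_i$ distinct monic irreducibles over $\mathbb{K}$, the characteristic-$0$ hypothesis ensures each $p_i$ is separable, hence $\gcd(p_i, p_i') = 1$; the usual computation then gives $\gcd(\chi_M, \chi_M') = \prod_i p_i^{e_i-1}$ and therefore $P_M = \prod_i p_i = \operatorname{rad}(\chi_M)$. Since $\chi_M$ and $\mu_M$ have exactly the same monic irreducible factors over $\mathbb{K}$ (an irreducible $p$ divides either one precisely when it has a root that is an eigenvalue of $M$), we get $P_M = \operatorname{rad}(\mu_M)$, and in particular $P_M$ divides $\mu_M$. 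It follows that $P_M(M) = 0$ iff $\mu_M \mid P_M$ iff $\mu_M = P_M$ iff $\mu_M$ is squarefree.

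With this in hand the first assertion is immediate: since every polynomial splits over $\overline{\mathbb{K}}$ and, in characteristic $0$, a polynomial is squarefree iff it has no repeated root in $\overline{\mathbb{K}}$, the second fact above (with $L = \overline{\mathbb{K}}$) shows that $M$ is diagonalizable over $\overline{\mathbb{K}}$ iff $\mu_M$ is squarefree, which by the previous paragraph is iff $P_M(M) = 0$. For the second assertion I would assume $M$ is diagonalizable over $\overline{\mathbb{K}}$, so that $\mu_M = P_M$ is squarefree; applying the second fact with $L = \mathbb{K}$ then shows that $M$ is diagonalizable over $\mathbb{K}$ iff $P_M$ splits over $\mathbb{K}$, i.e.\ iff all roots of $P_M$ lie in $\mathbb{K}$. (Should one wish to avoid invoking that fact a second time, one can argue by hand: the eigenspaces $\ker(M - \lambda I)$ for $\lambda \in \mathbb{K}$ are defined over $\mathbb{K}$ and their $\mathbb{K}$-dimensions equal their $\overline{\mathbb{K}}$-dimensions because the rank of $M - \lambda I$ is field-independent; since $M$ is diagonalizable over $\overline{\mathbb{K}}$ and every eigenvalue lies in $\mathbb{K}$, these dimensions sum to $n$, so $M$ is diagonalizable over $\mathbb{K}$; the converse is trivial, as the diagonal entries of any diagonalization are the eigenvalues, which are the roots of $P_M$.)

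I do not expect a real obstacle here --- the result is entirely routine. The only two points worth stating carefully, and where the hypotheses are actually used, are: (i) the characteristic-$0$ assumption, which serves precisely to make irreducible polynomials separable, so that the algebraic squarefree part $\chi_M/\gcd(\chi_M,\chi_M')$ coincides both with the radical of $\chi_M$ and with ``no repeated roots over $\overline{\mathbb{K}}$'' --- both identifications fail in positive characteristic; and (ii) the invariance of the minimal polynomial, hence of the diagonalizability criterion, under field extension, which is what licenses comparing the behaviour of $M$ over $\mathbb{K}$ and over $\overline{\mathbb{K}}$.
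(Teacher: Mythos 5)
Your proof is correct. The paper does not give its own argument --- it states the proposition as a classical fact and cites Horn--Johnson, Corollary~3.3.8 for the case $\mathbb{K}=\cc$ --- and your minimal-polynomial route (invariance of $\mu_M$ under field extension, $P_M = \operatorname{rad}(\mu_M)$ in characteristic $0$, and the standard diagonalizability criterion) is exactly the textbook argument being alluded to, so the two coincide.
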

Over the field of complex numbers it therefore suffices to check that $P_M(M)=0$. Over $\rr$, we need to check additionally that all the roots of $P_M$ are real.
This can be done for instance with the help of Sturm sequences, which can be used to compute the number of roots of a real polynomial on any real (possibly unbounded) interval. Alternatively, the number of real roots of a real polynomial can be obtained through Hurwitz determinants 
\cite[Corollary 10.6.12]{rahman2002}, and is given by the signature of the Hermite quadratic form \cite[Theorem 4.48]{Basu06}.
The arithmetic cost of these methods is polynomially bounded, and they can also be implemented to run in polynomial time in the bit model.\footnote{For Sturm sequences
this is not obvious because in a naive implementation, the bit size of the numbers involved may grow exponentially. 
There is however an efficient implementation based on subresultants \cite{Basu06}. The same issue of coefficient growth already occurs in the computation of the gcd of two polynomials, and can also be solved with subresultants.}
\begin{theorem} \label{th:random}
If an input $f \in \kk[X_1,\ldots,X_n]$ is accepted by a run of the above randomized algorithm then $f$ must be equivalent to a sum of $n$ cubes.

Conversely, if  $f$ is equivalent to a sum of $n$ cubes then $f$ will be accepted with high probability over the choice of the random matrix $R$ at step 1.
More precisely, if the entries $r_{ij}$ are chosen independently at random from a finite set $S$ the input will be accepted with probability at least $1-2n/|S|$.
\end{theorem}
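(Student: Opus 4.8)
The plan is to establish the two directions of the theorem separately, with the first (soundness) being essentially immediate and the second (completeness with the probability bound) requiring the actual work. For soundness: if the algorithm accepts, then at step~3 the matrices $T'_1 T_k$ commute and are diagonalizable over $\kk$, and $T_1$ is nonsingular. By Corollary~\ref{cor:sumofcubes} applied to $h$, the form $h$ is equivalent to a sum of $n$ cubes; since $h(x) = f(Rx)$ with $R$ invertible (note that if $R$ were singular, $T_1$ would be singular too, so acceptance forces $R$ invertible), $f$ is equivalent to $h$ and hence to a sum of $n$ cubes. This uses only that equivalence is transitive and symmetric.

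For the converse, suppose $f(x) = P_3(Ax)$ with $A$ invertible. Then $h(x) = f(Rx) = P_3(ARx)$, so $h$ is equivalent to $P_3$ via the matrix $AR$. First I would argue that the algorithm accepts \emph{whenever $AR$ is invertible and the first slice $T_1$ of $h$ is nonsingular}: under these hypotheses $AR$ is invertible, so by Proposition~\ref{prop:slices} the slices of $h$ are simultaneously diagonalizable by congruence, whence by the ``only if'' part of Corollary~\ref{cor:sumofcubes} (whose hypothesis $T_1$ nonsingular now holds) the matrices $T'_1 T_k$ commute and are diagonalizable over $\kk$ — so step~3 accepts. Thus it suffices to bound the probability that either $R$ is singular or $T_1$ is singular.

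The core of the argument is therefore a Schwartz–Zippel-style estimate. The key point is that the first slice of $h$ is, by Proposition~\ref{prop:slices} applied with $g = P_3$ and change of variables $AR$, the matrix $(AR)^T D_1 (AR)$ where $D_1 = \diag((AR)_{11}, \ldots, (AR)_{n1})$, i.e., built from the first \emph{column} of $AR$. Hence $\det T_1 = \det(AR)^2 \cdot \prod_{j=1}^n (AR)_{j1}$. Now regard the entries $r_{ij}$ of $R$ as indeterminates; then $\det R$ is a nonzero polynomial of degree $n$, and each $(AR)_{j1} = \sum_\ell a_{j\ell} r_{\ell 1}$ is a nonzero linear form in the $r_{\ell 1}$ (nonzero because $A$ is invertible, so no row of $A$ vanishes). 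The product $\det R \cdot \det(AR) \cdot \prod_j (AR)_{j1}$ is thus a nonzero polynomial in the $r_{ij}$. Its degree: $\det R$ contributes $n$, $\det(AR) = \det A \cdot \det R$ contributes another $n$, and $\prod_j (AR)_{j1}$ contributes $n$ more — but I would instead bound more carefully to get the stated $2n$. In fact acceptance is guaranteed once $R$ is invertible ($\det R \neq 0$, degree $n$) \emph{and} $\prod_j (AR)_{j1} \neq 0$ (degree $n$), since these two conditions already force $T_1$ nonsingular via the displayed factorization of $\det T_1$. So the ``bad'' event is contained in the zero set of a polynomial of degree at most $2n$, and by the Schwartz–Zippel lemma, choosing the $r_{ij}$ independently and uniformly from $S$, the probability of landing in this zero set is at most $2n/|S|$. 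Hence $f$ is accepted with probability at least $1 - 2n/|S|$.

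The main obstacle — really the only subtle point — is getting the degree bound exactly $2n$ rather than $3n$: one must observe that once $R$ is invertible, the factor $\det(AR)^2$ in $\det T_1$ is automatically nonzero, so it need not be included in the vanishing polynomial, and the bad locus is cut out by $\det R \cdot \prod_{j=1}^n (AR)_{j1}$, of degree $n + n = 2n$. Everything else is bookkeeping with Proposition~\ref{prop:slices}, Corollary~\ref{cor:sumofcubes}, and transitivity of equivalence.
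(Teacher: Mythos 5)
Your proof is correct and follows essentially the same strategy as the paper's: soundness via Corollary~\ref{cor:sumofcubes} together with the observation that $T_1$ nonsingular forces $R$ nonsingular, and completeness via Proposition~\ref{prop:slices} plus a Schwartz--Zippel bound on the nonvanishing of $\det T_1$. The only (immaterial) difference is that you factor $\det T_1 = \det(AR)^2\prod_j (AR)_{j1}$ by applying Proposition~\ref{prop:slices} directly to $h = P_3\circ(AR)$, whereas the paper factors $\det T_1 = \det(R)^2\det D_1$ with $D_1=\sum_i r_{i1}S_i$ and invokes Lemma~\ref{lem:slicespan}(ii) to see $\det D_1\not\equiv 0$; both give a bad locus of degree $2n$ and the stated bound $1-2n/|S|$.
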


\begin{proof}
Assume that $f$ is accepted for some choice of $R \in M_n(\cc)$. Since $T_1$ is invertible, it follows from Proposition \ref{prop:slices} that $R$ must be invertible as well.
Moreover, $h$ must be equivalent to a sum of $n$ cubes by Corollary \ref{cor:sumofcubes}. The same is true of $f$ since $f(x)=h(R^{-1}x)$.

For the converse, assume that $f$ is equivalent to a sum of $n$ cubes. We can obtain the slices $T_k$ of $h$ from the slices $S_k$ of $f$ by Proposition \ref{prop:slices}, 
namely, we have $T_k=R^TD_kR$ where $D_k=\sum_{i=1}^n r_{ik} S_i$ and the $r_{ik}$ are the entries of~$R$.
Therefore $T_1$ is invertible iff $R$ and $D_1$ are invertible. By Lemma \ref{lem:slicespan}.(ii)  there is a way to choose the entries $r_{i1}$ so that $D_1$ 
is invertible. In fact, $D_1$ will be invertible for most choices of these entries. This follows from the fact that as a polynomial in the entries $r_{11},\ldots, r_{n1}$, 
$\det(D_1)$ is not identically zero. Therefore, by the Schwartz-Zippel lemma $D_1$ will fail to be invertible with probability at most $n/|S|$. Likewise, $R$ will fail to be invertible 
with probability at most  $n/|S|$ and the result follows from the union bound.
\end{proof}

\begin{remark}
In Theorem \ref{th:random} and in the corresponding algorithm, we can reduce the amount of randomness by picking random matrices $R$ of the following special form:
$R$ is lower triangular with 1's on the diagonal (except possibly for $r_{11}$), 
$r_{11},\ldots,r_{n1}$ are drawn independently and uniformly from $S$, and all the other entries are set to 0.
The same analysis as before shows that $D_1$ will fail to be invertible with probability at most  $n/|S|$.
Moreover, $R$ will fail to be invertible with probability at most $1/|S|$ since $\det(R)=r_{11}$. By the union bound, $f$ will be accepted with probability at least $1-(n+1)/|S|$.
\end{remark}
We will use a similar construction in the deterministic algorithm of the next section.

\section{Deterministic equivalence algorithm} \label{sec:deter}

In the analysis of our randomized algorithm we have invoked the Schwartz-Zippel lemma to argue that a polynomial of the form 
$$H(r_1,\ldots,r_n)=\det(r_1S_1+\ldots+r_nS_n)$$ does not vanish for most of the random choices $r_1,\ldots,r_n$ (recall from the proof of Theorem~\ref{th:random} that $S_1,\ldots,S_n$ denoted the slices of $f$).
In this section we will obtain obtain our deterministic equivalence algorithm by derandomizing this step. Namely, we will use the fact that we are not trying to solve an
arbitrary instance of symbolic determinant identity testing: as it turns out, the polynomial $H$ can be factored as a product of linear forms.
This fact was already at the heart of Kayal's equivalence algorithm. Indeed, his algorithm is based on the factorization of the Hessian determinant of $f$ 
\cite[Lemma 5.2]{kayal11}
and as pointed out in \cite{koiran2019ortho}, the symbolic matrix $r_1S_1+\ldots+r_nS_n$ is a constant multiple of the Hessian.
The point where we depart form Kayal's algorithm is that we do not explicitly factor $H$ as a product of linear forms (recall indeed that this is not an algebraic step).
Instead, we will use the {\em existence} of such a factorization to find deterministically a point where $H$ does not vanish. 
We can then conclude as in the previous section.

First, we formally state this property of $H$ as a lemma and for the sake of completeness we show that it follows from Proposition \ref{prop:slices} (one can also make this argument in the opposite direction, see Section 2.1 of \cite{koiran2019ortho} for details).
\begin{lemma} \label{lem:product}
  Let $f$ be a degree 3 form with slices $S_1,\ldots,S_n$
  and let $H(x_1,\ldots,x_n)= \det(x_1S_1+\ldots+x_nS_n)$.
  If $f$ is equivalent to a sum of~$n$ cubes then $H$ is not identically 0 and can be factored as a product of $n$ linear forms.
\end{lemma}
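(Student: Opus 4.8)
The plan is to deduce the statement directly from Proposition~\ref{prop:slices}, exactly as the hint in the paragraph preceding the lemma suggests. Suppose $f$ is equivalent to a sum of $n$ cubes, so $f(x) = P_3(Ax)$ for some nonsingular $A \in M_n(\kk)$, where $P_3 = x_1^3 + \cdots + x_n^3$. The slices of $P_3$ are the matrices $E_i = \diag(0,\ldots,0,1,0,\ldots,0)$ with the $1$ in position $i$ (this is the $g$ as in~(\ref{eq:diagtensor}) with all $\alpha_i = 1$, cf.\ Remark~\ref{rem:diag}). By Proposition~\ref{prop:slices}, the slices $S_1,\ldots,S_n$ of $f$ are given by $S_k = A^T D_k A$ where $D_k = \sum_{i=1}^n a_{ik} E_i = \diag(a_{1k},\ldots,a_{nk})$.

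Now I would simply compute $x_1 S_1 + \cdots + x_n S_n$. By linearity and the formula above,
$$
x_1 S_1 + \cdots + x_n S_n = A^T \Bigl( \sum_{k=1}^n x_k D_k \Bigr) A = A^T \diag(\ell_1,\ldots,\ell_n)\, A,
$$
where $\ell_i = \ell_i(x_1,\ldots,x_n) = \sum_{k=1}^n a_{ik} x_k$ is a linear form (the $i$-th coordinate of $A^T$ applied to $x$, equivalently the $i$-th row of $A$ paired with $x$). Taking determinants and using $\det(A^T) = \det(A)$,
$$
H(x_1,\ldots,x_n) = \det(x_1 S_1 + \cdots + x_n S_n) = \det(A)^2 \prod_{i=1}^n \ell_i(x_1,\ldots,x_n).
$$
This exhibits $H$ as a product of $n$ linear forms. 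Since $A$ is nonsingular, $\det(A)^2 \neq 0$, and moreover the linear forms $\ell_i$ are linearly independent (their coefficient vectors are the rows of the invertible matrix $A$), so none of them is the zero form; hence the product is not identically zero, i.e.\ $H \not\equiv 0$.

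There is essentially no obstacle here: the entire content is the change-of-variables formula for slices, which is already available as Proposition~\ref{prop:slices}. The only thing to be careful about is bookkeeping with transposes --- i.e.\ whether the relevant linear forms come from the rows of $A$ or of $A^{-1}$, and whether the prefactor is $\det(A)$ or $\det(A)^2$ --- but none of this affects the conclusion, which only asserts nonvanishing and the product-of-linear-forms structure. (One could alternatively invoke Lemma~\ref{lem:hessian} together with the remark from~\cite{koiran2019ortho} that $x_1 S_1 + \cdots + x_n S_n$ is a scalar multiple of the Hessian matrix of $f$, but the self-contained computation above is shorter and matches the "for the sake of completeness" framing of the lemma.)
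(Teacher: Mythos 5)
Your proof is correct and follows essentially the same route as the paper: both invoke Proposition~\ref{prop:slices} to write $x_1 S_1 + \cdots + x_n S_n = A^T \diag(\ell_1,\ldots,\ell_n) A$ and then take determinants to obtain $H = (\det A)^2 \prod_i \ell_i$. The only difference is that you spell out a few of the intermediate steps that the paper compresses.
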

\begin{proof}
Let $A$ be the invertible matrix such that $f(x)=P_3(Ax)$. 
By Proposition \ref{prop:slices}, $H(x)=(\det A)^2 \det D(x)$ where $$D(x)=\sum_{k=1}^n x_k D_k = \diag(a_{11}x_1+\cdots+a_{1n}x_n,\ldots,a_{n1}x_1+\cdots+a_{nn}x_n).$$
This gives the required factorization. In particular, $H$ is nonzero since $A$ is invertible. 
\end{proof}
The non vanishing of $H$ means that the slices span a nonsingular matrix space. We have given in Theorem \ref{th:carsumofcubes} a slightly different proof of the fact that this space is indeed nonsingular when $f$ is equivalent to a sum of $n$ cubes.
By Lemma \ref{lem:product}, the zero set of $H$ is a union of $n$ hyperplanes. We can avoid the union of any finite number of hyperplanes by a standard construction
involving the {\em moment curve} $\gamma(t)=(1,t,t^2,\ldots,t^{n-1})$.
\begin{lemma} \label{lem:moment}
Let $M \subseteq \cc^n$ be a set of  $(n-1)p+1$ points on the moment curve. For any set of $p$ hyperplanes $H_1,\ldots,H_p \subseteq \cc^n$ there is at least one point
of $M$ which does not belong to any of the $H_i$.
 \end{lemma}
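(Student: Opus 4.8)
The plan is to exploit the fact that a hyperplane through the origin can contain only very few points of the moment curve. Write $M = \{\gamma(t_0), \dots, \gamma(t_{(n-1)p})\}$ for distinct scalars $t_0, \dots, t_{(n-1)p}$; this is possible because $\gamma$ is injective (distinct parameters give distinct second coordinates). Each $H_i$ is the zero set of a nonzero linear form $\ell_i(x) = \sum_{k=1}^n c_{i,k} x_k$, which in the application has no constant term, since the $H_i$ are the linear factors of the homogeneous polynomial $H$ of Lemma~\ref{lem:product}. The object to look at is the univariate polynomial $q_i(t) = \ell_i(\gamma(t)) = \sum_{k=1}^n c_{i,k}\, t^{k-1}$, which has degree at most $n-1$.

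First I would observe that $q_i$ is not the zero polynomial: the monomials $1, t, \dots, t^{n-1}$ are linearly independent, and $(c_{i,1}, \dots, c_{i,n}) \neq 0$ because $\ell_i \neq 0$. Hence $q_i$ has at most $n-1$ roots in $\cc$, and since $\gamma(t_j) \in H_i$ if and only if $q_i(t_j) = 0$, at most $n-1$ of the points of $M$ lie on $H_i$. Summing over $i = 1, \dots, p$, at most $p(n-1)$ points of $M$ lie in $H_1 \cup \dots \cup H_p$; since $|M| = (n-1)p + 1 > p(n-1)$, at least one point of $M$ avoids all the hyperplanes, which is the claim.

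There is no serious obstacle; the only point requiring care is the non-vanishing of the polynomials $q_i$, which is precisely the assertion that no $H_i$ contains the whole moment curve. For homogeneous hyperplanes this is automatic, as a nonzero homogeneous linear form cannot vanish identically on $\gamma$, so the lemma applies directly in the setting where the $H_i$ come from the factorization of $H$. For an arbitrary affine hyperplane one would additionally have to exclude the degenerate case $H_i \supseteq \gamma(\cc)$ (for instance $H_i = \{x_1 = 1\}$), but this situation does not occur here.
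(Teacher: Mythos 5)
Your proof is correct and takes the same route as the paper: parametrize $M$ by $\gamma(t_j)$, observe that each linear form pulled back to the moment curve is a nonzero univariate polynomial of degree at most $n-1$ with at most $n-1$ roots, and count. Your additional remark that the hyperplanes must be linear (not affine) for the polynomial $q_i$ to be nonzero is a legitimate clarification of the lemma's implicit hypothesis, which the paper leaves tacit because its applications only ever involve zero sets of homogeneous linear forms.
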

 \begin{proof}
 Let $l_i(x_1,\ldots,x_n)=0$ be the equation of $H_i$. The moment curve has at most $n-1$ intersections with $H_i$ since $l_i(1,t,t^2,\ldots,t^{n-1})$ is a nonzero
 polynomial of degree $n-1$. For the $p$ hyperplanes we therefore have a grand total of $p(n-1)$ intersection points at most.
 \end{proof}
 The size of $M$ in this lemma is the smallest that can be achieved in such a blackbox construction. Indeed, for any set of $(n-1)p$ points one can always find 
 a set of $p$ hyperplanes which covers them all.
 
We can now describe our deterministic algorithm. As in Section \ref{sec:random}  the input is a degree 3 form $f(x_1,\ldots,x_n)$ with slices $S_1,\ldots,S_n$.
\begin{enumerate}
\item Pick an arbitrary set $M$ of $n(n-1)+1$ points on the moment curve.
\item Enumerate the elements of $M$ to find a point $r=(1,r_2,\ldots,r_n) \in M$ such that the matrix $D_1=S_1+r_2S_2\ldots+r_nS_n$ is invertible. 
If there is no such point, reject.
\item Construct the following matrix $R \in M_n(\kk)$: $R$ is lower  triangular with 1's on the diagonal,
$r_{21}=r_2,\ldots,r_{n1}=r_n$  and all the other entries are set to 0.
\item Compute $h(x)=f(Rx)$, the slices $T_1,\ldots,T_n$ of $h$ and $T'_1=T_1^{-1}$.
\item  If  the matrices $T'_1T_k$ commute and are all diagonalizable, accept. Otherwise, reject.
\end{enumerate}

\begin{theorem}
A degree 3 form $f(x_1,\ldots,x_n)$ is accepted by the above algorithm if and only if $f$ is equivalent to a sum of $n$ cubes.
\end{theorem}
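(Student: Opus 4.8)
\emph{Plan.} The proof will deduce everything from Corollary~\ref{cor:sumofcubes}, applied not to $f$ itself but to the transformed form $h(x)=f(Rx)$, after observing that the matrix $R$ constructed at step~3 is always invertible: it is lower triangular with $1$'s on the diagonal, so $\det R=1$. Hence $f$ and $h$ are equivalent, and $f$ is equivalent to a sum of $n$ cubes if and only if $h$ is. Moreover, by Proposition~\ref{prop:slices} the slices of $h$ are $T_k=R^TD_kR$ with $D_k=\sum_i r_{ik}S_i$; since the first column of $R$ is $(1,r_2,\dots,r_n)$ we get $D_1=S_1+r_2S_2+\dots+r_nS_n$, which is exactly the matrix tested for invertibility at step~2.

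For the ``accept $\Rightarrow$ equivalent'' direction I would argue as follows. If the algorithm accepts, it reached step~5, which means step~2 found $r\in M$ with $D_1$ invertible; then $T_1=R^TD_1R$ is invertible, so $T'_1=T_1^{-1}$ at step~4 is well defined. Step~5 certifies that the matrices $T_1^{-1}T_k$ ($k=2,\dots,n$) commute and are diagonalizable over $\kk$ (this is precisely what Proposition~\ref{diagmat} lets us test algebraically). By Corollary~\ref{cor:sumofcubes}, $h$ is equivalent to a sum of $n$ cubes, and therefore so is $f(x)=h(R^{-1}x)$. In particular, if $f$ is \emph{not} equivalent to a sum of $n$ cubes the algorithm cannot accept.

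For the converse I would use the factorization property of the symbolic determinant. Assume $f$ is equivalent to a sum of $n$ cubes. By Lemma~\ref{lem:product}, $H(x_1,\dots,x_n)=\det(x_1S_1+\dots+x_nS_n)$ is not identically zero and is a product of $n$ linear forms, so its zero set lies in a union of $n$ hyperplanes. Every point of the moment curve $\gamma(t)=(1,t,\dots,t^{n-1})$ has first coordinate $1$, so a point $r\in M$ with $H(r)\ne 0$ has the form $(1,r_2,\dots,r_n)$ and yields an invertible $D_1=S_1+r_2S_2+\dots+r_nS_n$. Since $|M|=n(n-1)+1=(n-1)\cdot n+1$, Lemma~\ref{lem:moment} with $p=n$ guarantees such a point exists, so the algorithm does not reject at step~2. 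As $R$ is invertible, $h=f(R\,\cdot\,)$ is also equivalent to a sum of $n$ cubes, and $T_1=R^TD_1R$ is invertible; Corollary~\ref{cor:sumofcubes} then states precisely that the matrices $T_1^{-1}T_k$ commute and are diagonalizable over $\kk$, so step~5 accepts.

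I do not expect any serious obstacle: the entire derandomization content is already packaged into Lemmas~\ref{lem:product} and~\ref{lem:moment}, and the rest is bookkeeping with Proposition~\ref{prop:slices} and Corollary~\ref{cor:sumofcubes}. The only points requiring a little care are (i) that the normalization $r_1=1$ forced by the moment curve is compatible with the shape of $R$ chosen at step~3, which keeps $\det R=1$ and hence $R$ invertible in all cases, and (ii) that ``diagonalizable'' at step~5 is understood as diagonalizability over $\kk$, matching the hypothesis of Corollary~\ref{cor:sumofcubes} and the test provided by Proposition~\ref{diagmat}.
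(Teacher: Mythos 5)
Your proof is correct and follows essentially the same route as the paper: invertibility of $R$ (lower triangular with unit diagonal), Lemma~\ref{lem:product} plus Lemma~\ref{lem:moment} for the converse direction, and Corollary~\ref{cor:sumofcubes} applied to $h(x)=f(Rx)$ in both directions. You merely spell out a few details the paper leaves implicit, such as the compatibility between the moment-curve normalization $r_1=1$ and the form of $R$ at step~3.
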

\begin{proof}
As a preliminary observation, we note that if the algorithm reaches step 4 the matrix $T'_1$ is well-defined since $T_1$ is invertible. Indeed, we have seen in the proof of Theorem \ref{th:random} that $T_1=R^TD_1R$; moreover, $D_1$ is invertible since the algorithm has not failed at step 2 and $R$ is clearly invertible as well.

Suppose now that an input $f(x_1,\ldots,x_n)$ is accepted by the algorithm. 
The same argument as in the proof of Theorem \ref{th:random} shows that $f$ is equivalent to a sum of $n$ cubes. Namely,  $h$ must be equivalent to a sum of $n$ cubes by Corollary \ref{cor:sumofcubes}. The same is true of $f$ since $R$ is an invertible matrix.

For the converse, suppose that  an input $f(x_1,\ldots,x_n)$ is equivalent to a sum of $n$ cubes. By Lemma \ref{lem:product} and Lemma \ref{lem:moment}, there exists
a point $r \in M$ where the polynomial $H(r)=\det(r_1S_1+\ldots+r_nS_n)$ does not vanish. As a result, the algorithm will not reject at step 2.
Since the matrix $R$ constructed at step 3 is invertible, the polynomial $h$ at step 4 is equivalent to a sum of $n$ cubes and the algorithm will accept at step 5 by 
Corollary \ref{cor:sumofcubes}.
\end{proof}

\begin{remark}
Some of the results in the paper by Ivanyos and { Qiao} \cite{ivanyos19} mentioned after Proposition \ref{prop:slices} are motivated by an application to symbolic determinant identity testing (SDIT). 
In our setting we only need to consider very simple determinants (as explained at the beginning of this section, they factor as a product of linear forms).
As a result we can use the simple black box solution provided by Lemma \ref{lem:moment}. More connections between group actions and SDIT can be found in \cite{garg16,ivanyos17a,ivanyos17b}.
\end{remark}

\section{Polynomial Identity Testing} \label{sec:PIT}

It is a basic fact that  black box PIT for a class of polynomials $\cal C$ is equivalent to constructing a hitting set for $\cal C$, i.e., a set of points $H$ such that every polynomial
in $ \cal C$ which vanishes on all points of $H$ must vanish identically. Indeed, from a hitting set we obtain a black box PIT algorithm by querying the input polynomial $f$
at all points of $H$. Conversely, for any black box PIT algorithm the set of points queried on the input $f \equiv 0$ must form a hitting set. Note that the validity of this
simple argument depends on the hypothesis that $0 \in \cal C$ (otherwise we can declare that $f {\not \equiv} 0$ without making any query).

In this section we first consider the following scenario. An algorithm is provided with black box access to a polynomial $f$ that is either identically 0 or equivalent to $P_d$,
and must decide in which of these two categories its input falls (note that these are indeed two disjoint cases).
This is equivalent to constructing a hitting set for the equivalence class of $P_d$, a task that we carry out in Section \ref{sec:hitting_equiv}.
Then in Section~\ref{sec:fewer}
  we generalize this hitting set construction to a
  larger class of polynomials, namely, those that can be written as sums
  of $d$-th powers of linearly independent linear forms.

\subsection{A hitting set for the equivalence class of $P_d$} \label{sec:hitting_equiv}

Here we construct a polynomial size hitting set for the set of polynomials $f \in \cc[X_1,\ldots,X_n]$ that are equivalent to $P_d$.
{  Let $S = \{s_0,s_1,\ldots,s_d\}$ be a set of $d+1$ complex numbers with $s_0=0$.
We denote by $S_i$ the set of points $(x_1,\ldots,x_n) \in \cc^n$ with $x_i \in S$ and all other coordinates equal to 0. 
Pick an arbitrary nonzero point $p \in \cc^n$, and form the set $G_p = p+\bigcup_{i=1}^n S_{i},$ of size $nd+1$.

\begin{proposition} \label{prop:hitting_equiv}
For any $d \geq 2$ and any nonzero point $p \in \cc^n$, $G_p$ is a hitting set for the set of polynomials $f \in \cc[X_1,\ldots,X_n]$ that are equivalent to $P_d$.
\end{proposition}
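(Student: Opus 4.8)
The plan is to prove the (formally stronger) statement that \emph{no} polynomial equivalent to $P_d$ vanishes on all of $G$. Since a polynomial equivalent to $P_d$ is never identically zero while $0$ does vanish on $G$, this yields that $G$ is a hitting set for the class $\{0\}\cup\{f:f\text{ equivalent to }P_d\}$, which is what the black box scenario of this section requires.

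Fix $f$ equivalent to $P_d$, say $f(x)=P_d(Ax)=\sum_{i=1}^n l_i(x)^d$ with $l_1,\dots,l_n$ the linearly independent linear forms given by the rows of the invertible matrix $A$. First I would single out a convenient base point: by Lemma~\ref{lem:hessian} we have $H_f=c\prod_{i=1}^n l_i^{\,d-2}$ with $c\neq 0$, so the zero set of $H_f$ lies in the union of the $n$ hyperplanes $\{l_i=0\}$. As $M$ consists of exactly $(n-1)n+1=n(n-1)+1$ points on the moment curve, Lemma~\ref{lem:moment} (with $p=n$) supplies a point $q\in M$ outside all of these hyperplanes, i.e.\ with $H_f(q)\neq 0$; equivalently, the Hessian matrix of $f$ is nonsingular at $q$.

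The heart of the argument is to show $f$ cannot vanish identically on $G_q=q+\bigcup_{j=2}^n S_{1,j}$. Suppose it does. With $e_1,\dots,e_n$ the standard basis, $q+S_{1,j}=\{q+se_1+te_j:s,t\in S\}$, so for each $j\in\{2,\dots,n\}$ the polynomial $\phi_j(s,t):=f(q+se_1+te_j)$ vanishes on the grid $S\times S$; since the coordinates of $q+se_1+te_j$ are affine in $(s,t)$ and $\deg f=d$, $\phi_j$ has total degree at most $d$, while $|S|=d+1$. The standard grid interpolation argument (fix $t\in S$: a degree-$\le d$ univariate polynomial with $d+1$ roots is $0$; then fix arbitrary $s\in\cc$ and repeat in $t$) forces $\phi_j\equiv 0$ \emph{as a polynomial}. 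Differentiating this identity at $(s,t)=(0,0)$ gives $\frac{\partial^2 f}{\partial x_1^2}(q)=0$ and $\frac{\partial^2 f}{\partial x_1\partial x_j}(q)=0$ for all $j=2,\dots,n$, so the entire first row of the Hessian matrix of $f$ at $q$ vanishes; that matrix is then singular, contradicting $H_f(q)\neq 0$. Hence $f$ is nonzero somewhere on $G_q\subseteq G$, which completes the proof.

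I do not anticipate a genuine obstacle --- each ingredient (Lemmas~\ref{lem:hessian} and~\ref{lem:moment}, grid interpolation, the determinant of a matrix with a zero row) is routine --- but the one step to handle with care is the interpolation: it must conclude that $\phi_j$ vanishes as a polynomial, not merely on $S\times S$, since that is exactly what lets us differentiate to read off Hessian entries. As an aside, the Hessian can be bypassed entirely: the same interpolation shows $\nabla f$ vanishes along the whole line $q+\cc e_1$ for \emph{every} $q$, whereas $\nabla f(y)=d\,A^T\bigl((Ay)_1^{d-1},\dots,(Ay)_n^{d-1}\bigr)^T$ vanishes only at $y=0$; as a line is not a point, this already yields the contradiction and shows that even a single block $G_q$ is a hitting set.
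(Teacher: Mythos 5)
Your main argument is correct and follows the paper's proof essentially step for step: use the factorization of $H_f$ (Lemma~\ref{lem:hessian}) and Lemma~\ref{lem:moment} to find $q\in M$ with $H_f(q)\neq 0$, then a grid-interpolation argument on the blocks $q+S_{1,j}$ to force second-order partials of $f$ at $q$ to vanish. One small point in your favour: you derive that the \emph{entire} first row of the Hessian, including the diagonal entry $\partial^2 f/\partial x_1^2(q)$, is zero, whereas the paper's phrasing ``there exists $j$ such that $(\partial^2 f/\partial x_1\partial x_j)(p)\neq 0$'' then jumps to $p+S_{1,j}$, which is only defined for $j\geq 2$. If the unique nonzero entry of the first row were the $(1,1)$ entry, the paper would need the extra remark that $\partial^2 f/\partial x_1^2(p)\neq 0$ is still detected on $p+S_{1,j'}$ for any $j'\geq 2$ via the $\partial^2/\partial s^2$ coefficient; your formulation makes this automatic.

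Your aside is a genuinely different and, in fact, stronger route. Discarding the Hessian, you use only that $\nabla f(y)=d\,A^T\bigl((Ay)_1^{d-1},\dots,(Ay)_n^{d-1}\bigr)^T$ vanishes solely at $y=0$ when $A$ is invertible. If $f$ vanished on all of $G_q$, the same grid interpolation gives $\phi_j\equiv 0$ for all $j\geq 2$, from which $\nabla f$ vanishes on the entire line $q+\cc e_1$; a line is not a single point, contradiction. This needs neither the Hessian factorization nor the moment curve, so a single block $G_q$ (for an arbitrary $q$, not even one chosen from $M$) is already a hitting set, of size $O(nd^2)$ rather than $O(n^3d^2)$. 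The paper's remark after the proposition makes precisely this kind of reduction for $d=2$ (replacing $M$ by a single arbitrary point); your gradient argument shows the reduction is available for every $d\geq 2$, not just $d=2$.
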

\begin{proof}
Let $f(x)=P_d(Ax)$ where $A$ is an invertible matrix. The proof is based on the fact that the gradient 
$\nabla f = (\partial f / \partial x_1,\ldots,\partial f / \partial x_n)$ does not vanish anywhere except at $x=0$.
Indeed, this property clearly holds true for $P_d$, and is preserved by an invertible change of variables
since 
\begin{equation} \label{eq:grad}
 \nabla f (x) = A^T \nabla P_d(Ax)
\end{equation}
by the chain rule.
In particular, $\nabla f (p) \neq 0$ since $p \neq 0$. 
This implies that $f$ does not vanish on all of $G_p$.
Indeed, if $f$ vanishes on $p+S_i$ then $f$ vanishes on the whole line going through $p+S_i$,
hence $\frac{\partial f}{ \partial x_i} (p)=0$.

\end{proof}}

Let $K$ be a subfield of $\cc$. 
The same construction yields a hitting set for the set of polynomials in $K[x_1,\ldots,x_n]$ that are equivalent to a polynomial of the form $\sum_{i=1}^n a_i x_i^d$ 
where $a_i \in K \setminus \{0\}$.
Such polynomials are indeed equivalent to $P_d$ as complex polynomials.

\subsection{Fewer powers} \label{sec:fewer}

We  now give a black box PIT algorithm 
for a bigger class of polynomials, namely, those that can be written as sums of  $d$-th powers  of linearly independent  linear forms.
These polynomials therefore  admit decompositions as in~(\ref{eq:linind})
where the forms $l_i$ are linearly independent and the number $r \leq n$ of forms in the decomposition is unknown. 
{  We will generalize the approach from Section \ref{sec:hitting_equiv}.
In particular, we will see that the gradient of $f$ does not vanish outside of a certain (unknown) linear subspace $V$.
We can find deterministically a point outside of $V$ with the help of the moment curve like in Section~\ref{sec:deter}.
This leads to the construction of the following hitting set: for an arbitrary set $M$ of $n$ points on the moment curve,
we construct  the union of the $G_p$'s as $p$ ranges over $M$ (recall that $G_p$ is the hitting set of Section~\ref{sec:hitting_equiv}). This set $G \subseteq \cc^n$ is of size $n(nd+1)$. 

\begin{theorem} \label{th:hitfew}
 For any $d \geq 2$, $G$ is a hitting set for the set of polynomials $f \in \cc[X_1,\ldots,X_n]$ that 
 can be written as sums of $d$-th powers of linearly independent linear forms. 
  \end{theorem}
\begin{proof}
Suppose that $f$ can be written as a sum of $r$ $d$-th powers for some $r \geq 1$.
 We have $f(x)=P_{d,r}(Ax)$ where $A$ is an invertible matrix and 
 $$P_{d,r}(x_1,\ldots,x_n)=x_1^d+\cdots+x_r^d.$$
We will use the fact that the gradient of $f$ vanishes only on a (proper) linear subspace $V$ of dimension $n-r$.
This property clearly holds true for $P_{d,r}$, and it is preserved for $f$ since we now have
$$\nabla f (x) = A^T \nabla P_{d,r}(Ax)$$
instead of~(\ref{eq:grad}).
By Lemma~\ref{lem:moment}, $M$ contains at least one point $p$ lying outside of $V$. 
At this point $\nabla f (p) \neq 0$, and the same argument as in the proof of Proposition~\ref{prop:hitting_equiv} 
shows that $f$ does not vanish on $G_p$.
  \end{proof}}

\section{Linear dependencies, essential variables and Lie algebras}
\label{sec:essential}

In this section we build on the results from Section~\ref{sec:PIT}
  to derandomize several algorithms from~\cite{kayal11,kayal2012affine}.
  We begin in Section~\ref{sec:depend}
  with the computation of linear dependencies between polynomials.
  Then we give applications to the minimization of the number of variables in sums of powers of linear forms
  (in Section~\ref{sec:min}), and to the computation of Lie algebras of products of linear forms (in Section~\ref{sec:Lie}).
  This leads to the derandomization of a factorization algorithm from~\cite{koiran2018orbits} and of the equivalence
  algorithm by Kayal~\cite{kayal11} described in Section~\ref{subsec:kayal}.

\subsection{From black box PIT to linear dependencies} \label{sec:depend}

We first recall from~\cite{kayal11}
the notion of {\em linear dependencies among polynomials}.
It has found applications to the elimination of redundant
variables~\cite{kayal11},
the computation of the Lie algebra of a polynomial~\cite{kayal2012affine},
the reconstruction of random arithmetic formulas~\cite{gupta14},
full rank algebraic programs~\cite{kayal18}
and non-degenerate depth 3 circuits~\cite{kayal19}.
\begin{definition}
  Let ${\bf f} = (f_1,\ldots,f_m)$ be a tuple of $m$ polynomials of
  $K[X_1,\ldots,X_n]$. The space of linear dependencies of ${\bf f}$,
  denoted ${\bf f}^{\perp}$, is the space of all vectors
  $v=(v_1,\ldots,v_m) \in K^m$ such that $v_1f_1+\cdots+v_mf_m$ is identically 0.
\end{definition}
As a computational problem, the POLYDEP problem consists of finding
a basis of ${\bf f}^{\perp}$ for a tuple ${\bf f}$ given as input.
If the $f_i$ are verbosely given as sum of monomials, this is a simple problem
of linear algebra. The problem becomes more interesting if the $f_i$
are given by arithmetic circuits or black boxes.
In Section~\ref{sec:depend} we present a simple and general relation between this problem
and black box PIT.

A natural approach to POLYDEP consists of evaluating the $f_j$ at certain
points $a_1,\ldots,a_k$ of $K^n$ to form a $k \times m$ matrix $M$ with the
$f_j(a_i)$ as entries. Note that ${\bf f}^{\perp} \subseteq \ker(M)$ for
any choice of the evaluation points. We would like this inclusion to be
an equality since this will allow to easily compute a basis of ${\bf f}^{\perp}$.
This motivates the following definition.
\begin{definition}
  The points $a_1,\ldots,a_k$ form a
  {\em hitting set for the linear dependencies of ${\bf f}$} if the above
  matrix $M=(f_j(a_i))_{1 \leq i \leq k, 1\leq j \leq m}$ satisfies
  ${\bf f}^{\perp} = \ker(M)$.
\end{definition}
Kayal~\cite{kayal11}
showed (without using explicitly this terminology) that if $k=m$ and
the $a_i$ are chosen at random, a
{hitting set for the linear dependencies of ${\bf f}$}
will be obtained with high probability.

Here we point out that constructing deterministically
a hitting set for the linear dependencies of ${\bf f}$ is {\em  equivalent}
  to solving black box PIT for the family of polynomials in
  $\mathrm{Span}({\bf f})$
  (the space of linear combinations of $f_1,\ldots,f_m$):
  \begin{proposition} \label{prop:hitequiv}
    Let ${\bf f} = (f_1,\ldots,f_m)$ be a tuple of $m$ polynomials of
  $K[X_1,\ldots,X_n]$. For any tuple  $(a_1,\ldots,a_k)$ of $k$ points of  $K^n$,
    the two following properties are equivalent:
    \begin{itemize}
      \item[(i)] The points $a_1,\ldots,a_k$ form a
        {hitting set for the linear dependencies of~${\bf f}$}.
        \item[(ii)] They form a hitting set for $\mathrm{Span}({\bf f})$.
     \end{itemize}
  \end{proposition}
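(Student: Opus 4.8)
The plan is to prove the two implications separately, both essentially by tracking how the condition "${\bf f}^{\perp} = \ker(M)$" unfolds. Throughout, write $M = (f_j(a_i))$ for the $k \times m$ evaluation matrix, and observe the key bookkeeping fact: for a vector $v = (v_1,\ldots,v_m) \in K^m$, the product $Mv$ is the vector of evaluations $\bigl((v_1 f_1 + \cdots + v_m f_m)(a_i)\bigr)_{1 \le i \le k}$. So $v \in \ker(M)$ precisely says that the linear combination $g_v := v_1 f_1 + \cdots + v_m f_m \in \mathrm{Span}({\bf f})$ vanishes at all of $a_1,\ldots,a_k$, whereas $v \in {\bf f}^{\perp}$ says $g_v \equiv 0$. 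The inclusion ${\bf f}^{\perp} \subseteq \ker(M)$ always holds, so in both directions only the reverse inclusion, or its failure, is at stake.

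For (ii) $\Rightarrow$ (i): assume $(a_1,\ldots,a_k)$ is a hitting set for $\mathrm{Span}({\bf f})$. Take any $v \in \ker(M)$; then $g_v \in \mathrm{Span}({\bf f})$ vanishes on the hitting set, hence $g_v \equiv 0$ by the hitting-set property, i.e. $v \in {\bf f}^{\perp}$. This gives $\ker(M) \subseteq {\bf f}^{\perp}$ and therefore equality. For (i) $\Rightarrow$ (ii): assume $\ker(M) = {\bf f}^{\perp}$ and let $g \in \mathrm{Span}({\bf f})$ vanish at all $a_i$; pick coefficients $v$ with $g = g_v$. Vanishing at the $a_i$ means $Mv = 0$, so $v \in \ker(M) = {\bf f}^{\perp}$, hence $g = g_v \equiv 0$. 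Thus the $a_i$ form a hitting set for $\mathrm{Span}({\bf f})$.

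I do not expect a real obstacle here — the statement is essentially a reformulation, and the only thing to be careful about is the (trivial) remark that $0 \in \mathrm{Span}({\bf f})$, so that the notion of hitting set for $\mathrm{Span}({\bf f})$ is the meaningful one (a point already flagged in the paragraph opening Section~\ref{sec:PIT}). The one place to state cleanly is the identity $v \in \ker(M) \iff g_v$ vanishes on $\{a_1,\ldots,a_k\}$, since both implications pivot on it; everything else is immediate from the definitions of ${\bf f}^{\perp}$ and of a hitting set.
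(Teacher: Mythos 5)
Your proof is correct and matches the paper's argument: both directions unwind the definitions via the observation that $v\in\ker M$ is precisely the statement that $v_1f_1+\cdots+v_mf_m$ vanishes at every $a_i$, with the inclusion ${\bf f}^{\perp}\subseteq\ker M$ always holding. The only stylistic difference is that you made that pivot explicit as a standing identity, whereas the paper invokes it inline; the substance is the same.
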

  \begin{proof}
    This is    immediate from the definitions.
    Suppose indeed that (i) holds, and that some polynomial
    $f=v_1f_1+\ldots+v_mf_m$ of $\mathrm{Span}({\bf f})$ vanishes
    at all of the $a_i$. This means that $v \in \ker M$,
    hence $v \in {\bf f}^{\perp}$ by (i). We conclude that $f$ is identically 0
    and (ii) holds.

    To prove the converse we can take the same steps in reverse.
    Suppose that (ii) holds and that $v \in \ker M$. This means that
    $f=v_1f_1+\ldots+v_mf_m$ vanishes at all the $a_i$,
    hence $f$ is identically 0 by (ii). We have shown that
    $v \in {\bf f}^{\perp}$, i.e., ${\bf f}^{\perp} = \ker M$.
  \end{proof}
  In Section \ref{sec:min} we will use this observation and the black box
  PIT algorithm of Section \ref{sec:fewer} to minimize the number of  variables
  in sums of powers of linearly independent linear forms.
  In Section \ref{sec:Lie} we give an application to the computation of Lie algebras
  and factorization into products of linear forms.

   \subsection{Minimizing variables} \label{sec:min}

  We first recall the notion of {\em redundant} and {\em essential} variables
  studied by Carlini~\cite{Carlini06} and Kayal~\cite{kayal11}.
  \begin{definition}
    A variable $x_i$ in a polynomial $f(x_1,\ldots,x_n)$ is redundant if~$f$
    does not depend on $x_i$, i.e.,
    $x_i$ does not appear in any monomial of $f$.

    We say that $f$ has $t$ essential variables if $t$ is the smallest
    number for which there is an invertible
    matrix of size $n$ such that $f(Ax)$ depends on $t$ variables only.
  \end{definition}
  A randomized algorithm for minimizing the number of variables
  is given in~\cite[Theorem 4.1]{kayal11}.
  More precisely, if the input $f$ has $t$ essential
  variables the algorithm finds (with high probability) an invertible matrix~$A$
  such that $f(Ax)$ depends on its first $t$ variables only.
  It is based on the observation from~\cite{Carlini06,kayal11} that
  $t= n - \dim (\partial f)^{\perp} = \dim (\partial f)$
  where $\partial f$ denotes the tuple
  of~$n$ partial derivatives $\partial f / \partial x_i$
  (and  $\dim (\partial f)$ denotes the dimension of the spanned subspace).
  As recalled in Section~\ref{sec:depend}, a basis of the space
  of linear dependencies $ (\partial f)^{\perp}$
  can be found by a randomized algorithm
  from~\cite{kayal11}. Moreover, a suitable invertible matrix $A$
  is easily found from such a basis by completing it into a basis
  of the whole space $K^n$ (see appendix~B of~\cite{kayal11} for details).
  \begin{example}\label{ex:essvar}
    If $f$ can be written as a sum of $r$ powers
    of linearly independent linear forms then the number of essential variables
    of $f$ is equal to~$r$.
    This is clear for $f(x_1,\ldots,x_n)=x_1^d+\cdots+x_r^d$ since $\partial f$
    is spanned by $x_1^{d-1},\ldots,x_r^{d-1}$. In the general case, $f$ is
    equivalent to $x_1^d+\cdots+x_r^d$ and two equivalent polynomials
    have the same number of essential variables.
  \end{example}
  The next proposition is a 
  consequence
  of the above variable minimization algorithm.
  The input $f$ to the algorithm of Proposition~\ref{prop:fewdecomp}
    can be described by an arithmetic circuit like in~\cite{kayal11}
  or more generally by a black box. Here we assume (in contrast with Sections~\ref{sec:random} and~\ref{sec:deter}) that we have access to an oracle for
    the factorization of univariate polynomials.
    This is a prerequisite for running Kaltofen's factorization algorithms
    for the arithmetic circuit~\cite{Kaltofen89} and black box models~\cite{KalTra90}.
  \begin{proposition} \label{prop:fewdecomp}
    There is a randomized polynomial time algorithm that decides
    whether a homogeneous polynomial $f(x_1,\ldots,x_n)$ can be written
    as in~(\ref{eq:linind})
  as a sum of powers of linearly independent linear forms.
  \end{proposition}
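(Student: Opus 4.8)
The plan is to combine the variable-minimization algorithm of~\cite[Theorem~4.1]{kayal11} with Kayal's equivalence algorithm recalled in Section~\ref{subsec:kayal}. Given the input $f$, I would first run the randomized algorithm of~\cite{kayal11}: it returns the number $t$ of essential variables of $f$ together with a basis of $(\partial f)^{\perp}$, from which (completing to a basis of $K^n$, as in~\cite{kayal11}) one reads off an invertible matrix $A \in M_n(K)$ such that $g(y) := f(Ay)$ depends only on $y_1,\ldots,y_t$. I would then regard $g$ as a degree~$d$ form $\tilde g$ in the $t$ variables $y_1,\ldots,y_t$ and run Kayal's equivalence algorithm to decide whether $\tilde g$ is equivalent to $P_d(y_1,\ldots,y_t)$; this is where the factorization oracle is used, through Kaltofen's black-box factorization of the Hessian determinant of $\tilde g$. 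The algorithm declares that $f$ can be written as in~(\ref{eq:linind}) with linearly independent forms if and only if this equivalence test succeeds. (For $d \le 2$ one answers directly from the rank of $f$, and over $\rr$ its inertia, so we may assume $d \ge 3$.)

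The correctness amounts to the claim that $f = \sum_{i=1}^r l_i^d$ with $l_1,\ldots,l_r$ linearly independent if and only if $\tilde g$ is equivalent to $P_d(y_1,\ldots,y_t)$. For the "if" direction, write $\tilde g(y) = P_d(By)$ with $B \in M_t(K)$ invertible, so that $\tilde g = \sum_{i=1}^t m_i(y)^d$ where the $m_i$ are the (linearly independent) rows of $B$. Since $f(x) = g(A^{-1}x) = \tilde g(\pi(x))$ with $\pi$ the linear map sending $x$ to the first $t$ coordinates of $A^{-1}x$, we obtain $f = \sum_{i=1}^t (m_i \circ \pi)^d$; the forms $m_i \circ \pi$ remain linearly independent, because $\pi$ is surjective (a projection composed with an invertible map), so any relation $\sum_i c_i (m_i \circ \pi) = 0$ forces $\sum_i c_i m_i$ to vanish on $K^t$.

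For the "only if" direction I would use the identity $t = \dim(\partial f)$. Since a sum of $r$ powers of linearly independent linear forms has exactly $r$ essential variables (the example preceding the statement), we get $t = r$; and since $g$ is equivalent to $f$ while $\tilde g$ is obtained from $g$ by discarding variables on which $g$ does not depend, $\tilde g$ also has exactly $t$ essential variables. Applying Lemma~\ref{lem:change} to the change of variables given by the $n \times t$ matrix formed by the first $t$ columns of $A$ yields a decomposition $\tilde g = \sum_{i=1}^t \ell_i(y)^d$ with $\ell_i \in K[y_1,\ldots,y_t]$; the $\ell_i$ must be linearly independent, for otherwise $\tilde g$ would be a sum of fewer than $t$ powers of linear forms, forcing $\dim(\partial \tilde g) < t$ and contradicting that $\tilde g$ has $t$ essential variables. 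A linearly independent family of $t$ linear forms in $t$ variables is a basis, so the substitution $z_i = \ell_i(y)$ transforms $\tilde g$ into $z_1^d + \cdots + z_t^d$, i.e.\ $\tilde g$ is equivalent to $P_d(y_1,\ldots,y_t)$.

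The resulting algorithm runs in randomized polynomial time: variable minimization is randomized polynomial time by~\cite{kayal11}, and Kayal's equivalence algorithm is polynomial time given the factorization oracle. I expect the only nonroutine point to be the "only if" direction above — confirming that passing to the essential variables does not spoil the linear independence of the forms in a decomposition — and this is precisely where the characterization of the essential number of variables as $\dim(\partial f)$ is needed.
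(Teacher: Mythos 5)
Your proof is correct and follows essentially the same route as the paper: compute the number $t$ of essential variables with Kayal's randomized variable-minimization algorithm, make the corresponding change of variables to reduce to a $t$-variable form $\tilde g$, and then test whether $\tilde g$ is equivalent to $P_d$ in $t$ variables with Kayal's equivalence algorithm. The paper's own proof is only two sentences and leaves the correctness analysis implicit; your write-up supplies the two-directional argument and the $d \le 2$ edge cases, which is a useful elaboration rather than a departure. One wording correction in the ``only if'' direction: if $\ell_1,\ldots,\ell_t$ were linearly dependent, $\tilde g = \sum_i \ell_i^d$ would still be a sum of $t$ powers of linear forms (not fewer as you state); what actually follows is that the $\ell_i$ span a subspace of dimension $s<t$, so after a linear change of coordinates $\tilde g$ depends on only $s$ variables, hence $\dim(\partial \tilde g) \le s < t$, contradicting that $\tilde g$ has $t$ essential variables. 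This is the conclusion you reach, so the argument stands, but the intermediate claim should be phrased in terms of the span of the $\ell_i$ rather than the number of summands.
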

  \begin{proof}
    First compute the number $r$ of essential variables in $f$ using the
      randomized algorithm from~\cite{kayal11}, and make the corresponding change of variables
    to obtain a polynomial $g(x_1,\ldots,x_r)$. Then test whether
    $g$ is equivalent to $x_1^d+\cdots+x_r^d$ using the equivalence
    algorithm from~\cite{kayal11}. 
    { 
    To prove that this algorithm is correct, we show that $f$ can be written as a sum of $r$ powers of linearly independent linear forms if and only if $g$ can be written in such a form. Let $A \in \kk^{n \times n}$ be an invertible matrix such that $f(Ax) = g(x_1, \ldots, x_r)$ for all $x \in \kk^n$. Suppose that $g(Bx) = x_1^d+\cdots+x_r^d$ for some invertible matrix $B \in \kk^{r \times r}$. If we denote $C = \begin{pmatrix}B & 0 \\ 0 & I \end{pmatrix} \in \kk^{n \times n}$ where $I \in \kk^{(n-r) \times (n-r)}$ is the identity matrix, then the matrix $AC$ is invertible and we have $f(ACx) = g\bigl(B(x_1,\ldots,x_r)^T\bigr) = x_1^d+\cdots+x_r^d$ for every $x \in \kk^n$. Conversely, suppose that there exists an invertible matrix $B \in \kk^{n \times n}$ such that $f(Bx) = x_1^d+\cdots+x_{r'}^d$. We have $r' = r$ by Example~\ref{ex:essvar}. Moreover, for all $x \in \kk^n$ we have $x_1^d+\cdots+x_r^d = f(Bx) = f(AA^{-1}Bx) = g\bigl((A^{-1}Bx)_1,\ldots,(A^{-1}Bx)_r\bigr)$. Thus, by setting $x_{r+1} = \ldots = x_{n} = 0$, we get $x_1^d+\cdots+x_r^d = g(Cx)$ for all $x \in \kk^r$, where $C \in \kk^{r \times r}$ is the submatrix of $A^{-1}B$ obtained by taking the first $r$ rows and columns. To show that $C$ is invertible, suppose that this is not the case. Then, there exists an invertible matrix $D \in \kk^{r \times r}$ such that $D(0,\ldots,0,1)^{T} \in \ker C$. Hence, for every $x \in \kk^r$ we have $g(CD(x_1,\dots,x_{r-1},0)^T) = g(CDx) = h(Dx)$, where $h(x) = x_1^d+\cdots+x_r^d$. In particular, $h$ has less than $r$ essential variables, which gives a contradiction with Example~\ref{ex:essvar}. Therefore, the matrix $C$ is invertible.
    }
   \end{proof}
  In this algorithm it is essential to compute the number of essential variables
  in~$f$ before calling the equivalence algorithm from~\cite{kayal11}.
  Indeed, this algorithm is based on the factorization of the Hessian determinant
  of~$f$; but   $H_f$ is identically 0 for any polynomial
  with fewer  than $n$ essential variables. Hence 
  looking at $\det H_f$
  does not yield any useful information for $r <n$.

  \begin{remark}
    We can minimize the number of variables of a degree 3 form 
    $f(x_1,\ldots,x_n)$ in deterministic polynomial time 
    using dense linear algebra.
    Indeed, as pointed out in Section~\ref{sec:depend} this is true more generally for the POLYDEP problem with inputs that are verbosely given as sums of monomials.\footnote{Variable minimization for forms of degree 3 is also studied in Saxena's thesis~\cite[Proposition~3.1]{Saxenathesis}. He attributes the corresponding deterministic algorithm to Harrison~\cite{harrison75}.}
  Combining this observation with the deterministic equivalence algorithm
    from Section~\ref{sec:deter} we obtain{ , as in Proposition~\ref{prop:fewdecomp},}
    a deterministic algorithm to decide whether a degree 3 form can be written
    as in (\ref{eq:linind}) as a sum of cubes of linearly independent
    (real or complex) linear forms. 
  \end{remark}

    {      The second result of this section is the following derandomization of Kayal's algorithm for finding the number of essential variables, under the assumption that the input polynomial is a sum of powers of independent linear forms:}
  \begin{theorem}\label{thm:essderand}
    Let $f(x_1,\ldots,x_n)$ be a homogoneous polynomial of degree $d$
    and let $\{a_1,\ldots,a_k\}$ be the hitting set of Theorem~\ref{th:hitfew}
    corresponding to polynomials of degree $d-1$ in $n$ variables
    (recall that it is of size {  $O(n^2d)$}).
    Let $f_j$ be the partial derivative $\partial f / \partial x_j$.
    We consider like in Section~\ref{sec:depend} the matrix
    $M=(f_j(a_i))_{1 \leq i \leq k, 1\leq j \leq n}$.

    If $f$ can be written as in (\ref{eq:linind}) as a sum of $r$ powers
    of linearly independent linear forms then $\ker M = (\partial f)^{\perp}$.
    In particular, the number of essential variables of such an $f$
    can be computed deterministically  from a black box for $f$
    by the formula:
    $r=n - \dim \ker M$.
  \end{theorem}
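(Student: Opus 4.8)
The plan is to reduce the claim, via Proposition~\ref{prop:hitequiv}, to the statement that $\{a_1,\ldots,a_k\}$ is a hitting set for $\mathrm{Span}(\partial f)$, and then to observe that the partial derivatives of $f$ are themselves sums of $(d-1)$-th powers of linearly independent linear forms, so that the black box PIT of Section~\ref{sec:fewer}, instantiated with degree $d-1$, applies.

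First I would note the trivial inclusion $(\partial f)^{\perp}\subseteq \ker M$: if $v\in(\partial f)^{\perp}$ then $\sum_j v_j f_j$ is identically zero, hence in particular $Mv=0$. By Proposition~\ref{prop:hitequiv}, proving $\ker M=(\partial f)^{\perp}$ is the same as proving that $\{a_1,\ldots,a_k\}$ is a hitting set for $\mathrm{Span}(\partial f)$. Write $f=\sum_{i=1}^r l_i^d$ with the $l_i$ linearly independent, $l_i=\sum_j c_{ij}x_j$; differentiating, $f_j=\partial f/\partial x_j = d\sum_{i=1}^r c_{ij}\,l_i^{d-1}$, so $\mathrm{Span}(\partial f)\subseteq W:=\mathrm{Span}(l_1^{d-1},\ldots,l_r^{d-1})$, and it suffices to show that $\{a_1,\ldots,a_k\}$ is a hitting set for $W$. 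Let $g\in W$ be a nonzero polynomial, $g=\sum_{i=1}^r \mu_i l_i^{d-1}$, and let $I=\{i:\mu_i\neq 0\}\neq\emptyset$; over $\cc$ each $\mu_i$ has a $(d-1)$-th root, so $g=\sum_{i\in I}(\mu_i^{1/(d-1)}l_i)^{d-1}$ is a sum of $|I|\leq r\leq n$ $(d-1)$-th powers of linearly independent linear forms (a rescaled subfamily of the $l_i$). Thus $g$ lies in the class to which Theorem~\ref{th:hitfew} applies with degree parameter $d-1$, and since $\{a_1,\ldots,a_k\}$ is exactly the hitting set produced by that theorem for this class, the nonzero polynomial $g$ cannot vanish at all of $a_1,\ldots,a_k$. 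Hence $\{a_1,\ldots,a_k\}$ is a hitting set for $W$, a fortiori for $\mathrm{Span}(\partial f)$, and $\ker M=(\partial f)^{\perp}$ follows.

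For the last assertion, the number of essential variables of $f$ equals $\dim\mathrm{Span}(\partial f)$ (as recalled in Section~\ref{sec:min}), and this dimension is exactly $r$: the powers $l_1^{d-1},\ldots,l_r^{d-1}$ are linearly independent (apply a change of variables sending $l_i\mapsto x_i$) while the matrix $(c_{ij})$ has rank $r$, so $\mathrm{Span}(\partial f)=W$ has dimension $r$. Therefore $\dim\ker M=\dim(\partial f)^{\perp}=n-r$, giving $r=n-\dim\ker M$; and the entries $f_j(a_i)$ of $M$ are recovered from a black box for $f$ by univariate interpolation along the lines $t\mapsto a_i+te_j$, on which $f$ restricts to a polynomial of degree at most $d$. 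The only point requiring a little care is verifying that a general element of $W$ belongs to the class handled by Theorem~\ref{th:hitfew}; the rest is a direct combination of Proposition~\ref{prop:hitequiv} with the results of Section~\ref{sec:fewer}.
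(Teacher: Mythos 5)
Your proof is correct and takes essentially the same approach as the paper: reduce via Proposition~\ref{prop:hitequiv} to showing $\{a_1,\ldots,a_k\}$ is a hitting set for $\mathrm{Span}(\partial f)$, then observe that elements of $\mathrm{Span}(\partial f)$ are linear combinations of the $(d-1)$-th powers $l_1^{d-1},\ldots,l_r^{d-1}$ and hence fall into the class handled by Theorem~\ref{th:hitfew} with degree parameter $d-1$. One small point worth noting: you explicitly fill in a step the paper leaves implicit, namely that a nonzero \emph{linear combination} $\sum_{i\in I}\mu_i l_i^{d-1}$ can be rewritten as a pure \emph{sum} of $(d-1)$-th powers of linearly independent forms by absorbing the scalars $\mu_i$ into the forms via $(d-1)$-th roots over $\cc$; since the class in~(\ref{eq:linind}) is defined as a sum (not a linear combination) of powers, this is a necessary observation and it is good that you make it. Your verification that $\dim\mathrm{Span}(\partial f)=r$ is also correct, though the paper obtains this instead by invoking invariance of the number of essential variables under equivalence together with the example $P_d$ from Section~\ref{sec:min}.
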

  \begin{proof}
    We recall that a black box for $f_j$ can be easily obtained from a black
    box for $f$ by polynomial interpolation. It therefore remains to show
    that $\ker M = (\partial f)^{\perp}$.
      By Proposition~\ref{prop:hitequiv} it suffices to show
    that $\{a_1,\ldots,a_k\}$ is a hitting set
    for $\mathrm{Span}({\partial f})$.
    This is clear from the definition of $\{a_1,\ldots,a_k\}$ since
    the elements of $\mathrm{Span}({\partial f})$ can be written as
    linear combinations of at most $r$ 
    $(d-1)$-th powers
    of linearly independent linear forms (namely, the same forms that appear
    in the decomposition of $f$).
   \end{proof}

  \subsection{Lie algebras and polynomial factorization}
  \label{sec:Lie}

  One can associate to a polynomial $f \in K[X_1,\ldots,X_n]$
  the group of invertible
  $n \times n$ matrices $A$ that leave $f$ invariant, i.e., such that
  $f(Ax)=f(x)$. One can in turn associate to this matrix group its Lie algebra.
  This is a linear subspace of $M_n(K)$,
  which we call simply ``the Lie algebra of $f$.''
  It turns out that elements of this Lie algebra correspond to linear
  dependencies between the $n^2$ polynomials
  $x_j \frac{\partial f}{\partial x_i}$.
  A proof can be found in~\cite[Section~7.2]{kayal2012affine}, and we will take 
  this characterization
  as our definition of the Lie algebra for the purpose of this paper:
  \begin{definition} \label{def:lie}
    The Lie algebra of a polynomial $f \in K[x_1,\ldots,x_n]$ is the subspace
    of all matrices $C \in M_n(K)$ that satisfy the identity:
  $$\sum_{i,j \in [n]} c_{ij} x_{j} \frac{\partial f}{\partial x_{i}} = 0 \, .$$
  \end{definition}
  A randomized algorithm for the computation of the Lie algebra
  was given in \cite{kayal2012affine}, with applications to the reconstruction
  of affine projections of polynomials.
  In this section we study the deterministic computation of Lie algebras
  of polynomials,
  a topic which has not been studied in the literature as far as we know.
  
  The Lie algebra of a homogenous polynomial $f$ consists of all matrices
  of $M_n(K)$ if and only if $f$ is identically 0. This shows that one
  cannot hope to compute the Lie algebra in deterministic polynomial
  time without derandomizing Polynomial Identity Testing
  (and these two problems are in fact equivalent
  in the black box setting by Proposition~\ref{prop:hitequiv}).
  Nevertheless, it makes sense to search for deterministic
  algorithms for specific classes of polynomials.
  We take a first step in this direction in Theorem~\ref{th:deterlie},
  for polynomials that factor
  as products of linear forms. Taking again our cue from
  Proposition~\ref{prop:hitequiv}, we will do this by constructing
  a hitting set for a related family of polynomials.
  As it turns out, it is convenient to first design a hitting set
  for a certain family of ``simple'' rational functions.
  Those are defined as follows:
  \begin{definition} \label{def:simplerat}
    Let $\pnum_{1}, \dots, \pnum_{m} \in \cc^{n}, \pdenom_{1}, \dots, \pdenom_{m} \in \cc^{n} \setminus \{0\} $ be a collection of $2m$ vectors in $\cc^{n}$.
Furthermore,
    let $\hypar 
    = \bigcup_{i = 1}^{m} \{x \in \cc^{n} \colon \langle \pdenom_{i}, x \rangle = 0\}$.

    We associate to this collection of $2m$ vectors an oracle which for any
    $x \in \cc^{n}$ returns the value
\begin{equation} \label{eq:simplerat}
f(x) = 
\begin{cases}
\displaystyle{\sum_{i = 1}^{m} \frac{\langle \pnum_{i}, x\rangle}{\langle \pdenom_{i}, x \rangle}} & \text{if $x \notin \hypar$,} \\[0.6cm]
\NaN &\text{otherwise.}
\end{cases}
\end{equation}
  \end{definition}
  In  the commutative setting,
  there does not seem to be a lot of literature on rational identity testing
   (there is however the deep result that
  rational identity testing can be done in deterministic polynomial time
  in the non-commutative setting~\cite{garg16}).
  For (commutative) arithmetic circuits with divisions, deterministic
  rational identity testing
  is easily seen to be equivalent to PIT for ordinary
  (division free) arithmetic circuits. Nevertheless, it makes sense to
  investigate it for specific families of rational functions such
  as those in Definition~\ref{def:simplerat}.
  \begin{remark} \label{rem:domain}
    According to the above definition, the oracle returns NaN
    when we evaluate $f$ on a point $x$ where $q_i(x)=0$,
    and this remains true even if the corresponding vector $p_i$
    is equal to 0. This convention is useful for the proof
    of Proposition~\ref{hset} below.
   \end{remark}

  For every $n, k \in \nn$, let $\pconf(n,k) \subset \cc^{n}$ be the set of $k(n-1)+1$ points defined as
\begin{align*}
  \pconf(n,k) = 
  \{&(1,1,\dots,1), (1,2,2^{2},\dots,2^{n-1}), \\ 
&\dots, (1,k(n-1)+1,(k(n-1)+1)^{2},\dots,(k(n-1)+1)^{n-1}) \} \, .
\end{align*}
Recall from Lemma~\ref{lem:moment} that these points  cannot be all contained
in a union of $k$ hyperplanes since they lie on the moment curve.
Moreover, for every $m,n \ge 1$ let 
\[
\hset(m,n) = 
\{u + \lambda v \colon v \in \pconf(n,m), u \in \pconf(n,m^{2}) \cup \{0\}, \lambda \in [2m+1] \} \, .
\]
The next result shows that the set $\hset(m,n)$ is a hitting set for the rational functions of Definition~\ref{def:simplerat}.

\begin{proposition}\label{hset}
  The function $f(x)$ in~(\ref{eq:simplerat}) is equal to $0$
  for every $x \notin \hypar$ if and only if $f(x) \in \{0, \NaN\}$
  for every $x \in \hset(m,n)$.
\end{proposition}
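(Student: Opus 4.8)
The plan is to first dispose of the forward implication, which is immediate: if $f(x)=0$ for every $x\notin\hypar$, then each $x\in\hset(m,n)$ either lies in $\hypar$ — where the oracle returns $\NaN$ — or lies outside $\hypar$ — where $f(x)=0$ — so $f(x)\in\{0,\NaN\}$ in all cases. For the converse I would clear denominators. Put $Q(x)=\prod_{i=1}^{m}\langle \pdenom_i,x\rangle$, a nonzero polynomial of degree $m$ since each $\pdenom_i$ is nonzero, and $N(x)=Q(x)f(x)=\sum_{i=1}^{m}\langle \pnum_i,x\rangle\prod_{k\neq i}\langle \pdenom_k,x\rangle$, a polynomial of degree at most $m$. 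As $\hypar$ is a proper subvariety of $\cc^{n}$ and $Q$ vanishes exactly on $\hypar$, the statement "$f\equiv 0$ on $\cc^{n}\setminus\hypar$" is equivalent to "$N\equiv 0$", so it is enough to show that $N\equiv 0$ whenever $f\in\{0,\NaN\}$ on $\hset(m,n)$.

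The engine is a \emph{line lemma}: fix $u\in\pconf(n,m^{2})\cup\{0\}$ and $v\in\pconf(n,m)$ such that the line $\ell=\{u+\lambda v:\lambda\in\cc\}$ is \emph{not} contained in $\hypar$. Then each $\lambda\mapsto\langle \pdenom_i,u+\lambda v\rangle$ is a nonzero polynomial, so $Q(u+\lambda v)$ has at most $m$ roots; hence at least $(2m+1)-m=m+1$ of the sampled points $u+\lambda v$ with $\lambda\in[2m+1]$ avoid $\hypar$, and there $f=0$ by hypothesis, so $N(u+\lambda v)=0$. Since $\lambda\mapsto N(u+\lambda v)$ has degree at most $m$ and vanishes at $\ge m+1$ points, it vanishes identically; that is, $N$ vanishes on all of $\ell$.

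From here I would run three stages. (i) For a fixed $u\in\pconf(n,m^{2})\cup\{0\}$, the directions $v$ with $\ell\subseteq\hypar$ all lie in the union of the (at most $m$) hyperplanes $\{\langle \pdenom_i,\cdot\rangle=0\}$ passing through $u$, so Lemma~\ref{lem:moment} produces some $v\in\pconf(n,m)$ avoiding them; taking $\lambda=0$ in the line lemma gives $N(u)=0$. Thus $N$ vanishes on $\pconf(n,m^{2})\cup\{0\}$, and since $t\mapsto N(\gamma(t))$ has degree at most $m(n-1)$, which is less than the number $m^{2}(n-1)+1$ of moment-curve points in $\pconf(n,m^{2})$, $N$ vanishes on the whole moment curve $\Gamma$. (ii) By Lemma~\ref{lem:moment}, at least $m(m-1)(n-1)+1$ points $u_0$ of $\pconf(n,m^{2})$ lie outside $\hypar$, and for such a $u_0$ \emph{every} direction in $\pconf(n,m)$ gives a line avoiding $\hypar$; decomposing $x\mapsto N(u_0+x)$ into homogeneous components, each component (of degree $\le m$) vanishes on $\pconf(n,m)$, hence on $\Gamma$ by the same degree count, hence on the affine cone $\hat{C}$ over $\Gamma$ by homogeneity, so $N$ vanishes on $u_0+\hat{C}$. (iii) For fixed $c\in\hat{C}$ the map $t\mapsto N(\gamma(t)+c)$ has degree at most $m(n-1)$ and, by (ii), vanishes at the more than $m(n-1)$ values of $t$ (here we use $m\ge 2$) for which $\gamma(t)\in\pconf(n,m^{2})\setminus\hypar$, so it vanishes identically; hence $N$ vanishes on $\gamma(s)+\hat{C}$ for \emph{every} $s\in\cc$. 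The cases $m\le 1$ I would treat separately and directly (for $m=1$, $N$ is linear and already kills the $n$ points of $\pconf(n,1)$, forcing $N\equiv 0$).

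It then remains to deduce $N\equiv 0$ from the fact that $N$ vanishes on $\bigcup_{s\in\cc}(\gamma(s)+\hat{C})$. For $n\le 3$ this set is Zariski dense in $\cc^{n}$ — for $n=2$, $\hat{C}$ is already dense; for $n=3$, $\hat{C}$ is dense in the quadric cone $\{x_1x_3=x_2^{2}\}$ and its $\gamma$-translates sweep out a dense subset of $\cc^{3}$ — so $N\equiv 0$ and we are finished. For general $n$ the natural finish is an induction on $n$: restrict $N$ to a carefully chosen hyperplane $\Pi$, check that the data already proved about $N$ — reread through the line lemma applied \emph{inside} $\Pi$ — provides exactly a hitting-set hypothesis in dimension $n-1$, conclude $N|_{\Pi}\equiv 0$ from the inductive hypothesis, and then let $\Pi$ vary. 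I expect this descent to dimension $n-1$ — arranging that the restricted configuration genuinely matches $\hset(m,n-1)$, given the extra role played by the cone $\hat{C}$ — to be the main obstacle; everything before it is bookkeeping with Lemma~\ref{lem:moment} and elementary degree arguments.
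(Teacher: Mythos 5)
You are honest about where the argument stops short, and the place you flag is indeed the real problem. Your stages (i)--(iii) are sound: the line lemma is correct (a degree-$\leq m$ univariate polynomial vanishing at $m+1$ of the sample $\lambda$'s is identically zero, and at most $m$ of the $2m+1$ samples can lie in $\hypar$ when the line is not contained in $\hypar$), and the bookkeeping with Lemma~\ref{lem:moment} checks out. But the set you end up knowing $N$ vanishes on, $\bigcup_{s\in\cc}\bigl(\gamma(s)+\hat{C}\bigr)$, is parametrized by the three parameters $(s,\mu,t)\mapsto \gamma(s)+\mu\gamma(t)$ and is therefore a constructible set of dimension at most $3$. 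For $n\geq 4$ it is a \emph{proper} subvariety of $\cc^n$, and a polynomial of degree $m$ that vanishes on it need not be identically zero. Since $N$ has degree $m$ in $n$ variables, the space of ``suspect'' polynomials grows like $\binom{n+m}{m}$, while $\hset(m,n)$ has only $O(m^3n^2)$ points; without exploiting the very special shape $N=\sum_i\langle p_i,x\rangle\prod_{k\neq i}\langle q_k,x\rangle$ one cannot expect such a hitting set to suffice, and your final step does not exploit this shape. The proposed descent to dimension $n-1$ is not carried out, and it is not at all clear that restricting to a hyperplane preserves the moment-curve structure of $\hset$; so this is a genuine gap, not a routine verification.

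The paper avoids this entirely by \emph{not} clearing denominators. After merging terms with proportional denominators so that the $q_i$ become pairwise linearly independent, it picks one point $v\in\pconf(n,m)$ off all the denominator hyperplanes, sets $a_i=\langle p_i,v\rangle/\langle q_i,v\rangle$, and then argues that $p_i=a_iq_i$ for every $i$. The engine is a pole argument on the univariate restriction $g(\lambda)=f(u+\lambda v)$: the offset $u\in\pconf(n,m^2)$ is chosen (using at most $m+m(m-1)=m^2$ hyperplanes, which is exactly why the exponent $m^2$ appears) so that the poles of $g$ are simple and distinct and the residues $b_i=\langle p_i-a_iq_i,u\rangle$ are nonzero whenever $p_i-a_iq_i\neq 0$. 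If some $b_i\neq 0$ then $g$ has an actual pole, hence is not identically zero, yet is a ratio of degree-$\leq m$ polynomials, so it has at most $m$ zeros and at most $m$ poles; one of the $2m+1$ sampled $\lambda$'s must then produce a value outside $\{0,\NaN\}$, a contradiction. This yields the strong structural conclusion $p_i=a_iq_i$, whence $f\equiv\sum a_i$ on $\cc^n\setminus\hypar$ and the constant is $0$ because $f(v)=0$. The point is that the pole behavior of the \emph{rational function} carries the decisive information, and this is precisely what you lose when you pass to the cleared polynomial $N$; after clearing, the problem becomes an ordinary black-box PIT question for degree-$m$ polynomials, which the set $\hset(m,n)$ is too small to answer for $n\geq 4$ without further structural input.
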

\begin{proof}
  The ``only if'' implication is trivial. To prove the other direction, suppose that $f(x) \in \{0, \NaN\}$ for every $x \in \hset(m,n)$. First we observe that it is enough to assume that $\{\pdenom_{1}, \dots, \pdenom_{m}\}$ are pairwise linearly independent. Indeed, if $\pdenom_{i} = \mu \pdenom_{j}$ for some $i,j \in [m]$ and $\mu \in \cc \setminus \{0\}$, then we can put $\tilde{\pnum}_{i} =
  \pnum_{i} + \mu \pnum_{j}$, replace $\pnum_{i}$ by $\tilde{\pnum}_{i}$ and forget $\pnum_{j}$ and $\pdenom_{j}$. This does not change the function $f$ (in particular, by Remark~\ref{rem:domain} the domain of definition of $f$ is unchanged).
  By repeating this procedure, we can write $f$ in such a way that the denominators are pairwise linearly independent (and their number $m$ does not increase).

  From now on, we assume that $\{\pdenom_{1}, \dots, \pdenom_{m}\}$ are pairwise linearly independent. By Lemma~\ref{lem:moment}, there exists $v \in \pconf(n,m)$ such that $\langle \pdenom_{i}, v \rangle \neq 0$ for all $i \in [m]$. For every $i \in [m]$, denote $a_{i} = 
  \langle \pnum_{i}, v \rangle / \langle \pdenom_{i}, v \rangle \in \cc$. We will show that $\pnum_{i} = a_{i} \pdenom_{i}$ for all $i \in [m]$. To do so, suppose that there exists at least one $i$ such that $p_{i} - a_{i} q_{i} \neq 0$. For every pair $(i,j) \in [m]^{2}$ such that $i \neq j$ let $d_{ij} = 
  \langle \pdenom_{i}, v \rangle / \langle \pdenom_{j}, v \rangle$.
  Using Lemma~\ref{lem:moment} 
  one more time, there exists $u \in \pconf(n,m^{2})$
  satisfying the following two conditions:
\begin{itemize}
\item[(i)]  $\langle p_{i} - a_{i} q_{i}, u \rangle \neq 0$ for every $i \in [m]$ such that $p_{i} - a_{i} q_{i} \neq 0$;
\item[(ii)] $\langle \pdenom_{i} - d_{ij} \pdenom_{j}, u \rangle \neq 0$ for every $(i,j)$ such that $i \neq j$. (We note that $\pdenom_{i} - d_{ij} \pdenom_{j} \neq 0$ because the $\{\pdenom_{i}\}_{i}$ are pairwise linearly independent.)
\end{itemize}
  Let $b_{i} = 
  \langle p_{i} - a_{i} q_{i}, u \rangle$ for all $i \in [m]$ and consider the
  univariate function $g(\lambda)$ defined as $g(\lambda) = 
  f(u + \lambda v)$. Note that for every $\lambda \in \cc$ such that $u + \lambda v \notin \hypar$ we have
\begin{align*}
g(\lambda) = \sum_{i = 1}^{m}\frac{\langle p_{i}, u + \lambda v \rangle}{\langle q_{i}, u + \lambda v \rangle} = \sum_{i = 1}^{m} a_{i} + \sum_{i = 1}^{m} \frac{b_{i}}{\langle q_{i}, u \rangle + \lambda \langle q_{i}, v \rangle} \, .
\end{align*}
Observe that the function $g(\lambda)$ attains the value $\NaN$ for at most $m$ values of~$\lambda$. Furthermore, since we assumed that $\langle \pdenom_{i} - d_{ij} \pdenom_{j}, u \rangle \neq 0$, the functions $\lambda \mapsto \langle q_{i}, u \rangle + \lambda \langle q_{i}, v \rangle$ have distinct zeros. In particular, if $b_{i} \neq 0$, then 
$|g(\lambda)|$ approaches $+\infty$ 
as $\lambda$ approaches $-\langle \pdenom_{i}, u \rangle / \langle \pdenom_{i}, v \rangle$. 
Since we assumed that at least one $b_{i}$ is nonzero, it follows that the function $g(\lambda)$ attains some values not in $\{0 ,\NaN\}$. Moreover, $g(\lambda)$ has at most $m$ zeroes, because it can be written in the form $g(\lambda) = P(\lambda)/Q(\lambda)$ where $P, Q$ are nonzero polynomials of degree at most $m$. In particular, at least one of the values $g(1), \dots, g(2m + 1)$ does not belong to $\{0 ,\NaN\}$, contradicting our assumption. Therefore, we have $\pnum_{i} = a_{i} \pdenom_{i}$ for all $i \in [m]$. In particular, for every $x \notin \hypar$ we have $f(x) = a_{1} + \dots + a_{m}$.
To conclude, we observe that $f(v) = 0$ since $v \in \hset(m,n)$.
It follows that $f(x) = a_{1} + \dots + a_{m} = 0$ for all $x  \notin \hypar$.
\end{proof}

\begin{remark} \label{rem:hset}
  One can derive from the above proof a syntactic characterization
  of the rational functions in Definition~\ref{def:simplerat}
  that are identically 0. Namely, assuming that the $q_i$ are pairwise
  linearly independent, the following condition is necessary and sufficient:
  there exist constants $a_1,\ldots,a_m \in \cc$
  such that $a_1+\ldots+a_m = 0$ and $p_i = a_i q_i$ for all $i=1,\ldots,m$.
\end{remark}
Let $P(x) \in \cc[X_{1}, \dots, X_{n}]$ be a polynomial that factors as a product of linear forms, i.e., $P(x) = \langle \pdenom_{1}, x \rangle  \langle \pdenom_{2}, x \rangle \dots  \langle \pdenom_{d}, x \rangle$ for some vectors $\pdenom_{1}, \dots, \pdenom_{d}$ in $\cc^{n}$. From Proposition~\ref{hset} we can derive the
following characterization of the Lie algebra of $P$.
\begin{theorem} \label{th:deterlie}
Let $P(x) \in \cc[X_{1}, \dots, X_{n}]$ be a polynomial of degree $d \geq 1$ that factors as a product of linear forms. Then, a matrix $C \in \cc^{n \times n}$ belongs to the Lie algebra of $P$ if and only if
\begin{equation} \label{eq:lie}
\sum_{i,j \in [n]} c_{ij} x_{j} \frac{\partial P}{\partial x_{i}}(x) = 0
\end{equation}
for every $x \in \hset(d,n)$. In particular, a basis of the Lie algebra
can be computed deterministically in polynomial time
with black box access to $P$.
\end{theorem}
\begin{proof}
  The ``only if'' direction follows immediately from Definition~\ref{def:lie}.
  To prove the opposite implication, {  let us now assume that (\ref{eq:lie}) holds for every  $x \in \hset(d,n)$. 
  We need to show that $C$ belongs to the Lie algebra of $f$. }
   Let us write $P(x) = \langle \pdenom_{1}, x \rangle  \langle \pdenom_{2}, x \rangle \dots  \langle \pdenom_{d}, x \rangle$
  where the $q_i$ are nonzero vectors, and consider the function
\[
f_{C}(x) = 
\begin{cases}
\displaystyle{\frac{1}{P(x)}\sum_{i,j \in [n]} c_{ij} x_{j} \frac{\partial P}{\partial x_{i}}(x)} & \text{if $x \notin \hypar$,} \\[0.6cm]
\NaN &\text{otherwise.}
\end{cases}
\]
Here 
$\hypar$ denotes the union of the $m$ hyperplanes $\{x \in \cc^{n} \colon \langle \pdenom_{i}, x \rangle = 0\}$ as in Definition~\ref{def:simplerat}.
Note that we have $f_{C}(x) \in \{0, \NaN\}$ for every $x \in \hset(d,n)$. Furthermore, observe that for every $x \notin \hypar$ we have
\begin{align*}
\frac{1}{P(x)}\sum_{i,j \in [n]} c_{ij} x_{j} \frac{\partial P}{\partial x_{i}}(x) &= \sum_{i,j \in [n]}c_{ij}x_{j}\sum_{k \in [d]}\frac{\pdenom_{ki}}{\langle \pdenom_{k}, x \rangle} = \sum_{k \in [d]}\frac{\sum_{i,j \in [n]} c_{ij}\pdenom_{ki}x_{j}}{\langle \pdenom_{k}, x \rangle} \, .
\end{align*}
Since this rational fraction is of form~(\ref{eq:simplerat}), we can
apply Proposition~\ref{hset} and conclude that 
$f_{C}(x)$ is equal to zero for every $x \notin \hypar$.
This implies that $\sum_{i,j \in [n]} c_{ij} x_{j} \frac{\partial P}{\partial x_{i}}(x) = 0$ for all $x \notin \hypar$.
By continuity we obtain $\sum_{i,j \in [n]} c_{ij} x_{j} \frac{\partial P}{\partial x_{i}}(x) = 0$ for all $x \in \cc^{n}$, which implies that $C$ belongs to the Lie algebra of $P$.

Let us now turn to the second part of the theorem. It is well known that
a black box for $\partial P / \partial x_i$ can be constructed from a black box
for $P$ (by interpolating $P$ on a line). By the first part, the determination
of the Lie algebra therefore boils down to the resolution of a system
of $|\hset(d,n)|$ linear equations in $n^2$ variables.
\end{proof}

\begin{remark}
  We have stated Proposition~\ref{hset} and Theorem~\ref{th:deterlie}
  for the field of complex numbers only because the proof of Proposition
  \ref{hset} uses the absolute value.  Nevertheless, it follows from general principles that  these two results apply 
  to any field $K$ of characteristic~0. Indeed, $K$ can be embedded in
  an algebraically closed field $\overline{K}$ which must satisfy the
  same first order formulas as $\cc$.
\end{remark}

{ 
For our final derandomization results we will need to perform simultaneous diagonalization in polynomial time over $\qq$.
\begin{proposition} \label{prop:detersimdiag}
There is a polynomial time deterministic algorithm which takes as input a tuple $(A_1,\ldots,A_k)$ of matrices of size $n$
with rational entries, and:
\begin{itemize}
\item[(i)] decides whether $A_1,\ldots,A_k$ are simultaneously diagonalizable over $\qq$;
\item[(ii)] if they are, constructs an invertible matrix $T \in M_n(\qq)$ such that the $k$ matrices $T^{-1}A_iT$ are all diagonal.
\end{itemize}
\end{proposition}
This result is not particularly surprising but we could not find it in the literature.
\begin{proof}
For $k=1$ this is quite standard. We can for instance compute the characteristic polynomial of $A_1$, compute its roots (which should all be rational) and attempt to construct a basis of eigenvectors by solving the corresponding linear systems.

For $k>1$, we can check that the matrices are simultaneously diagonalizable by checking that each matrix is diagonalizable, and that the $A_i$ pairwise commute (Theorem~1.3.21 in~\cite{horn13}).
In this case, in order to construct a transition matrix $T$ which diagonalizes the $A_i$, 
we will use Lemma~\ref{lincomb2} to reduce to the case $k=1$. 
Namely, we will construct a finite set $S$ of points $(\alpha_2,\ldots,\alpha_n) \in \qq^{n-1}$, and for each $\alpha \in S$
we will:
\begin{enumerate}
\item Diagonalize $A_{\alpha}=A_1+\alpha_2 A_2 + \cdots + \alpha_k A_k$ as: 
$A_{\alpha} = T_{\alpha} D_{\alpha} T_{\alpha}^{-1}$, where $D_{\alpha}$ is diagonal and $T_{\alpha}$ invertible.
\item Check whether $T_{\alpha}^{-1} A_i T_{\alpha}$ is diagonal for all $i=1,\ldots,k$.
\end{enumerate}
When the $A_i$ are simultaneously diagonalizable, Lemma~\ref{lincomb2}.(ii) guarantees that this test will will succeed for at least one $\alpha \in S$ if $S$ is not included in a certain union of $n(n-1)/2$ hyperplanes. In order to avoid these hyperplanes we can proceed as in Section~\ref{sec:deter} and pick any set $S$ of $1+n(n-1)(n-2)/2$ points on the moment curve as per Lemma~\ref{lem:moment}.
\end{proof}
An alternative to the above algorithm can possibly be extracted from the proof of Theorem~1.3.21 in~\cite{horn13}.}

Let $f(x_1,\ldots,x_n)$ be a polynomial that can be written as
\begin{equation} \label{eq:factor}
  f(x)=\lambda l_1(x)^{\alpha_1}\cdots l_n(x)^{\alpha_n}
\end{equation}
where $\lambda$ is a constant, the $l_i$ are linearly independent linear forms {  and the exponents $\alpha_i$ are all nonzero.}
A randomized algorithm which finds such a factorization
from black box access to $f$ was proposed
in~\cite[Section~4]{koiran2018orbits}.
 {  This algorithm appealed to randomization for the computation of the Lie algebra of $f$ and also, following~\cite{kayal2012affine}, for simultaneous diagonalization. In this paper we have given deterministic algorithms 
 for these two tasks, in Theorem~\ref{th:deterlie} and Proposition~\ref{prop:detersimdiag} respectively.
 This leads to the following polynomial time deterministic factorization algorithm, 
  which we call
  the {\em derandomized Lie-algebraic factorization algorithm}
  (or {\tt DerandLie} for short):
  \begin{enumerate}
\item Compute a basis $B_1,\ldots,B_k$ of the  Lie algebra of $f$.
\item Reject if $k \neq n-1$, i.e., if the Lie algebra is not of dimension $n-1$.
  
\item Check that the matrices $B_1,\ldots,B_{n-1}$ commute and are 
  all diagonalizable over $\qq$.
  If this is not the case, reject.
  Otherwise, declare the existence of a factorization
  $f(x)= \lambda l_1(x)^{\alpha_1} \cdots l_n(x)^{\alpha_n}$
  where the linear forms $l_i$ are linearly independent
  and $\alpha_i \geq 1$ ($\lambda$, the $l_i$ and $\alpha_i$ will be determined
  in the last 3 steps of the algorithm).

  \item Perform a simultaneous diagonalization of the $B_i$'s, i.e.,
  find an invertible matrix $A$ such that the $n-1$ matrices
  $AB_iA^{-1}$ are diagonal.

\item At the previous step we have found a matrix~$A$ such that $g(x)=f(A^{-1}x)$ has a Lie algebra $\mathfrak{g}_g$ which is an $(n-1)$-dimensional
  subspace of the space of diagonal matrices.
  Then we compute the orthogonal of $\mathfrak{g}_g$, i.e., 
  we find a vector $\alpha=(\alpha_1,\dots,\alpha_n)$ such $\mathfrak{g}_g$
  is the space of matrices 
  $\diag(d_1,\ldots,d_n)$ satisfying $\sum_{i=1}^n \alpha_i d_i = 0$.
  We normalize $\alpha$ so that $\sum_{i=1}^n \alpha_i = d$.

\item   We must have $g(x)=\lambda.m$ where $\lambda \in {\qq}^*$ 
  and $m$ is the monomial $x_1^{\alpha_1} \cdots x_n^{\alpha_n}$ (in particular, $\alpha$ must be a vector with integral entries).
  We therefore have $f(x)=\lambda.m(Ax)$ and we output this factorization.
\end{enumerate}

  \begin{theorem} \label{th:derandlie}
  Let $f(x_1,\ldots,x_n)$ be a polynomial that can be written as in~(\ref{eq:factor}) as a product of powers of linearly independent linear forms, where $\lambda$ and 
  the coefficients of the linear forms are in $\qq$. From a black box for $f$, 
  the above {\tt DerandLie} algorithm computes this factorization deterministically in polynomial time.
  \end{theorem}
 See \cite[Section~4]{koiran2018orbits} for a correctness proof. As mentioned above, in order to 
 obtain a deterministic algorithm we appeal to Theorem~\ref{th:deterlie} in Step~1 and to Proposition~\ref{prop:detersimdiag} 
 in Step~4. In order to find the scaling factor $\lambda$ at step~6, we evaluate $f$ at a point $x$ where $f(x) \neq 0$.
 From Lemma~\ref{lem:moment}, we can find such a point deterministically by trying at most $1+n(n-1)$ on the 
 moment curve since we need to avoid the $n$ hyperplanes $l_i(x)=0$.
  }
  Note that  {\tt DerandLie} may fail if $f$ does not factor as a product of linear forms since this is a prerequisite of Theorem~\ref{th:deterlie}.
  The fact that {\tt DerandLie} fails on some inputs may seem at first sight
  like a weakness of the algorithm, but this is in fact unavoidable for
  {\em any} deterministic polynomial-time black box algorithm
  (see~\cite[Section~1.5]{koiran2018orbits} for details). 

  Recall from Section~\ref{subsec:kayal} that Kayal's algorithm for equivalence to a sum of powers relies
  on factorization into products of linear forms.
  If this factorization is performed with the {\tt DerandLie} algorithm,
  we obtain a deterministic version of Kayal's algorithm.
  Let us call {\tt LieEquivalence} this deterministic 
   equivalence algorithm.
  As our final result, we observe that {\tt LieEquivalence}
  will work correctly on {\em all} inputs due
  to the presence of the verification step in Kayal's algorithm:
  \begin{theorem}
    Let $f \in \qq[X_1,\ldots,X_n]$ be a homogeneous polynomial of degree~$d$
    given verbosely as a sum of monomials. The {\tt LieEquivalence} algorithm
    determines whether $f$ is equivalent over $\qq$
    to $P_d$,
    the ``sum of $d$-th powers'' polynomial from~(\ref{eq:sumofpowers}).
    If this is the case, it outputs an invertible matrix $A$
    with rational entries such that $f(x)=P_d(Ax)$. Moreover, for any fixed
    $d$ the algorithm runs in polynomial time in the Turing machine model.
   \end{theorem}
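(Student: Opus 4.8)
The plan is to read {\tt LieEquivalence} as Kayal's three-step algorithm from Section~\ref{subsec:kayal}, specialized to $K=\mathbb{K}=\qq$, in which the factorization of the Hessian determinant $H_f$ at step~1 is performed by a call to {\tt DerandLie} on a black box for $H_f$. Such a black box is available in polynomial time: evaluating $H_f$ at a rational point amounts to writing down the $n\times n$ Hessian matrix of $f$ at that point (its second partial derivatives are read off directly from the verbose representation of $f$) and computing its determinant by fraction-free Gaussian elimination. I would then establish (a) that the algorithm is correct on every input and (b) that for fixed $d$ it runs in polynomial time in the Turing machine model.

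For the forward direction of correctness, assume $f$ is equivalent over $\qq$ to $P_d$. By Lemma~\ref{lem:hessian} we have $H_f=c\prod_{i=1}^{n} l_i^{\,d-2}$ with $c\neq 0$ and with $l_1,\dots,l_n$ linearly independent rational linear forms; this is exactly a polynomial of the shape that {\tt DerandLie} handles (a product of powers of linearly independent linear forms, all exponents equal to $d-2\geq 1$), and since $H_f$ splits into linear factors over $\qq$ the linear-algebraic steps of {\tt DerandLie}, in particular the simultaneous diagonalization, can be carried out over $\qq$ (by Proposition~\ref{diagmat} and the remark after Theorem~\ref{th:deterlie}). Hence {\tt DerandLie} returns the forms $l_i$, up to nonzero scalars and reordering, i.e.\ it produces the factorization $H_f=c\prod_{i=1}^{n} m_i^{\,d-2}$ required by step~1 (otherwise the algorithm would reject, which cannot happen here). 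At step~2 the linear system $f=\sum_i a_il_i^d$ has rational coefficients and a unique solution, since the powers $l_1^d,\dots,l_n^d$ of linearly independent linear forms are themselves linearly independent; moreover this solution has every $a_i\neq 0$, because $f$, being equivalent over $\qq$ to $P_d$, decomposes as a sum of $d$-th powers of rational linearly independent linear forms, and Lemma~\ref{lem:hessian} applied to that decomposition expresses $H_f$ as a nonzero constant times the product of the $(d-2)$-th powers of those forms, so by uniqueness of factorization in $\qq[x]$ they are nonzero rational scalar multiples of the $l_i$ up to reordering and each $a_i$ is the $d$-th power of the corresponding rational scalar. Thus step~3 succeeds, the algorithm accepts, and it outputs the invertible rational matrix $A$ whose rows are the coefficient vectors of the forms $\ell_i=\alpha_i l_i$ (where $\alpha_i\in\qq$ with $\alpha_i^d=a_i$), which satisfies $f(x)=P_d(Ax)$.

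Conversely, suppose {\tt LieEquivalence} accepts. Then step~1 succeeded, producing linear forms $m_1,\dots,m_n$ with $H_f=c\prod_i m_i^{\,d-2}\not\equiv 0$, hence $c\neq 0$; step~2 found the unique rational tuple $(a_i)$ with $f=\sum_i a_im_i^d$, and we have verified $a_i\neq 0$ for all $i$; and step~3 produced $\alpha_i\in\qq$ with $\alpha_i^d=a_i$. Applying Lemma~\ref{lem:hessian} to the representation $f=\sum_i a_im_i^d$ yields $H_f=c''\prod_i m_i^{\,d-2}$; comparing with step~1 and cancelling the nonzero polynomial $\prod_i m_i^{\,d-2}$ forces $c''=c\neq 0$, so the ``moreover'' clause of that lemma shows the $m_i$ to be linearly independent. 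Therefore the forms $\ell_i=\alpha_i m_i$ are linearly independent rational linear forms with $f=\sum_i\ell_i^d$, i.e.\ $f(x)=P_d(Ax)$ for the invertible rational matrix $A$ built from their coefficients, and $f$ is equivalent over $\qq$ to $P_d$. If instead {\tt DerandLie} fails to return a factorization of the required shape, or step~2 or step~3 fails, the algorithm rejects, and by the contrapositive of the preceding paragraph such a rejection is correct. This is precisely the sense in which the verification at steps~2 and~3 of Kayal's algorithm protects us against a possible misbehaviour of {\tt DerandLie} on inputs that do not factor as a product of $n$ linearly independent linear forms.

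Finally, for the running time with $d$ fixed: the verbose input has size polynomial in $n$; each black-box query to $H_f$ costs $\mathrm{poly}(n)$; {\tt DerandLie} is a deterministic algorithm whose cost is polynomial in $n$ and in $\deg H_f=n(d-2)$ — the hitting set of Theorem~\ref{th:deterlie} applied to $H_f$ has size $O\bigl((n(d-2))^{4}n^{2}\bigr)$, and the subsequent simultaneous diagonalization is performed deterministically following Theorem~1.3.21 of~\cite{horn13} — hence polynomial in $n$ for fixed $d$; step~2 solves a linear system with $n$ unknowns and $O(n^{d-1})$ equations over $\qq$; and the $d$-th-power tests of step~3 reduce to deciding whether an integer is a perfect $d$-th power, which is done by computing an integer $d$-th root by binary search and needs no integer factorization. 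Exact rational arithmetic together with fraction-free linear algebra keeps all intermediate bit-lengths polynomial, so the whole algorithm runs in polynomial time in the Turing machine model. The step I expect to be the main obstacle is the interface with {\tt DerandLie}: one must check that when $f$ really is equivalent to $P_d$ the determinant $H_f$ has exactly the form {\tt DerandLie} expects and that all of its internal linear-algebraic steps then take place over $\qq$ rather than over an extension, and, in the opposite direction, that a malformed output of {\tt DerandLie} can never lead to a wrong acceptance; both points are settled by Lemma~\ref{lem:hessian} together with the test $H_f\not\equiv 0$ at step~1. A secondary point is to confirm that the parameter $\deg H_f=n(d-2)$, which grows with $n$, still yields a bound polynomial in $n$ for {\tt DerandLie}; this holds because the relevant hitting-set sizes are polynomial in the degree.
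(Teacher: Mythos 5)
Your proof is correct and follows essentially the same route as the paper: running Kayal's three-step algorithm with $K=\mathbb{K}=\qq$, replacing the Hessian factorization at step~1 by a call to {\tt DerandLie}, observing that correctness of the accepting branch comes from the verification at steps~2 and~3, and noting that correctness of the rejecting branch comes from the fact that a rational equivalence to $P_d$ forces $H_f$ to split over $\qq$ as a product of linearly independent linear factors (Lemma~\ref{lem:hessian}), which is precisely the regime in which {\tt DerandLie} is guaranteed to succeed. Your writeup supplies a few details the paper leaves implicit, which is useful: the explicit black box for $H_f$ via evaluation of the determinant, a fuller comparison of the step~1 factorization with the one obtained from Lemma~\ref{lem:hessian} via unique factorization, and the precise hitting-set size for $\hset(n(d-2),n)$. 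One small imprecision: in the converse direction you write ``we have verified $a_i\neq 0$ for all $i$,'' but Kayal's algorithm as stated does not perform that check explicitly (and $0$ does have a rational $d$-th root). The conclusion is nevertheless correct, because if some $a_i=0$ then $f$ is a sum of fewer than $n$ powers, the Hessian matrix has rank $<n$, $H_f\equiv 0$, and step~1 would have rejected — this is what licenses the subsequent application of Lemma~\ref{lem:hessian}; it is a deduction, not a verification.
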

  \begin{proof}
    Since we are interested in equivalence over the field of rational numbers,
    we will run the 3-step algorithm from Section~\ref{subsec:kayal}
    with $K=\mathbb{K}=\qq$.
    First we establish the correctness of {\tt LieEquivalence}. If the
    algorithm accepts its input $f$, it explicitly finds at step 2 and step 3
    linearly independent linear forms $\ell_i \in \qq[x_1,\ldots,x_n]$
    such that $f=\sum_{i=1}^n \ell_i^d$. The algorithm's answer must therefore
    be correct in this case. Conversely, assume that such a decomposition
    exists. Then the Hessian determinant $H_f$ factors as
    $H_f = c \prod_{i=1}^n \ell_i^{d-2}$ where $c$ is a nonzero constant. 
    Since the $\ell_i$ are linearly independent, we are in the situation
    where $\tt DerandLie$ works correctly. Therefore, by Theorem~\ref{th:derandlie} we will  find the
    $\ell_i$ (or actually constant multiples $l_i$ of the $\ell_i$)
    at step 1 of the algorithm of Section~\ref{subsec:kayal}.
    Finally, the decomposition $f=\sum_{i=1}^n \ell_i^d$ is obtained at steps
    2 and 3.

    We now turn to the algorithm's complexity. Since $\tt DerandLie$ runs in
    polynomial time, the first step of the algorithm
    from Section~\ref{subsec:kayal} will also run in polynomial time.
    At step 2 we can afford to expand the powers $l_i^d$ as sums of monomials
    (this takes polynomial time for constant $d$), and then we find the constants $a_i$ by dense linear algebra. Finally, the extraction of $d$-th roots of rational numbers at step 3 also takes polynomial time.
   \end{proof}
{\small

{  \section*{Acknowledgements} We would like to thank Roger Horn for pointing out Problem~4.5.P4 in~\cite{horn13} and Peter Bürgisser for references to the tensor literature. The anonymous referee suggested several improvements in the presentation of the paper, and a significant simplification in the hitting set constructions of Section~\ref{sec:PIT}.}

\bibliographystyle{plain}

}

\end{document}